\title{The Sherali-Adams and Weisfeiler-Leman hierarchies in (Promise Valued) Constraint Satisfaction Problems\thanks{Extended abstracts of parts of this work appeared in \emph{Proceedings of the 46th International Symposium on Mathematical Foundations of Computer Science} (MFCS 2021) \cite{ButtiD21fractional} (Butti and Dalmau) and in \emph{Proceedings of the 28th International Conference on Principles and Practice of Constraint Programming} (CP 2022) \cite{BartoButti2022} (Barto and Butti). This work is also based on S. Butti's PhD thesis \cite{Butti2022phd}. 

Libor Barto was funded by the European Union (ERC, CoCoSym, 771005) and (ERC, POCOCOP, 101071674). Views and opinions expressed are however those of the authors only and do not necessarily reflect those of the European Union or the European Research Council Executive Agency. Neither the European Union nor the granting authority can be held responsible for them.
Silvia Butti was supported by UKRI EP/X024431/ and by a fellowship from ``la Caixa'' Foundation (ID 100010434). The fellowship code is LCF/BQ/DI18/11660056.
This project has received funding from the European Union’s Horizon 2020 research and innovation programme under the Marie Skłodowska-Curie grant agreement No. 713673.
Victor Dalmau was supported by the MICIN under grants PID2019-109137GB-C22 and PID2022-138506NB-C22, and the Maria de Maeztu program (CEX2021-001195-M).
For the purpose of Open Access, the authors have applied a CC BY public copyright licence to any Author Accepted Manuscript version arising from this submission. All data is provided in full in the results section of this paper.}}
\author{Libor Barto\\
Department of Algebra\\
Charles University, Czechia\\
\texttt{libor.barto@gmail.com}
\and
Silvia Butti\\
Department of Computer Science\\
University of Oxford, UK\\
\texttt{silvia.butti@cs.ox.ac.uk}
\and
V{\'{\i}}ctor Dalmau\\
Department of Information and Communication Technologies\\
Universitat Pompeu Fabra, Spain\\
\texttt{victor.dalmau@upf.edu}
}
\date{}
\definecolor{pigment}{rgb}{0.1, 0.1, 0.6}
\patchcmd{\BR@backref}{\newblock}{\newblock(}{}{}
\patchcmd{\BR@backref}{\par}{)\par}{}{}
\newcommand{\con}[1]{\mathcal{C}_{#1}}
\newcommand{\sig}{\sigma} 
\newcommand{\matleft}{P} 
\newcommand{\matright}{Q}
\DeclareMathAlphabet\mathbfcal{OMS}{cmsy}{b}{n}
\newcommand{\partp}{\mathbfcal{P}}
\newcommand{\partq}{\mathbfcal{Q}}
\newcommand{\ipartp}{\mathcal{P}}
\newcommand{\ipartq}{\mathcal{Q}}
\newcommand{\LAB}{\mathcal{L}}
\newcommand{\polytope}{\mathbb{P}}
\newcommand{\spo}{\mathrm{Spoiler}}
\newcommand{\dup}{\mathrm{Duplicator}}
\newcommand{\ba}{\textbf{a}}
\newcommand{\bb}{\textbf{b}}
\newcommand{\bc}{\textbf{c}}
\newcommand{\bd}{\textbf{d}}
\newcommand{\bi}{\textbf{i}}
\newcommand{\bp}{\textbf{p}}
\newcommand{\bu}{\textbf{u}}
\newcommand{\bv}{\textbf{v}}
\newcommand{\bx}{\textbf{x}}
\newcommand{\by}{\textbf{y}}
\newcommand{\bz}{\textbf{z}}
\newcommand{\bfa}{\mathbf{A}}
\newcommand{\bfb}{\mathbf{B}}
\newcommand{\bfc}{\mathbf{C}}
\newcommand{\bfd}{\mathbf{D}}
\newcommand{\bfx}{\mathbf{X}}
\newcommand{\bfy}{\mathbf{Y}}
\newcommand{\yone}{\mathbf{Y}_{1}}
\newcommand{\ytwo}{\mathbf{Y}_{2}}
\newcommand{\graphs}[3]{\begin{tikzpicture}
        \node[circle, draw=black, fill=black, inner sep=0pt, minimum size=#3cm] at (-3*#2,0*#2) (n1) {};
        \node[circle, draw=black, fill=white, inner sep=0pt, minimum size=#3cm] at (-1*#2,0*#2) (n2) {};
        \node[circle, draw=black, fill=black, inner sep=0pt, minimum size=#3cm] at (1*#2,0*#2) (n3) {};
        \node[circle, draw=black, fill=black, inner sep=0pt, minimum size=#3cm] at (-3*#2,2*#2) (n4) {};
        \node[circle, draw=black, fill=white, inner sep=0pt, minimum size=#3cm] at (-1*#2,2*#2) (n5) {};
        \node[circle, draw=black, fill=black, inner sep=0pt, minimum size=#3cm] at (1*#2,2*#2) (n6) {};
    \draw (n1)--(n2);
    \draw (n2)--(n3);
    \draw (n4)--(n5);
    \draw (n5)--(n6);
    \draw (n1)--(n4);
    \draw (n2)--(n5);
    \draw (n3)--(n6);
    \begin{scope}[xshift=#1cm] 
        \node[circle, draw=black, fill=black, inner sep=0pt, minimum size=#3cm] at (-2*#2,0*#2) (n1) {};
        \node[circle, draw=black, fill=white, inner sep=0pt, minimum size=#3cm] at (-0.5*#2,1*#2) (n2) {};
        \node[circle, draw=black, fill=black, inner sep=0pt, minimum size=#3cm] at (2.5*#2,0*#2) (n3) {};
        \node[circle, draw=black, fill=black, inner sep=0pt, minimum size=#3cm] at (-2*#2,2*#2) (n4) {};
        \node[circle, draw=black, fill=white, inner sep=0pt, minimum size=#3cm] at (1*#2,1*#2) (n5) {};
        \node[circle, draw=black, fill=black, inner sep=0pt, minimum size=#3cm] at (2.5*#2,2*#2) (n6) {};
    \draw (n1)--(n2);
    \draw (n1)--(n4);
    \draw (n5)--(n6);
    \draw (n2)--(n5);
    \draw (n3)--(n5);
    \draw (n3)--(n6);
    \draw (n4)--(n2);
    \end{scope}
\end{tikzpicture}}
\newcommand{\trees}[3]{\begin{tikzpicture}
        \node[circle, draw=gray, fill=gray, inner sep=0pt, minimum size=#3cm] at (0*#2,3.5*#2) (root) {};
        \node[circle, draw=gray, fill=gray, inner sep=0pt, minimum size=#3cm] at (-1.25*#2,2.25*#2) (l) {};
        \node[circle, draw=gray, fill=gray, inner sep=0pt, minimum size=#3cm] at (0*#2,2*#2) (c) {};
        \node[circle, draw=gray, fill=gray, inner sep=0pt, minimum size=#3cm] at (1.25*#2,2.25*#2) (r) {};
        \node[circle, draw=gray, fill=gray, inner sep=0pt, minimum size=#3cm] at (-2*#2,1*#2) (ll) {};
        \node[circle, draw=gray, fill=gray, inner sep=0pt, minimum size=#3cm] at (-1.25*#2,0.75*#2) (lr) {};
        \node[circle, draw=gray, fill=gray, inner sep=0pt, minimum size=#3cm] at (-0.5*#2,1*#2) (cl) {};
        \node[circle, draw=gray, fill=gray, inner sep=0pt, minimum size=#3cm] at (0*#2,0.75*#2) (cc) {};
        \node[circle, draw=gray, fill=gray, inner sep=0pt, minimum size=#3cm] at (0.5*#2,1*#2) (cr) {};
        \node[circle, draw=gray, fill=gray, inner sep=0pt, minimum size=#3cm] at (1.25*#2,0.75*#2) (rl) {};
        \node[circle, draw=gray, fill=gray, inner sep=0pt, minimum size=#3cm] at (2*#2,1*#2) (rr) {};

        \node[circle, draw=white, fill=white, inner sep=0pt, minimum size=#3cm] at (-2.25*#2,0*#2) (lll) {};
        \node[circle, draw=white, fill=white, inner sep=0pt, minimum size=#3cm] at (-2*#2,-0.1*#2) (llc) {};
        \node[circle, draw=white, fill=white, inner sep=0pt, minimum size=#3cm] at (-1.75*#2,0*#2) (llr) {};
        \node[circle, draw=white, fill=white, inner sep=0pt, minimum size=#3cm] at (-1.5*#2,0*#2) (lrl) {};
        \node[circle, draw=white, fill=white, inner sep=0pt, minimum size=#3cm] at (-1.25*#2,-0.1*#2) (lrr) {};
        \node[circle, draw=white, fill=white, inner sep=0pt, minimum size=#3cm] at (-0.25*#2,0*#2) (ccl) {};
        \node[circle, draw=white, fill=white, inner sep=0pt, minimum size=#3cm] at (0*#2,-0.1*#2) (ccc) {};
        \node[circle, draw=white, fill=white, inner sep=0pt, minimum size=#3cm] at (0.25*#2,0*#2) (ccr) {};
        \node[circle, draw=white, fill=white, inner sep=0pt, minimum size=#3cm] at (-0.75*#2,0*#2) (cll) {};
        \node[circle, draw=white, fill=white, inner sep=0pt, minimum size=#3cm] at (-0.5*#2,-0.1*#2) (clr) {};
        \node[circle, draw=white, fill=white, inner sep=0pt, minimum size=#3cm] at
        (0.5*#2,-0.1*#2) (crl) {};
        \node[circle, draw=white, fill=white, inner sep=0pt, minimum size=#3cm] at
        (0.75*#2,0*#2) (crr) {};
        \node[circle, draw=white, fill=white, inner sep=0pt, minimum size=#3cm] at
        (1.25*#2,-0.1*#2) (rll) {};
        \node[circle, draw=white, fill=white, inner sep=0pt, minimum size=#3cm] at
        (1.5*#2,0*#2) (rlr) {};
        \node[circle, draw=white, fill=white, inner sep=0pt, minimum size=#3cm] at
        (1.75*#2,0*#2) (rrl) {};
        \node[circle, draw=white, fill=white, inner sep=0pt, minimum size=#3cm] at
        (2*#2,-0.1*#2) (rrc) {};
        \node[circle, draw=white, fill=white, inner sep=0pt, minimum size=#3cm] at
        (2.25*#2,0*#2) (rrr) {};
    \draw (root)--(l);
    \draw (root)--(c);
    \draw (root)--(r);
    \draw (l)--(ll);
    \draw (l)--(lr);
    \draw (r)--(rl);
    \draw (r)--(rr);
    \draw (c)--(cr);
    \draw (c)--(cl);
    \draw (c)--(cc);

    \draw (ll)--(lll);
    \draw (ll)--(llc);
    \draw (ll)--(llr);
    \draw (lr)--(lrl);
    \draw (lr)--(lrr);
    \draw (cl)--(cll);
    \draw (cl)--(clr);
    \draw (cc)--(ccl);
    \draw (cc)--(ccc);
    \draw (cc)--(ccr);
    \draw (cr)--(crl);
    \draw (cr)--(crr);
    \draw (rl)--(rll);
    \draw (rl)--(rlr);
    \draw (rr)--(rrl);
    \draw (rr)--(rrc);
    \draw (rr)--(rrr);
    \begin{scope}[xshift=#1cm] 
      \node[circle, draw=gray, fill=gray, inner sep=0pt, minimum size=#3cm] at (0*#2,3.5*#2) (root) {};
        \node[circle, draw=gray, fill=gray, inner sep=0pt, minimum size=#3cm] at (-1*#2,2.25*#2) (l) {};
        \node[circle, draw=gray, fill=gray, inner sep=0pt, minimum size=#3cm] at (1*#2,2.25*#2) (r) {};
        \node[circle, draw=gray, fill=gray, inner sep=0pt, minimum size=#3cm] at (-1.75*#2,1*#2) (ll) {};
        \node[circle, draw=gray, fill=gray, inner sep=0pt, minimum size=#3cm] at (-1*#2,0.75*#2) (lr) {};
        \node[circle, draw=gray, fill=gray, inner sep=0pt, minimum size=#3cm] at (0.25*#2,1*#2) (rl) {};
        \node[circle, draw=gray, fill=gray, inner sep=0pt, minimum size=#3cm] at (1*#2,0.75*#2) (rc) {};
        \node[circle, draw=gray, fill=gray, inner sep=0pt, minimum size=#3cm] at (1.75*#2,1*#2) (rr) {};

        \node[circle, draw=white, fill=white, inner sep=0pt, minimum size=#3cm] at (-2*#2,0*#2) (lll) {};
        \node[circle, draw=white, fill=white, inner sep=0pt, minimum size=#3cm] at (-1.75*#2,-0.1*#2) (llr) {};
        \node[circle, draw=white, fill=white, inner sep=0pt, minimum size=#3cm] at (-1.25*#2,0*#2) (lrl) {};
        \node[circle, draw=white, fill=white, inner sep=0pt, minimum size=#3cm] at (-1*#2,-0.1*#2) (lrc) {};
        \node[circle, draw=white, fill=white, inner sep=0pt, minimum size=#3cm] at (-0.75*#2,0*#2) (lrr) {};

        \node[circle, draw=white, fill=white, inner sep=0pt, minimum size=#3cm] at (0*#2,-0.1*#2) (rll) {};
        \node[circle, draw=white, fill=white, inner sep=0pt, minimum size=#3cm] at (0.25*#2,0*#2) (rlr) {};
        \node[circle, draw=white, fill=white, inner sep=0pt, minimum size=#3cm] at (0.75*#2,0*#2) (rcl) {};
        \node[circle, draw=white, fill=white, inner sep=0pt, minimum size=#3cm] at (1*#2,-0.1*#2) (rcc) {};
        \node[circle, draw=white, fill=white, inner sep=0pt, minimum size=#3cm] at (1.25*#2,0*#2) (rcr) {};
        \node[circle, draw=white, fill=white, inner sep=0pt, minimum size=#3cm] at (1.75*#2,-0.1*#2) (rrl) {};
        \node[circle, draw=white, fill=white, inner sep=0pt, minimum size=#3cm] at (2*#2,0*#2) (rrr) {};

    \draw (root)--(l);
    \draw (root)--(r);
    \draw (l)--(ll);
    \draw (l)--(lr);
    \draw (r)--(rl);
    \draw (r)--(rc);
    \draw (r)--(rr);

    \draw (ll)--(lll);
    \draw (ll)--(llr);
    \draw (lr)--(lrl);
    \draw (lr)--(lrc);
    \draw (lr)--(lrr);
    \draw (rl)--(rll);
    \draw (rl)--(rlr);
    \draw (rc)--(rcl);
    \draw (rc)--(rcc);
    \draw (rc)--(rcr);
    \draw (rr)--(rrl);
    \draw (rr)--(rrr);

    \end{scope}
\end{tikzpicture}}
\newcommand{\pr}{\pi}
\newcommand{\na}{N_{\bfa}}
\newcommand{\nb}{N_{\bfb}}
\newcommand{\ngr}{N_{G}}
\newcommand{\nh}{N_{H}}
\newcommand{\eqwl}[1][1]{\equiv_{#1}}
\newcommand{\stark}{*_{k}}
\newcommand{\xstarkbf}{\mathbf{X}^\stark}
\newcommand{\astarkbf}{\mathbf{A}^\stark}
\newcommand{\bstarkbf}{\mathbf{B}^\stark}
\newcommand{\astark}{A^\stark}
\newcommand{\bstark}{B^\stark}
\newcommand{\sak}{\textnormal{SA}^k(\bfx,\bfa)}
\newcommand{\sastar}{\textnormal{SA}^1(\xstarkbf,\astarkbf)} 
\newcommand{\pii}{\pi_{\textbf{i}}}
\newcommand{\tjii}{T_{j,\textbf{i}}}
\newcommand{\csp}{\textnormal{CSP}}
\newcommand{\cspa}{\csp(\bfa)}
\newcommand{\pcsp}{\textnormal{PCSP}}
\newcommand{\pvcsp}{\textnormal{PVCSP}}
\newcommand{\sa}{\textnormal{SA}}
\newcommand{\blp}{\textnormal{BLP}}
\newcommand{\val}{\operatorname{Val}}
\newcommand{\opt}{\operatorname{Opt}}
\newcommand{\ar}{\operatorname{ar}}
\newcommand{\yes}{\mathrm{Yes}}
\newcommand{\no}{\mathrm{No}}
\newcommand{\ab}{(\bfa,\bfb)}
\newcommand{\xa}{(\bfx,\bfa)}
\newcommand{\xb}{(\bfx,\bfb)}
\newcommand{\saxa}{\sa^{1}\xa}
\newcommand{\qq}{\mathbb{Q}_{\geq0}}
\newcommand{\qinfty}{\mathbb{Q}_{\infty}}
\newcommand{\graph}[1]{\mathrm{G}_{#1}} 
\newcommand{\thresh}{\tau}
\newcommand{\RA}{R^\bfa}
\newcommand{\RB}{R^\bfb}
\newcommand{\eqone}{\equiv_{1}}
\newcommand{\ith}{^{\textnormal{th}}}
\newcommand{\mult}[1]{\{\!\{#1\}\!\}}
\def\multiset#1#2{\ensuremath{\left(\kern-.3em\left(\genfrac{}{}{0pt}{}{#1}{#2}\right)\kern-.3em\right)}}
\newcommand{\counting}{\mathscr{C}}
\newtheorem{theorem}{Theorem}
\newtheorem{corollary}[theorem]{Corollary}
\newtheorem{proposition}[theorem]{Proposition}
\newtheorem{lemma}[theorem]{Lemma}
\theoremstyle{definition}
\theoremstyle{remark}
\newtheorem{example}{Example}
\begin{document}

\maketitle

\begin{abstract}

In this paper we study the interactions between so-called fractional relaxations of the integer programs (IPs) which encode homomorphism and isomorphism of relational structures.
We give a combinatorial characterization
of a certain natural linear programming (LP) relaxation of homomorphism in terms of fractional isomorphism. As a result, we show that the families of constraint satisfaction problems (CSPs) that are solvable by such linear program are precisely those that are closed under an equivalence relation which we call \textit{Weisfeiler-Leman invariance}. We also generalize this result to the much broader framework of Promise Valued Constraint Satisfaction Problems, which brings together two well-studied extensions of the CSP framework.
Finally, we consider the hierarchies of increasingly tighter relaxations of the homomorphism and isomorphism IPs obtained by applying the Sherali-Adams and Weisfeiler-Leman methods respectively. We extend our combinatorial characterization
of the basic LP to higher levels of the Sherali-Adams hierarchy, and we generalize a well-known logical characterization of the Weisfeiler-Leman test from graphs to relational structures.

\end{abstract}

{
  \hypersetup{linkcolor=black}
}

\newpage
\section{Introduction}

\paragraph{Isomorphism, homomorphism, and fractional relaxations.}
The graph isomorphism and homomorphism problems, that is, the problems of deciding the existence of an isomorphism (respectively homomorphism) from a given input graph to another, have been the subject of extensive research. 
Despite an important research effort, it is still an open problem to determine whether the isomorphism problem for graphs can be solved in polynomial time. The recent quasipolynomial algorithm for the problem presented by Babai \cite{Babai2015} is widely regarded as a major breakthrough in theoretical computer science.

In turn, the complexity of the graph homomorphism problem has also been intensively studied in combinatorics (see \cite{HellNesetril:book}) as well as in a more general setting, known as the Constraint Satisfaction Problem (CSP), where the inputs are not required to be graphs but can be arbitrary relational structures. The CSP 
is general enough to encompass problems from areas as diverse as artificial intelligence, optimization, computer algebra, computational biology, computational linguistics, among many other. In contrast with the graph isomorphism problem it was quickly established that the homomorphism problem is NP-complete even for graphs, as it can encode computationally hard problems such as graph coloring.
Consequently, an important research effort has been put into identifying tractable fragments of the problem, in particular by fixing the target structure. This culminated in the recent major result of Bulatov \cite{Bulatov2017} and Zhuk \cite{Zhuk2020}, which confirmed a conjecture by Feder and Vardi \cite{feder1998computational} that, for a fixed target structure, the homomorphism problem is either solvable in polynomial time or it is NP-complete -- in particular, there are no NP-intermediate CSPs.

Linear programming relaxations, among other relaxations such as SDP-based, have been largely used in the study of both the isomorphism and the homomorphism problem. In fact, the isomorphism problem for graphs $G$, $H$ can be reformulated as an integer program which asks whether there exists a permutation matrix $\matleft$ such that $\matleft N_G=N_H \matleft$, where $N_G$ and $N_H$ are the adjacency matrices of $G$, $H$ respectively. If we relax this condition to only require that $\matleft$ is doubly stochastic, we obtain what is known as fractional isomorphism.

 Now, this linear program can be equivalently expressed as the existence of a pair of doubly stochastic matrices $\matleft$ and $\matright$ such that $\matleft M_G=M_H \matright$ and $M_H \matright^T=\matleft^T M_G$, where $M_G$ now denotes the \textit{incidence} matrix of $G$. If we relax this condition further to only require that $\matleft$ and $\matright$ are \textit{left} stochastic and, additionally, we drop the second equation, then we obtain a relaxation of graph homomorphism known as the \textit{Basic Linear Programming} relaxation.\footnote{We remark that in \cite{ButtiD21fractional}, this algebraic condition - and its equivalent characterizations - was alternatively phrased as the existence of a fractional homomorphism, to stress that it is a fractional relaxation of homomorphism in the same way that fractional isomorphism is a relaxation of isomorphism. Nonetheless, in this paper we avoid this terminology as it clashes with the notion of fractional homomorphism defined as a unary fractional polymorphism, which is standard terminology in the context of Valued CSPs (see Section \ref{sec:fr-ops}).} With some minor variations depending on whether the objective function is present or not and how repeated elements in a tuple are treated, this LP formulation has been extensively used \cite{KumarMTV11,Kun2012,DalmauK13,DalmauKM18,GhoshT18,BrakensiekGWZ20}.

 The goal of this paper is to study the interaction between the fractional relaxations of homomorphism and isomorphism introduced above, as well as their higher-dimensional enhancements, in the context of the Constraint Satisfaction Problem and its extension to the Promise Valued setting.

\paragraph{The WL and SA hierarchies.}
Fractional isomorphism has a combinatorial counterpart in the 1-dimensional Weisfeiler-Leman (1-WL) algorithm \cite{leman1968reduction}, also known as colour refinement. In particular, it was shown in \cite{Tinhofer1986, Tinhofer1991} and \cite{Ramana1994} that two graphs $G$ and $H$ are fractionally isomorphic if and only if 1-WL does not distinguish between them. 1-WL produces a sequence of colourings $c_0,c_1,\ldots$ of the nodes of a graph by means of an iterative refinement procedure which assigns a pair of nodes to the same colour class of $c_i$ if they belong to the same class of $c_{i-1}$, and additionally they have the same number of neighbours of each colour in $c_{i-1}$. The algorithm keeps iterating until a fixed point is reached. The Weisfeiler-Leman algorithm is a very powerful heuristic to test for graph isomorphism: if two graphs are distinguished by 1-WL (that is, they give rise to distinct fixed-point colourings up to renaming of the vertices), then this is a witness that the graphs are not isomorphic. In fact, it was shown that 1-WL decides the isomorphism problem on almost all graphs \cite{BabaiKucera79,Babai1980}. 

However, it is also easy to see that 1-WL fails on some very simple instances, such as regular graphs. 
To address these limitations, the original Weisfeiler-Leman algorithm has been extended so that at every iteration on a graph $G$ it produces a colouring of the set of $k$-tuples (for $k > 1$) of nodes of $G$. This higher-dimensional version of the Weisfeiler-Leman algorithm ($k$-WL) turned out to be very robust: it has very neat equivalent formulations in logical and combinatorial terms \cite{Cai92,Dvorak2010,Dell2018}, and has found important applications not just in the context of the graph isomorphism problem \cite{Kiefer2020wlpower,GroheSchweitzer2020survey,Grohe2021isomorphism} but also within the recently developed theory of Graph Neural Networks \cite{XuHowPowerful,morris2019weisfeiler,Morris2021WLML}.

Similarly to the Weisfeiler-Leman hierarchy for fractional isomorphism, any LP relaxation of an integer $0/1$-program 
can be strengthened by sequentially applying so-called lift-and-project methods from mathematical programming in order to obtain a hierarchy of increasingly tighter relaxations which can still be solved efficiently. The main idea behind these is to add auxiliary variables and valid inequalities to an initial relaxation of a $0/1$ integer program.
These methods, which include Lov\'asz-Schrijver \cite{Lovasz1991} and Sherali-Adams \cite{Sherali1990}, have been used to study classical problems in combinatorial optimization such as Max-Cut, Vertex Cover, and Maximal Matching among many others.

Let us now turn our attention to the homomorphism problem. Here, LP relaxations have been intensively used in the more general setting of the Constraint Satisfaction Problem and, most usually, in the approximation of its optimization versions such as MaxCSP, consisting of finding a map from the input structure to the target structure that maximizes the number of constraints satisfied. For instance, a simple linear programming relaxation yields a $2$-approximation algorithm for the Vertex Cover problem, and no better polynomial time approximation algorithm is known.

We consider relaxations arising from the application of the Sherali-Adams (SA) method, which coincides with the hierarchy commonly used for MaxCSP \cite{ChanLRS13,GeorgiouMT09,ThapperZ17,YoshidaZ14} where the objective function has been turned into a constraint. 

The power of the SA hierarchy is well-studied in the context of the CSP and its approximation (Promise CSP) \cite{Barto2021algebraicPCSP,BrakensiekGWZ20,BrakensiekG2020symmetric,atserias2022,Ciardo2022Tensors,DalmauOprsal2023} and optimisation (Valued CSP) \cite{ThapperZ12,KolmogorovTZ15,ThapperZ16,ThapperZ17} variants. Moreover, a surprising connection between the Sherali-Adams and Weisfeiler-Leman hierarchies has been found in the context of the graph isomorphism problem \cite{Atserias2013,Malkin2014}, showing that the two hierarchies interleave in power. We study this connection in the context of the CSP and its joint extension to the promise and valued setting, which we introduce next.

\paragraph{Promise Valued CSPs.}

The recently introduced framework of the Promise Constraint Satisfaction Problem (PCSP) was originally motivated by open questions in the theory of approximation \cite{AGH17,BrakensiekG21booleanDichotomy}. Intuitively, in this setting each
constraint comes in two forms, one strong and one weak, and the task is to distinguish whether the input is satisfiable in the strong sense, or it is not satisfiable even in the weak sense. Here, the “promise” is that the input will fall precisely into one of these cases. More formally, a PCSP template
is a pair of structures $\ab$ of the same signature, and the PCSP over $\ab$ is the problem of distinguishing structures
homomorphic to $\bfa$ from those that are not homomorphic to $\bfb$. Note that when $\bfa$ and $\bfb$ coincide, the PCSP reduces to the CSP.

A well-known family of PCSP examples is the problem of distinguishing $k$-colorable graphs from those that are not even $l$-colorable for some fixed $l \geq k$ \cite{Garey1976coloring}. While a complete complexity classification for PCSPs seems currently
far away, in recent years there have been substantial developments in the understanding of the power of specific hierarchies of algorithms for PCSPs, and particularly for the above-mentioned approximate graph coloring problem  \cite{atserias2022,Ciardo2022Tensors,CiardoZ23crystal,CiardoZ23Hollow}.

The framework of Valued Constraint Satisfaction Problems (VCSP) is a
generalization of the CSP that has an optimisation flavour \cite{KZ_vcsps}.
In the VCSP, instead of relations we consider valued relations, which can be seen as cost functions -- i.e., mappings that assign to tuples rational or positive infinite costs.
An instance is specified by a set of variables and a weighted sum of valued constraints over these variables, known as the objective
function, together with a rational threshold $\thresh$. The task is to decide whether the minimum value of the objective function is at most $\thresh$.

Notice that the VCSP indeed generalizes the CSP since relations can be modelled by $\{0,\infty\}$-valued relations.\footnote{The term \textit{crisp} \cite{Cohen06algvalued} is sometimes used in contrast to \textit{valued} to refer to $\{0,\infty\}$-valued or classical relations.} On the other hand, MaxCSP is exactly the VCSP over $\{0,1\}$-valued relational structures.
The VCSP area is also well developed; in particular, the power of LP-based algorithms is well-understood \cite{ThapperZ12,KolmogorovTZ15,ThapperZ16,ThapperZ17}, and a dichotomy theorem is available~\cite{Kozik2015,Kolmogorov2017}.

Finally, the Promise Valued CSP (PVCSP) combines both generalizations together. A template is a pair of valued structures of the same signature and the problem is, given an objective function and a rational threshold $\thresh$, to distinguish sums whose minimum computed in $\bfa$ is at most $\thresh$ from those whose minimum in $\bfb$ is strictly greater than $\thresh$.

We believe that the PVCSP is an extremely promising research direction for two reasons. First, it is very broad: it includes, for example, all constant-factor approximation problems for MaxCSP (both the version where the aim is to approximately maximize the number of satisfied constraints and the version where the aim is to approximately minimize the number of unsatisfied constraints). Second, the algebraic approach that was successfully developed to study the complexity of both the promise and the valued CSP can be generalized to the PVCSP as well \cite{Kazda21,AAA24}. 
The only published work on PVCSP that we are aware of (other than \cite{BartoButti2022}) is~\cite{ViolaZ21}, which even considers the more general infinite-domain case.

\subsection{Contributions}

Our contributions can be broadly divided into three parts.
\paragraph{1. A combinatorial decomposition of $\sa^1$.} We start by considering the first level of the Sherali-Adams hierarchy, denoted $\sa^1$, in Section \ref{chap:sawl}. The main result of this section, Theorem \ref{thm:SA1}, is a two-fold characterization of the feasibility of $\sa^1$: algebraic, in terms of the linear program described in the introduction, and combinatorial, relating $\sa^1$ to fractional isomorphism. In particular, we show that for any two similar structures $\bfx$, $\bfa$ the linear program $\saxa$ is feasible if and only if there exists a sequence of structures $\bfy_0,\yone,\dots,\bfy_n$ with $\bfx=\bfy_0$, $\bfa=\bfy_n$, and for every $i<n$, $\bfy_i$ is either homomorphic or fractionally isomorphic to $\bfy_{i+1}$. Moreover, if such a chain of structures exists then there exists a chain of length 3.

Each of these morphisms - homomorphism and fractional isomorphism - from $\bfy_i$ to $\bfy_{i+1}$ can be naturally associated with a rational matrix of dimensions $|Y_i| \times |Y_{i+1}|$. The product of these matrices is in turn a matrix associated to a solution to $\saxa$. This is why we regard this result as a decomposition theorem.

The crux of Theorem \ref{thm:SA1} is the fact that $\sa^1$ is able to certify that $\bfx$ is not homomorphic to $\bfa$ unless $\bfx$ belongs to the backwards closure of $\bfa$ under homomorphism and fractional isomorphism. 
We apply this in Theorem \ref{th:main-csp}, which extends the characterization of CSPs solvable by linear programming with the additional condition of closure under fractional isomorphism, or \textit{Weisfeiler-Leman invariance} as we refer to it in the rest of the paper. This result, which is a direct consequence of the decomposition theorem, answers a question of \cite{Butti2022distr-TOCS} which asks which CSPs can be solved efficiently by a distributed network of anonymous agents which communicate with each other by sending messages through fixed communication channels. 

\paragraph{2. The PVCSP and a decomposition of $\sa^1$ for valued structures.}

In Section \ref{chap:pvcsp} we introduce the PVCSP framework, and in particular we relate PVCSP templates to the notion of fractional homomorphism (Proposition~\ref{prop:FH}), extending the corresponding characterization of PCSP templates. We then lift the decomposition theorem of the previous section from crisp to valued structures (Theorem \ref{thm:Decomp}). One component of this decomposition is a kind of morphism, called here a dual fractional homomorphism, which has appeared previously in the context of VCSPs with left-hand side (i.e., structural) restrictions~\cite{Carbonnel2022otherside}.\footnote{\cite{Carbonnel2022otherside} uses the terminology ``inverse fractional homomorphism'', however we feel that ``dual'' better fits the meaning of this concept.}

Finally, we apply the decomposition theorem for valued structures to obtain a characterization of PVCSPs (and hence of PCSPs and VCSPs as well) solvable by linear programming in terms of Weisfeiler-Leman invariance (Theorem \ref{th:main-pvcsp}).

Surprisingly, two commonly used formulations of the basic linear programming relaxation that are equivalent for CSPs turn out to be substantially different in the broader PVCSP framework. In Example \ref{ex:SAnotBLP} we present a PVCSP template that witnesses this simple but interesting fact.

\paragraph{3. Connecting $\sa^k$ and $k$\textnormal{-WL}.}
In Section \ref{chap:higher}, we consider higher levels of the Sherali-Adams and Weisfeiler-Leman hierarchies.

As for Sherali-Adams, we extend the decomposition theorem from $\sa^1$ to higher levels of the hierarchy (Theorem \ref{thm:SAk}). The correspondence for any higher level $k$ is obtained by replacing 
$\bfx$ and $\bfa$ by suitably enhanced structures $\xstarkbf$, $\astarkbf$ that allow us to reduce the $k^{th}$ level of the hierarchy to the first level (Lemma \ref{le:starkSherali}).

Additionally, replacing two structures $\bfa$ and $\bfb$ in the definition of fractional isomorphism with the corresponding enhanced structures $\astarkbf$, $\bstarkbf$ allows us to define a hierarchy of relaxations of (relational structure) isomorphism, which we denote using $\eqwl[k]$. Our results show that, for every $k\geq 1$, $\sa^k$ is tightly related with the corresponding 
$k\ith$ level relaxation of isomorphism. In particular, feasibility of $\sa^k$ is closed under $\eqwl[k]$-equivalence (see Corollary \ref{cor:sa-k-eqwl-k}).

Finally, a famous result on the Weisfeiler-Leman algorithm is that for the task of distinguishing non-isomorphic graphs, $k$-WL is as powerful as the $(k+1)$-variable fragment of first-order logic with counting quantifier, which in turn is as powerful as counting homomorphisms from graphs of treewidth at most $k$. We extend this result from graph isomorphism to general isomorphism between relational structures (Theorem \ref{th:homTreewidth}). In particular, this shows that $\eqwl[k]$-equivalence correctly extends the notion of higher-dimensional Weisfeiler-Leman indistinguishability from graphs to relational structures.

\paragraph{Paper organisation.}
This paper is organised as follows. In Section  \ref{sec:prelims} we give all basic definitions and set some notation. In Sections \ref{sec:1WL} and \ref{sec:LP} we discuss fractional isomorphism and LP-based relaxations of homomorphism respectively. In Section \ref{chap:sawl} we prove the decomposition theorem for the first level of the Sherali-Adams hierarchy, which contains the core technical construction of this paper, and we apply this to obtain a characterization of Weisfeiler-Leman invariant CSPs. In Section \ref{chap:pvcsp} we extend the decomposition theorem to valued structures and apply this to the broader framework of Promise Valued CSPs. Finally, in Section \ref{chap:higher} we extend the decomposition theorem to higher levels of the Sherali-Adams hierarchy, and we show that the characterization of fractional isomorphism and its higher-dimensional counterpart extends naturally from graphs to relational structures. 
\section{Preliminaries} \label{sec:prelims}

\paragraph{Notation.}
For a natural number $k$, we set $[k] = \{1, \ldots, k\}$. We will denote tuples in boldface. For a set $A$ and a
tuple $\ba \in A^k$, we use either $\ba[i]$ or, where there is no ambiguity, $a_i$ to refer to the $i\ith$ entry of a. We say that $\ba$ has a \textit{repetition} if there exist $i \neq j \in [k]$ such that $a_i = a_j$.

We use double curly brackets $\mult{\dots}$ to denote multisets. For a non-negative integer $k$, $k \cdot \mult{\dots}$ stands for the multiset obtained by multiplying the multiplicity of each element in the original multiset by $k$. Slightly abusing the notation, the set and the multiset of entries of a tuple $\ba$ are denoted by $\{\ba\}$ and $\mult{\ba}$, respectively.

A matrix with non-negative real entries is said to be \textit{left} (resp. \textit{right}) \textit{stochastic} if all its columns (resp. rows) sum to $1$. A \textit{doubly stochastic matrix} is a square matrix that is both left and right stochastic.

\paragraph*{Relational structures.}

A \textit{signature} $\sig$ is a finite collection of relation symbols, each with an associated arity. We shall use $\ar(R)$ to denote the arity of a relation symbol $R$. Given a set $A$ and a positive integer $r$, an $r$-ary \textit{relation} over $A$ is a subset of $A^r$. A \textit{finite relational structure} $\bfa$ over $\sig$, or simply a $\sig$-structure, consists of a finite set $A$ called the \textit{universe} of $\bfa$, and a non-empty relation $\RA$ of arity $\ar(R)$ over $A$ for each $R \in \sig$. We shall use the same boldface and standard capital letter to refer to a structure and its universe, respectively. 
We denote by $\mathcal{C}_\bfa$ the set of formal expressions $\{R(\ba) \mid \ba \in \RA, R \in \sig\}$. Elements of $\mathcal{C}_\bfa$ will be referred to as \textit{constraints}.
Two structures are said to be \emph{similar} if they have the same signature.

 Given two similar structured $\bfa$ and $\bfb$, an \textit{isomorphism} from $\bfa$ to $\bfb$ is a bijective map $f: A \to B$ such that for every $R \in \sig$  and every $\textbf{a} \in A^{\ar(R)}$  it holds that  $\ba \in \RA$ if and only if $f(\textbf{a}) \in \RB$.

The {\em union} $\bfa\cup\bfb$ of two $\sig$-structures $\bfa$ and $\bfb$ is the structure $\bfc$
with $C=A\cup B$ and $R^\bfc=\RA\cup \RB$ for every $R\in\sig$. The {\em disjoint union} of two structures
$\bfa$ and $\bfb$ is the structure $\bfa\cup \bfb'$ where $\bfb'$ is obtained from $\bfb$ by renaming variables wherever necessary so that $A\cap B'=\emptyset$. We say that a structure is {\em connected} if it cannot be expressed as the disjoint union of two structures. We say that $\bfa$ is a {\em substructure} of $\bfb$ if $\bfa\cup\bfb=\bfb$. If, in addition,
$\RA=\RB\cap A^{\ar(R)}$ for every $R\in\sig$ then $\bfa$ is the substructure
of $\bfb$ {\em induced} by $A$.

A \textit{digraph} is a relational structure whose signature consists of a single binary \textit{edge} relation. A \textit{graph} is a digraph where the edge relation is symmetric and non-reflexive. We will often represent graph edges as sets rather than ordered pairs to stress that the edge relation is symmetric. A \textit{cycle} in a graph with edge set $E$ is a sequence of vertices $v_0,v_1,\ldots,v_k$ with $k \geq 3$ such that $\{v_{i-1},v_i\} \in E$ for all $i \in [k]$, $v_0=v_k$, and $\{v_1,\ldots,v_{k}\}$ are distinct. 
A \textit{tree} is a graph that contains no cycles.

The \textit{Factor Graph}\footnote{We note that the notion of factor graph, although similar, differs in several ways from the incidence multigraph (see \cite{LaroseLT07}) as the latter allows for parallel edges.} \cite{Fioretto2018} $\graph{\bfa}$ of a structure $\bfa$ is the undirected labelled bipartite graph with vertex set $A \cup \con\bfa$ and edge set $\{\{a,R(\ba)\} \mid a \in \{\ba\}\}$. Each edge in $\graph{\bfa}$ that is incident to a variable $a$ and a constraint $R(\ba)$ has a label $\ell_{\{a,R(\ba)\}}=(S,R)$ for $S=\{i \in [\ar(R)] \mid a_i=a\}$. We denote the set of labels $\{(S,R) \mid S \subseteq [\ar(R)], R \in \sig\}$ by $\LAB_\sig$.

\paragraph{Constraint Satisfaction Problems.}
Let $\bfx$, $\bfa$ be $\sig$-structures. A \textit{homomorphism} from $\bfx$ to $\bfa$ is a map $h: X \to A$ such that for every $R \in \sig$ and every $\bx \in R^\bfx$ it holds that $h(\bx) \in \RA$, where $h$ is applied to $\bx$ component-wise.
If there exists a homomorphism from $\bfx$ to $\bfa$ we say that $\bfx$ is homomorphic to $\bfa$ and we write $\bfx \to \bfa$. 
We shall use $\hom(\bfx;\bfa)$ to denote the number of homomorphisms from $\bfx$ to $\bfa$.

For a finite relational structure $\bfa$, the (fixed-template, finite-domain) \emph{CSP over $\bfa$}, denoted $\csp(\bfa)$, is the problem of deciding whether a given input structure $\bfx$ similar to $\bfa$ is homomorphic to $\bfa$. 
$\bfx$ is also referred to as the \textit{instance} and $\bfa$ as the \emph{template} in this context.

Let $R \subseteq A^r$ be an $r$-ary relation over $A$. A $k$-ary \textit{polymorphism} of $R$ is an operation $f: A^{k} \to A$ such that the coordinate-wise application of $f$ to any list of $k$ tuples from $R$ gives a tuple in $R$. We say that a function $f$ is a polymorphism of a $\sig$-structure $\bfa$ if $f$ is a polymorphism of $\RA$ for all $R \in \sig$.
A $k$-ary operation $f:A^k \to A$ is said to be \textit{symmetric} if for all $a_1,\ldots,a_k \in A$ and all permutations $\rho$ on $[k]$ we have that $f(a_1,\ldots,a_k) = f(a_{\rho(1)},\ldots,a_{\rho(k)})$.

\section{Combinatorial relaxations of isomorphism} \label{sec:1WL}

We start by describing a number of relaxations for the graph isomorphism problem in Section \ref{sec:friso-graphs}, and in Section \ref{sec:friso-relstr} we will see how to extend these methods to arbitrary relational structures.

\subsection{Fractional isomorphism of graphs} \label{sec:friso-graphs}

Recall that two graphs are isomorphic if there exists an edge-preserving bijection between their vertex sets. As anticipated in the introduction, we can define a weaker notion of similarity between graphs which is known as a \textit{fractional isomorphism}. Two graphs $G$ and $H$ are fractionally isomorphic if there exists a doubly stochastic matrix $\matleft$ such that $\matleft \ngr=\nh \matleft$, where $\ngr$ and $\nh$ are the adjacency matrices of $G$ and $H$ respectively. Fractional isomorphism has a number of equivalent characterizations, which we introduce next.

In a graph $G$, the degree of a vertex $v$ is the number of edges incident to $v$. The zeroth iterated degree of $v$ is equal to its degree. For $j \geq 1$, the $j\ith$ iterated degree of $v$ is the multiset of $(j-1)\ith$ degrees of $v$'s neighbours in $G$. Then, the iterated degree\footnote{Sometimes referred to as the \textit{ultimate} iterated degree \cite{Scheinerman2011fractional}.} of a vertex $v$ is given by the list of its $j\ith$ iterated degrees for all $j \geq 0$, and the iterated degree sequence of a graph $G$ is the multiset of iterated degrees of the vertices of $G$.\footnote{The degree sequence is often defined to be a list. However, when looking at iterated degree it is common~\cite{Ramana1994,Scheinerman2011fractional} and more practical to use multisets instead of lists, while maintaining the terminology \textit{sequence} to highlight that we are dealing with a generalisation of the classical concept of degree sequence.}

One can think of the $j\ith$ iterated degree of a vertex as the ``local view'' of $G$ from $v$. That is, the $j\ith$ iterated degree of $v$ can be seen as the rooted tree $T_v$ of depth $j$ constructed inductively in such a way that there is a map $\gamma$ from the vertices of $T_v$ to the vertices of $G$ which maintains the following properties: the root is mapped by $\gamma$ to $v$, and for every $t$ in $T_v$, the set of children of $t$ is mapped bijectively to the neighbours of $\gamma(t)$. Then, two vertices have the same iterated degree if the corresponding infinite trees constructed via this procedure are isomorphic (see Figure \ref{fig:trees}).

The \textit{colour refinement algorithm}, also known as \textit{1-dimensional Weis\-fei\-ler-Le\-man algorithm} (1-WL)~\cite{leman1968reduction}\footnote{To be precise, we point out that the algorithm proposed in the original paper of Weisfeiler and Leman \cite{leman1968reduction} is equivalent to what is known as the 2-dimensional Weisfeiler-Leman algorithm. This will be defined in its general ($k$-dimensional) version in Section \ref{chap:higher}.}, is the procedure that calculates the iterated degree sequence of a graph (see for instance \cite{grohe2017color}). Since only a linear number of iterations is needed to reach a stable point, 1-WL is often used as a simple heuristic for the graph isomorphism problem: if two graphs are isomorphic then they must have the same iterated degree sequence, (the opposite, however, is not true). We say that 1-WL distinguishes two graphs $G$ and $H$ if their iterated degree sequences differ. Famously, 1-WL distinguishes almost all non-isomorphic graphs \cite{BabaiKucera79, Babai1980} and all trees \cite{Immerman1990}. However, it also fails on some very simple instances 
(see for example Figure \ref{fig:graphs}). To address these limitations, a hierarchy of higher-dimensional versions of this algorithm ($k$-WL for $k > 1$) was proposed, see Section \ref{chap:higher} for details.

An alternative characterization of fractional isomorphism is in terms of the notion of equitable partition. Let $G=(V;E)$ be a graph and $\partp=\{\ipartp_i \mid i\in I\}$ be a partition of $V$. $\partp$ is said to be \textit{equitable} if for every $i,j\in I$ there exists an integer $c_{i,j}$ such that for every $v\in \ipartp_i$ we have \begin{displaymath}
|N(v) \cap \ipartp_j|=c_{i,j}\end{displaymath}
where $N(v)$ denotes the neighbourhood of $v$ in $G$. The integers $c_{i,j}$ are called collectively the {\em parameters} of the partition.
Two graphs $G$, $H$ are said to have a {\em common equitable partition} if there exist equitable partitions 
$\{\ipartp_i^G \mid i\in I\}$ of $G$ and $\{\ipartp_i^H \mid i\in I\}$ and $H$ with the same parameters satisfying $|\ipartp^G_i|=|\ipartp^H_i|$ for all $i\in I$.

\begin{figure}[t] 
\begin{subfigure}{0.48\textwidth}
    \centering
    \graphs{3}{0.6}{0.2}
    \caption{It is easy to see that the graphs depicted above have the same iterated degree sequence. In particular, the equivalence classes are given by the distinct colours, white and black. However, the graphs are clearly not isomorphic.}  \label{fig:graphs}
  \end{subfigure}%
  \hspace*{\fill}   
  \begin{subfigure}{0.48\textwidth}
    \centering
    \trees{3.5}{0.6}{0.2}
    \caption[Iterated degree as infinite tree.]{The first three levels of the ``local view'' of the white nodes (left) and black nodes (right) of the graphs in Figure \ref{fig:graphs}.}
    \label{fig:trees}
  \end{subfigure}%
\caption{A pair of non-isomorphic graphs that are not distinguished by 1-WL.}
\end{figure}

It turns out that all of the concepts introduced above are equivalent for the purpose of distinguishing non-isomorphic graphs. In fact, much more is true.
Let $\counting$ denote the first-order logic enriched with counting quantifiers - that is, objects of the form $\exists^n x$ for each positive integer $n$, where the meaning of a sentence  $\exists^n x \varphi(x)$ is ``there exist at least n objects $x$ that satisfy $\varphi$''; and let $\counting^k$ denote the $k$-variable fragment of $\counting$. Then, we have the following characterization of fractional isomorphism:

\begin{theorem}
\label{th:fractional-graph}
Let $G$, $H$ be graphs. The following are equivalent:
\begin{enumerate} 
\item There exists a doubly stochastic matrix $\matleft$ such that $\matleft N_{G}=N_{H}\matleft$; \label{item:1fr}
\item $G$ and $H$ have the same iterated degree sequence; \label{item:2fr}
\item $G$ and $H$ have a common equitable partition; \label{item:3fr}
\item $G$ and $H$ satisfy the same formulae in the logic $\counting^2$; \label{item:4fr}
\item $\hom(T;G)=\hom(T;H)$ for all trees $T$. \label{item:5fr}
\end{enumerate}
\end{theorem}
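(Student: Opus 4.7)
The plan is to take the iterated degree sequence, condition (\ref{item:2fr}), as the central hub and establish $(\ref{item:2fr}) \Leftrightarrow (\ref{item:3fr}) \Leftrightarrow (\ref{item:1fr})$ on one side and $(\ref{item:2fr}) \Leftrightarrow (\ref{item:5fr})$ and $(\ref{item:2fr}) \Leftrightarrow (\ref{item:4fr})$ on the other. The guiding observation is that the coarsest equitable partition of a graph coincides with the partition of its vertices by iterated degree, so all other characterisations can be routed through this canonical object.

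First I would prove $(\ref{item:2fr}) \Leftrightarrow (\ref{item:3fr})$. For the forward direction, partition the vertices of $G$ and $H$ by their iterated degree into classes $\ipartp_i^G, \ipartp_i^H$; since equal iterated degree forces equal multisets of neighbours' iterated degrees, this partition is equitable, with parameters $c_{i,j}$ read off from those multiplicities, and the block sizes satisfy $|\ipartp_i^G|=|\ipartp_i^H|$ by assumption. Conversely, given a common equitable partition with parameters $c_{i,j}$, a straightforward induction on $j$ shows that the $j$th iterated degree of a vertex depends only on its block and on these parameters; hence the iterated degree is constant on each block, and matching block sizes force the iterated degree sequences to agree as multisets.

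Next, for $(\ref{item:3fr}) \Leftrightarrow (\ref{item:1fr})$ I would follow Tinhofer and Ramana. Given a common equitable partition, define $\matleft$ block-wise by placing the constant entry $1/|\ipartp_i^G|$ on the block indexed by $\ipartp_i^H \times \ipartp_i^G$ and zero elsewhere. Double stochasticity is immediate from $|\ipartp_i^G|=|\ipartp_i^H|$, and the identity $|N(v) \cap \ipartp_j|=c_{i,j}$ for $v \in \ipartp_i$ makes both $\matleft \ngr$ and $\nh \matleft$ block-constant with matching values. For the converse, I would extract an equitable partition from $\matleft$ itself: declare $u, v$ equivalent when they are linked by a chain of nonzero entries of $\matleft$, and use $\matleft \ngr = \nh \matleft$ together with double stochasticity of $\matleft$ to show that the resulting equivalence classes have matching sizes in $G$ and $H$ and satisfy the parameter conditions of an equitable partition.

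Finally, $(\ref{item:2fr}) \Rightarrow (\ref{item:5fr})$ follows by induction on a tree $T$ rooted at an arbitrary vertex, expanding $\hom(T;G)$ as a sum over $V(G)$ of products determined by the iterated degrees of vertices; the converse uses that suitable families of trees separate distinct iterated degrees. For $(\ref{item:2fr}) \Leftrightarrow (\ref{item:4fr})$ I would invoke the standard pebble-game characterisation of $\counting^2$: $u \in V(G)$ and $v \in V(H)$ satisfy the same $\counting^2$-formulas iff Duplicator wins the two-pebble bijective counting game from $(u,v)$, and a direct induction identifies this game with the iterated-degree refinement. The step I expect to be most delicate is the converse of $(\ref{item:5fr})$, since it requires showing that homomorphism counts from trees are rich enough to recover the entire refinement sequence; the cleanest route is to encode the rooted tree $T_v^j$ representing the $j$th iterated degree of $v$ and argue, via suitable inclusion-exclusion on homomorphism counts, that distinct iterated degrees produce distinct counts on a finite separating family of trees.
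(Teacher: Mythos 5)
The paper does not actually prove Theorem~\ref{th:fractional-graph}: the text following the statement attributes the equivalences to Leighton, Tinhofer, Ramana--Scheinerman--Ullman (for (\ref{item:1fr})$\Leftrightarrow$(\ref{item:2fr})$\Leftrightarrow$(\ref{item:3fr})), Cai--F\"{u}rer--Immerman (for (\ref{item:2fr})$\Leftrightarrow$(\ref{item:4fr})), and Dvo\v{r}\'{a}k / Dell--Grohe--Rattan (for (\ref{item:5fr})). What the paper \emph{does} prove is the relational generalization of the $(\ref{item:1fr})\Leftrightarrow(\ref{item:2fr})\Leftrightarrow(\ref{item:3fr})$ portion (Theorem~\ref{le:digraph}, proved in Section~6.3), and there your plan coincides with theirs: the block-constant matrix $\bigoplus_i \frac{1}{|\ipartp_i|} J_{\ipartp_i}$ for (\ref{item:3fr})$\Rightarrow$(\ref{item:1fr}), the indecomposable direct-sum decomposition of the doubly stochastic matrix plus the ``constant row/column sums'' lemma (their Lemma~\ref{le:indecomposable}) for (\ref{item:1fr})$\Rightarrow$(\ref{item:3fr}), and the fixed point of colour refinement as the coarsest common equitable partition for (\ref{item:2fr})$\Leftrightarrow$(\ref{item:3fr}). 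One small thing you leave implicit in (\ref{item:1fr})$\Rightarrow$(\ref{item:3fr}): the argument needs \emph{both} $\matleft N_G = N_H \matleft$ and $N_G \matleft^T = \matleft^T N_H$ to make the block-restricted adjacency submatrices have constant row \emph{and} column sums; for graphs the second identity comes for free from symmetry of $N_G, N_H$, but it should be said.

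Your (\ref{item:2fr})$\Rightarrow$(\ref{item:5fr}) via the rooted-tree product formula is correct and standard. The step you rightly flag as delicate, (\ref{item:5fr})$\Rightarrow$(\ref{item:2fr}), is where your sketch is thin. Counting homomorphisms from the explicit ``local view'' tree $T_v^j$ into $G$ does not directly recover $\delta_j^G(v)$, because the root of $T_v^j$ is free to land anywhere in $G$ and the count aggregates over all such placements; a bare inclusion--exclusion over such trees will not untangle this. The argument in the literature (Dvo\v{r}\'{a}k; Dell--Grohe--Rattan) instead works with \emph{rooted} tree homomorphism counts as functions of the root vertex, observes that this family of functions is closed under pointwise products (glue roots) and under neighbour-sums (extend by one edge), and then uses a Newton/Vandermonde-type interpolation at each refinement level to build indicator functions for the iterated-degree classes as polynomial combinations of these counts. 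Summing over the vertex set then yields linear combinations of unrooted tree homomorphism counts that detect the iterated degree multiset. If you want to carry out (\ref{item:5fr})$\Rightarrow$(\ref{item:2fr}) yourself rather than cite it, that is the structure you need; otherwise, following the paper and citing Dvo\v{r}\'{a}k or Dell--Grohe--Rattan is entirely appropriate. Your (\ref{item:2fr})$\Leftrightarrow$(\ref{item:4fr}) via the bijective $2$-pebble game is the right classical route and is the same device the paper uses in its proof of Theorem~\ref{th:tree+logic}.
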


The equivalence of (\ref{item:1fr}), (\ref{item:2fr}) and (\ref{item:3fr}) can be traced back to the work of Leighton \cite{Leighton1982}, Tinhofer \cite{Tinhofer1986, Tinhofer1991}, and Ramana, Scheinerman and Ullman \cite{Ramana1994}.
The equivalence of (\ref{item:2fr}) and (\ref{item:4fr}) is due to Cai, F{\"{u}}rer and Immerman \cite{Cai92}, and item (\ref{item:5fr}) was added later independently by Dvor{\'{a}}k \cite{Dvorak2010} and Dell, Grohe and Rattan \cite{Dell2018} (in fact, both \cite{Cai92} and \cite{Dvorak2010,Dell2018} show much stronger results, see Section \ref{sec:kWL}).

\subsection{From graphs to relational structures} \label{sec:friso-relstr} We now extend each of the notions defined above in the context of graphs to arbitrary relational structures. We start by adapting the 1-dimensional Weisfeiler-Leman algorithm to calculate iterative refinements of a colouring of the universe and constraint set of a relational structure. While there are syntactical differences, when run on graphs this procedure is equivalent for all purposes to 1-WL.

In what follows it will be convenient to allow disconnected instances. Consider the labelled factor graph $\graph{\bfa}$ of a relational structure $\bfa$. Let $v$ be a node of $\graph{\bfa}$ and denote its neighbourhood in the factor graph by $N(v)$.
For every $j\geq 0$ and $v\in A\cup \con\bfa$, we define inductively the $j\ith$ \textit{iterated degree} $\delta^{\bfa}_j(v)$ of $v$ on $\bfa$ as follows. We set $\delta_{0}^\bfa(v)$ to be one of two arbitrary symbols that distinguish elements of $X$ from  elements of $\con\bfa$. For $j\geq 1$ we set $\delta_{j}^\bfa(v) =\{\{(\ell_{\{v,w\}}, \delta^\bfa_{j-1}(w)) \mid w \in N(v)\}\}$, and the \emph{iterated degree} of $v$ is defined accordingly as $\delta^{\bfa}(v) = (\delta_0^{\bfa}(v),\delta_1^{\bfa}(v),\delta_2^{\bfa}(v), \ldots)$. For vertices $v$ and $v'$ we write $v \eqone v'$  if $\delta^{\bfa}(v) = \delta^{\bfa}(v')$. In this case, we say that $v$ and $v'$ are \textit{$\eqone$-equivalent}.\footnote{The ``1'' refers to ``1-dimensional''.} It can be easily shown (see e.g. \cite{Butti2022distr-TOCS}) that as $j$ increases, the partition induced by $\delta^{\bfa}_{j}$ gets more refined or remains unchanged, and indeed it reaches a fixed point for some $j \leq 2|A|$.

The \textit{iterated degree sequence} of a relational structure $\bfa$ is defined as $\delta(\bfa) = \{\{\delta^\bfa(v) \mid  v \in A \cup \con\bfa \}\}$; for two $\sig$-structures $\bfa$, $\bfb$, we write $\bfa \eqone \bfb$ if they have the same iterated degree sequence.
Notice that in order to prove that $\bfa \eqone \bfb$ it is sufficient to show that $\mult{\delta^{\bfa}(a) \mid a \in A }=\mult{\delta^{\bfb}(b) \mid b \in B}$.

To lift the other equivalent characterizations of the 1-dimensional Weisfeiler-Leman algorithm from graphs to relational structures we need to define some more concepts. 
For a signature $\sig$, recall that $\LAB_\sig := \{(S,R) \mid R \in \sig, S \subseteq [\ar(R)] \}$ denotes the set of possible labels for the edges of the factor graph of a $\sig$-structure. The \textit{matrix representation} $M_{\bfa}$ of a $\sig$-structure $\bfa$ is an $A \times \con\bfa$ matrix whose entries are elements of $\LAB_\sig$, where for all $a \in A$ and $R(\ba) \in \con\bfa$, we have that $ M_{\bfa}[a,R(\ba)] = (S,R) $ where $S = \{i \in [\ar(R)] \mid  a=\ba[i]\}$.
Moreover, for every $\ell \in \LAB_\sig$, we define $M_{\bfa}^{\ell} \in \{0,1\}^{A \times \con\bfa}$  by setting $M_{\bfa}^{\ell}[a,R(\ba)]=1$ if $ M_{\bfa}[a,R(\ba)]=\ell$ and $M_{\bfa}^{\ell}[a,R(\ba)]=0$ otherwise. 


A \textit{partition} of a $\sig$-structure $\bfa$ is a pair $(\partp,\partq)$ where $\partp=\{\ipartp_i \mid i\in I\}$ is a partition of $A$ and $\partq=\{\ipartq_j \mid j\in J\}$ is a partition of $\con\bfa$. We say that
$(\partp,\partq)$ is \textit{equitable} if for every $i\in I$, $j\in J$, and $\ell\in \LAB_\sig$, there are integers $c^{\ell}_{i,j},
d^{\ell}_{j,i}$, called the {\em parameters} of the partition, such that for every $i\in I$, every $a\in \ipartp_i$, every $\ell\in \LAB_\sig$, and every $j\in J$, we have \begin{equation} \label{eq:eqpartQ} \tag{P1}
|\{R(\ba)\in \ipartq_j \mid M_{\bfa}[a,R(\ba)]=\ell\}|=c^{\ell}_{i,j}
\end{equation}
\noindent and, similarly, for every $j\in J$, every $R(\ba)\in \ipartq_j$,  every $\ell\in \LAB_\sig$, and every $i\in I$  we have \begin{equation} \label{eq:eqpartP} \tag{P2}
|\{a\in \ipartp_i \mid M_{\bfa}[a,R(\ba)]=\ell\}|=d^{\ell}_{j,i}. \end{equation} 
As above, we say that two structures $\bfa$, $\bfb$ have a {\em common equitable partition} if there are equitable partitions 
$(\{\ipartp_i^\bfa \mid i\in I\},\{\ipartq_j^\bfa \mid j\in J\})$ and $(\{\ipartp_i^\bfb \mid i\in I\},\{\ipartq_j^\bfb \mid j\in J\})$ of $\bfa$
and $\bfb$ with the same parameters satisfying $|\ipartp^\bfa_i|=|\ipartp^\bfb_i|$ for every $i\in I$ and $|\ipartq^\bfa_j|=|\ipartq^\bfb_j|$ for every $j\in J$.

Then, we have the following theorem, which generalizes the equivalent formulations of fractional isomorphism for graphs from Theorem  \ref{th:fractional-graph} to arbitrary relational structures. The proof is deferred to Section \ref{chap:higher}.

\begin{restatable}{theorem}{ledigraph}
\label{le:digraph}
Let $\bfa$, $\bfb$ be $\sig$-structures. The following are equivalent:
\begin{enumerate} 
\item There exist doubly stochastic matrices $\matleft$, $\matright$ such that for every $\ell\in \LAB_\sig$ it holds that $\matleft M_{\bfa}^{\ell}=M_{\bfb}^{\ell} \matright $ and $M_{\bfa}^{\ell}\matright^T=\matleft^TM_{\bfb}^{\ell}$; \label{1digraphItem}
\item $\bfa$ and $\bfb$ have the same iterated degree sequence; \label{2digraphItem}
\item $\bfa$ and $\bfb$ have a common equitable partition; \label{3digraphItem}
\end{enumerate}
If, additionally, $\bfa$ and $\bfb$ are graphs, then the following is also equivalent:
\begin{enumerate}
    \setcounter{enumi}{3}
    \item There exists a doubly stochastic matrix $\matleft$ such that $\matleft\na = \nb \matleft$. \label{5digraphItem}
\end{enumerate}
\end{restatable}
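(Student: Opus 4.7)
The plan is to establish the equivalences among (\ref{1digraphItem}), (\ref{2digraphItem}), and (\ref{3digraphItem}) via the cycle (\ref{3digraphItem})$\Rightarrow$(\ref{1digraphItem})$\Rightarrow$(\ref{2digraphItem})$\Rightarrow$(\ref{3digraphItem}), and then to deduce the graph-only item (\ref{5digraphItem}) from Theorem~\ref{th:fractional-graph}.

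For (\ref{2digraphItem})$\Leftrightarrow$(\ref{3digraphItem}) I would adapt the classical 1-WL / equitable partition correspondence to the labelled bipartite factor graph setting. Running the iterative refinement to its fixed point on $\bfa$ partitions $A\cup \con{\bfa}$ into $\eqone$-classes, and the stabilised multisets $\delta^\bfa_j$ directly yield parameters $c^\ell_{i,j}$, $d^\ell_{j,i}$ satisfying \eqref{eq:eqpartQ} and \eqref{eq:eqpartP}, so this partition is equitable. Conversely, by induction on $j$, any common equitable partition forces vertices (resp.\ constraints) in paired blocks to agree on $\delta_j$. Hence $\bfa\eqone\bfb$ is equivalent to the existence of a size-preserving bijection between the stable classes of $\bfa$ and $\bfb$ with matching parameters, which is precisely the common equitable partition condition.

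For (\ref{3digraphItem})$\Rightarrow$(\ref{1digraphItem}), I would construct $\matleft$ and $\matright$ explicitly via a uniform-coupling construction: set $\matleft[b,a] = 1/|\ipartp^\bfa_i|$ whenever $a\in \ipartp^\bfa_i$ and $b\in \ipartp^\bfb_i$ (and $0$ otherwise), and define $\matright$ analogously on the constraint blocks $\ipartq^\bfa_j$, $\ipartq^\bfb_j$. Double stochasticity follows immediately from $|\ipartp^\bfa_i|=|\ipartp^\bfb_i|$ and $|\ipartq^\bfa_j|=|\ipartq^\bfb_j|$. Each entry of the matrix identities $\matleft M_\bfa^\ell=M_\bfb^\ell\matright$ and $M_\bfa^\ell\matright^T=\matleft^TM_\bfb^\ell$ then collapses, on indexing by a pair $(b,R(\ba'))$ with $b\in \ipartp^\bfb_i$ and $R(\ba')\in \ipartq^\bfa_j$, to a ratio involving either $c^\ell_{i,j}$ or $d^\ell_{j,i}$ on both sides, which match by \eqref{eq:eqpartQ} and \eqref{eq:eqpartP}.

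The main technical obstacle is (\ref{1digraphItem})$\Rightarrow$(\ref{2digraphItem}). I would prove, by simultaneous induction on $j$, the strengthened statement that whenever $\matleft[b,a]>0$ we have $\delta^\bfa_j(a)=\delta^\bfb_j(b)$, and whenever $\matright[S(\bb'),R(\ba')]>0$ we have $\delta^\bfa_j(R(\ba'))=\delta^\bfb_j(S(\bb'))$. The base case is immediate from the distinguishing universe/constraint symbols. For the inductive step, for each label $\ell\in\LAB_\sig$ and each $\eqone$-class $C$ at level $j-1$, multiplying $\matleft M_\bfa^\ell = M_\bfb^\ell\matright$ by the characteristic vector of $C$ (and using the companion identity for the constraint side) together with the inductive hypothesis forces, for every pair with $\matleft[b,a]>0$, the multiset $\mult{(\ell_{\{a,w\}},\delta^\bfa_{j-1}(w))\mid w\in N(a)}$ to coincide with the corresponding multiset at $b$, and similarly for $\matright$. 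Summing these equalities, weighted by $\matleft[b,a]$ and using that $\matleft$ is left stochastic, yields that $\bfa$ and $\bfb$ have the same iterated degree sequence. Finally, item (\ref{5digraphItem}) follows from Theorem~\ref{th:fractional-graph}: for graphs, 1-WL on the factor graph encodes exactly the same information as classical 1-WL on the graph itself (the constraint vertices merely record adjacency, with the labels $(\{1\},E)$ and $(\{2\},E)$ distinguishing the two endpoints), so (\ref{2digraphItem}) is equivalent to the underlying graphs having the same classical iterated degree sequence, which by Theorem~\ref{th:fractional-graph} is equivalent to (\ref{5digraphItem}).
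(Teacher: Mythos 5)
Your overall architecture matches the paper's for the cycle among (\ref{1digraphItem}), (\ref{2digraphItem}), (\ref{3digraphItem}): the constructions for (\ref{2digraphItem})$\Leftrightarrow$(\ref{3digraphItem}) and (\ref{3digraphItem})$\Rightarrow$(\ref{1digraphItem}) (uniform block matrices $\matleft=\oplus_i \frac{1}{|\ipartp_i|}J_{\ipartp_i}$, etc.) are exactly what the paper does.

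The genuine gap is in (\ref{1digraphItem})$\Rightarrow$(\ref{2digraphItem}). In the inductive step you derive, for a fixed label $\ell$ and a fixed $\delta_{j-1}$-class $C$ of constraints, that $\matleft\bv=\bw$ and (via the companion identity and the block structure of $\matright$) $\matleft^T\bw=\bv$, where $\bv[a]$ counts $\ell$-labelled neighbours of $a$ in $C$ and similarly for $\bw$. But those two identities do \emph{not} by themselves imply that $\matleft[b,a]>0$ forces $\bv[a]=\bw[b]$. From $\matleft\bv=\bw$ you only get that $\bw[b]$ is a convex combination of the $\bv[a]$'s with $\matleft[b,a]>0$; equality requires a Perron--Frobenius-type argument showing that $\bv$ and $\bw$ must be constant on the row/column support of each indecomposable block of $\matleft$. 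This is precisely the role of the paper's Lemma~\ref{le:indecomposable}, which you never state or prove. Without that step, your induction stalls. (The paper sidesteps the issue by decomposing $\matleft=\oplus_i\matleft_i$, $\matright=\oplus_j\matright_j$ once and for all, reading off a common equitable partition from the row/column partitions, and invoking Lemma~\ref{le:indecomposable} to match the parameters; that cleanly separates the matrix-analytic content from the combinatorial bookkeeping.)

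For item (\ref{5digraphItem}) you take a different and legitimate route: you reduce to the known graph case (Theorem~\ref{th:fractional-graph}) by arguing that $\eqone$ computed on the labelled factor graph coincides, for graphs, with the classical iterated degree sequence. This is correct, but the equivalence between factor-graph colour refinement and classical colour refinement deserves at least a sentence of justification (the constraint vertices for a graph are the ordered pairs $(u,v)$, $(v,u)$ of degree $2$, whose colours are determined by the endpoint colours and the position labels $(\{1\},E)$, $(\{2\},E)$, so the induced partition on variable vertices stabilises to the same thing as 1-WL on $G$). The paper instead gives a self-contained algebraic proof of (\ref{1digraphItem})$\Leftrightarrow$(\ref{5digraphItem}) by writing $\na=U_\bfa U_\bfa^T-Z_\bfa Z_\bfa^T$ in terms of the label-incidence matrices $M_\bfa^{(\{1\},E)},M_\bfa^{(\{2\},E)}$; that version avoids any appeal to Theorem~\ref{th:fractional-graph}, which the paper only cites without proof.
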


Then, we can write $\bfa \eqone \bfb$ if $\bfa$ and $\bfb$ satisfy any of the conditions of Theorem \ref{le:digraph}. 
Note in particular that condition (\ref{5digraphItem}) shows that our definition of $\eqone$ coincides with the standard notion of fractional isomorphism when $\bfa$ and $\bfb$ are graphs. 
Moreover, we note that \cite{ButtiD21fractional} includes an additional characterization of $\eqone$ in terms of counting homomorphisms from tree-like structures, akin to item \ref{item:5fr} in Theorem \ref{th:fractional-graph}. For simplicity, we do not include this characterization as it will not be needed in this paper.

For a $\sig$-structure $\bfa$ we say that $\cspa$ is closed under $\eqone$-equivalence if for any two $\sig$-structures $\bfx_1$ and $\bfx_2$, $\bfx_1 \eqone \bfx_2$ implies that $\bfx_1 \to \bfa$ if and only if $\bfx_2 \to \bfa$. We shall call these families of CSPs \textit{Weisfeiler-Leman invariant} to stress the connection of $\eqone$ with the Weisfeiler-Leman method.

\section{Linear Programming and the CSP} \label{sec:LP}

In this section we focus on linear programming relaxations of the homomorphism problem for general relational structures, i.e., the CSP.

\subsection{The Basic Linear Programming relaxation}

Given a pair of $\sig$-structures $\bfx$ and $\bfa$, the \emph{basic LP relaxation} (see for example \cite{Kun2012}), denoted $\blp\xa$, is defined as follows. There is a variable $p_x(a)$ for every $x \in X$ and $a \in A$, and a variable $p_{R(\bx)}(\ba)$ for every $R(\bx) \in \con{\bfx}$ and every $\ba \in A^{\ar(R)}$. All variables must take values in the range $[0,1]$. The value of $p_{x}(a)$ is interpreted as the probability that $x$ is assigned to $a$, and similarly, the value of $p_{R(\bx)}(\ba)$ is interpreted as the probability that the $\bx$ is assigned component-wise to $\ba$. The variables are restricted by the following equations:
\begin{align}
& \sum_{a \in A} p_x(a)=1  &&  x \in X  \label{eq:BLP1thesis} \tag{\blp1}\\
&  p_x(a) = \sum_{\mathclap{\ba \in A^{\ar(R)},\ a_i=a}} \, p_{R(\bx)}(\ba) &&  a\in A, \ R(\bx) \in \con{\bfx}, \ i \in [\ar(R)] \mbox{ such that } x_i=x  \label{eq:BLP2thesis} \tag{\blp2} \\
& p_{R(\bx)}(\ba) = 0 &&  R(\bx) \in \con{\bfx}, \  \ba \in A^{\ar(R)} \textnormal{ such that } R(\ba) \notin \con\bfa \label{eq:BLP3thesis} \tag{\blp3}
\end{align}
Intuitively, equations (\ref{eq:BLP1thesis})--(\ref{eq:BLP3thesis}) ensure that, for each $R(\bx) \in \con{\bfx}$, the values of $p_{R(\bx)}(\ba)$ form a probability distribution on $\RA$ (which is additionally consistent with $p_x(a)$'s).

\subsection{The Sherali–Adams hierarchy for CSPs} 

We use the Sherali-Adams method to strengthen the BLP and obtain a hierarchy of increasingly tighter relaxations of the CSP. In this presentation we will follow \cite{Atserias2013}.

Let $\polytope \subseteq[0,1]^n$ be a polytope $\{\by\in \mathbb{R}^n : M \by\geq \bb, 0\leq\by\leq 1\}$ for a matrix $M\in\mathbb{R}^{m\times n}$, and a column vector $\bb\in \mathbb{R}^m$. We denote the convex hull of the $\{0,1\}$-vectors in $\polytope$ by $\polytope^{\mathbb{Z}}$. The sequence of Sherali-Adams relaxations of $\polytope^{\mathbb{Z}}$ is the sequence of polytopes $\polytope=\polytope^1\supseteq \polytope^2\supseteq\cdots$ where $\polytope^k$ is defined in the following way.

Each inequality in $M \by\geq\bb$ is multiplied by all possible terms of the form $\Pi_{i\in I} y_i\Pi_{j\in J} (1-y_j)$ where $I,J\subseteq[n]$ satisfy $|I\cup J|\leq k-1$ and $I\cap J=\emptyset$. This leaves a system of polynomial inequalities, each of degree at most $k$. Then, this system is {\em linearized} and hence relaxed in the following way: each square $y_i^2$ is replaced by $y_i$ and each resulting monomial $\Pi_{i\in K} y_i$ is replaced by a variable $z_K$. In this way we obtain a polytope $\polytope^k_L$. Finally, $\polytope^k_L$ is projected back to $n$ dimensions by defining 
\begin{displaymath} \polytope^k:=\{\by\in \mathbb{R}^n: \text{there exists $\bz\in \polytope^k_L$ such that $\bz_{\{i\}}=\by_i$ for each $i\in [n]$}\}.\end{displaymath}
We note here that $\polytope^{\mathbb{Z}}\subseteq \polytope^k$ for every $k\geq 1$.

In order to apply the SA method to the homomorphism problem there are different possible choices for the polytope $\polytope$ (encoding a relaxation of homomorphism) to start with, each one then yielding a different hierarchy. 
Here we shall adapt a SA-based family of relaxations commonly used in optimization variants of CSP \cite{GeorgiouMT09,ChanLRS13,YoshidaZ14,ThapperZ17,Ciardo2022Tensors} which we transform into a relaxation of (plain) CSP by just turning the objective function into a set of new restrictions. Hence, the resulting system of inequalities is not, strictly speaking, obtained using the SA method. Nonetheless, we shall abuse slightly notation and still use $\sa^k$ to refer to our system of inequalities. 

In fact, giving an explicit description of the inequalities obtained using the SA method for any natural polytope $\polytope$ encoding the LP relaxation for a {\em general} CSP in our setting is a bit cumbersome (because the constraints of the CSP are encoded in the polytope-defining inequalities rather than in the objective function as in the optimization variants).\footnote{A precise description of these can be found in \cite{atserias2022}.} Hence, it seems sensible to settle for a good approximation as $\sa^k$. Indeed, as we discuss in Appendix \ref{app:realSAmethod} (see Lemma \ref{le:SAkrPk}), the sequence of relaxations $\sa^k$ is tightly interleaved with the sequence $\polytope^k$ obtained by the SA method, in stricto sensu, for a natural choice of initial polytope $\polytope$.

Given a pair of $\sig$-structures $\bfx$ and $\bfa$, the system of inequalities $\sak$ for the homomorphism problem over $\xa$ contains a variable $p_V(f)$ for every $V \subseteq X$ with $1 \leq |V| \leq k$ and every $f:V \to A$, and a variable $p_{R(\bx)}(f)$ for every $R(\bx) \in \con{\bfa}$ and every $f:\{\bx\} \to A$. Each variable must take a value in the range $[0,1]$. The variables are constrained by the following conditions:
\begin{align}
&\sum_{f:V \to A} p_V(f)=1  &&  V\subseteq X \textnormal{ s.t. } |V|\leq k  \label{eq:SA1} \tag{\sa1} \\
&p_U (f) = \sum_{\mathclap{g:V \to A,g|_{U} =f}} \, p_V(g) && U \subseteq V \subseteq X \, \textnormal{ s.t. }|V|\leq k, \ f:U \to A \label{eq:SA2} \tag{\sa2} \\
&p_U(f) = \sum_{\mathclap{g:V\to A, g|_U =f}} \, p_{R(\bx)}(g) &&  R(\bx)\in{\mathcal C}_{\bfx}, \ U\subseteq\{\bx\}=V \textnormal{ s.t. } |U| \leq k, f: U \to A \label{eq:SA3}\tag{\sa3} \\
&p_{R(\bx)}(f) = 0 &&  R(\bx) \in \con{\bfx}, \ f : \{\bx\} \to A \textnormal{ s.t. }f(\bx) \not \in \RA\label{eq:SA4}\tag{\sa4}
\end{align} 

For the particular case of $k=1$ we shall use the simplified notation $p_x(f(x))$ to denote the variable $p_V(f)$ for a singleton set $V=\{x\}$ and a function $f:V \to A$.

It is easy to see that $\sa^1\xa$ is equivalent to $\blp\xa$, subject to the following additional constraint: \begin{align}
  & p_{R(\bx)}(\ba) = 0  \qquad  &&  R(\bx) \in \con{\bfx}, \ \ba \in A^{\ar(R)} \colon \exists i,j \in [\ar(R)] \label{eq:SA4loops} \tag{$\circlearrowleft$}\\
  &&& \textnormal{ such that } x_i=x_j  \textnormal{ and }a_i \neq a_j \nonumber
\end{align}
That is, $\blp$ and $\sa^1$ differ only in the way that they deal with repeated entries in a tuple.\footnote{We remark that in the literature the difference between the two relaxations is sometimes neglected, which occasionally leads to unjustified or slightly incorrect claims.}

For a linear program $\textnormal{L} \in \{\blp,\sa^1\}$ we say that $\textnormal{L}\xa$ is \textit{feasible} if there exists a rational solution to the system $\textnormal{L}\xa$. We say that $\textnormal{L}$ \textit{decides} $\cspa$ if, for every input structure $\bfx$, if $\textnormal{L}\xa$ is feasible then $\bfx \to \bfa$. 

While in some specific instances with repeated entries $\sa^1$ is strictly stronger than $\blp$, in general these two relaxations have the same expressive power in terms of solving fixed-template CSPs. In particular, $\blp$ (equivalently, $\sa^1$) solves $\cspa$ iff $\bfa$ has symmetric polymorphisms of all arities (see e.g. \cite{Kun2012}).

The power of the Sherali-Adams method for the CSP is also well-understood. In particular, the third level of the SA hierarchy already solves all CSPs of bounded width (this follows from the more general results of \cite{ThapperZ17} together with the Barto-Kozik theorem \cite{Barto2014local} and the collapse of the bounded width hierarchy \cite{Barto2016Collapse}).

\section{Weisfeiler-Leman Invariant CSPs} \label{chap:sawl}

\subsection{The decomposition theorem} \label{sec:decomp}

We now present the decomposition theorem for $\sa^1$. Note that condition (\ref{2SA1Item}) and, for graphs, condition (\ref{5SA1Item}) below are naturally seen as the homomorphism counterpart of the notion of fractional isomorphism (see condition (\ref{1digraphItem}) in Theorem \ref{le:digraph}). 

A version of this theorem for valued structures will follow in Section \ref{chap:pvcsp} and a higher-dimensional version in Section \ref{chap:higher}.

\begin{theorem}\label{thm:SA1}
Let $\bfx$, $\bfa$ be $\sig$-structures. Then, the following are equivalent:
\begin{enumerate}
    \item $\sa^1(\bfx,\bfa)$ is feasible; \label{1SA1Item}
    \item  There exist left stochastic matrices $\matleft$, $\matright$ such that for every $\ell=(S,R) \in \LAB_\sig$ it holds that
    $\matleft M_{\bfx}^{\ell}\leq \sum_{\ell'} M_{\bfa}^{\ell'}\matright$, where $\ell'$ ranges over all $(S',R)\in \LAB_\sig$ with $S\subseteq S'$; \label{2SA1Item}
    \item There exists a sequence of structures $\bfy_0, \ldots, \bfy_n$ such that $\bfy_0=\bfx$, $\bfy_n=\bfa$, and for all $i=0,\ldots,n-1$ we have that $\bfy_i \to \bfy_{i+1}$ or $\bfy_{i} \eqwl[1] \bfy_{i+1}$; \label{3SA1Item}
    \item There exists a pair of structures $\yone, \ytwo$ such that $\bfx \to \bfy_{1}$,  $\bfy_{1} \eqwl[1] \bfy_{2}$, and $\ytwo \to \bfa$. \label{4SA1Item}
\end{enumerate}
If in addition none of the relations in $\bfx$ and $\bfa$ have repetitions, then the following condition is also equivalent:
\begin{enumerate}
    \setcounter{enumi}{4}
    \item There exist left stochastic matrices $\matleft$, $\matright$ such that for every $\ell \in \LAB_\sig$ it holds that 
    $\matleft M_{\bfx}^{\ell}=M_{\bfa}^{\ell}\matright$.\label{5SA1Item}
\end{enumerate}
\end{theorem}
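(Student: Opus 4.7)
My strategy is to establish the chain $(1)\Leftrightarrow(2)\Rightarrow(4)\Rightarrow(3)\Rightarrow(1)$, together with $(2)\Leftrightarrow(5)$ under the no-repetition hypothesis, with the construction in $(1)\Rightarrow(4)$ as the main technical step.

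\textbf{Steps (1)$\Leftrightarrow$(2) and (2)$\Leftrightarrow$(5).} Given a solution to $\sa^1\xa$, I would set $\matleft[a,x]=p_x(a)$ and $\matright[R(\ba),R(\bx)]=p_{R(\bx)}(\ba)$ for $\ba$ respecting the repetition pattern of $\bx$ (and $0$ otherwise), so that left stochasticity follows from \eqref{eq:SA1} and \eqref{eq:SA3} with $U=\emptyset$. A direct computation shows that, for $\ell=(S,R)$ and $S=\{i\colon x_i=x\}$ being the label of the edge from some $x$ to $R(\bx)$ in $\graph\bfx$, the entry $(\matleft M_{\bfx}^{\ell})[a,R(\bx)]$ equals $p_x(a)$, while $(\sum_{\ell'\supseteq\ell} M_{\bfa}^{\ell'}\matright)[a,R(\bx)]$ equals $\sum_{\ba:\,a_i=a\;\forall i\in S}p_{R(\bx)}(\ba)$; the desired inequality then follows from \eqref{eq:SA3}. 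For the converse, the crucial observation is that stochasticity forces the inequality to be an equality after summing over~$a$, and this in turn forces the extra mass on tuples violating the repetition pattern to vanish, yielding \eqref{eq:SA3} and \eqref{eq:SA4loops} simultaneously; this is the reason an inequality suffices to encode $\sa^1$. For the equivalence of (5) under the no-repetition hypothesis, note that in this case $M_{\bfx}^{\ell}$ and $M_{\bfa}^{\ell}$ vanish whenever $\ell=(S,R)$ with $|S|>1$, so the sum over $\ell'\supseteq\ell$ collapses to the single term $\ell'=\ell$, turning the inequality into an equality.

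\textbf{Step (4)$\Rightarrow$(3)$\Rightarrow$(1).} The implication (4)$\Rightarrow$(3) is immediate by setting $\bfy_0=\bfx$, $\bfy_3=\bfa$ and using the $\bfy_1,\bfy_2$ from (4). For (3)$\Rightarrow$(1) I would argue that feasibility of $\sa^1$ is preserved along each edge of the chain. A homomorphism $h\colon\bfy_i\to\bfy_{i+1}$ together with a solution to $\sa^1(\bfy_{i+1},\bfa)$ yields a solution to $\sa^1(\bfy_i,\bfa)$ by the substitution $p_x(a):=p_{h(x)}(a)$ and $p_{R(\bx)}(\ba):=p_{R(h(\bx))}(\ba)$, where consistency uses that repetitions of $h(\bx)$ refine those of $\bx$. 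For an $\eqwl[1]$-edge, the characterization \eqref{1digraphItem} of Theorem~\ref{le:digraph} supplies doubly stochastic matrices $\matleft_0,\matright_0$ with $\matleft_0 M_{\bfy_i}^{\ell}=M_{\bfy_{i+1}}^{\ell}\matright_0$ for every $\ell$, so multiplying by the $\sa^1(\bfy_{i+1},\bfa)$ matrices on the left preserves left stochasticity and the matrix inequality from~(2).

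\textbf{Step (1)$\Rightarrow$(4): main technical step.} This is where the real work lies. Starting from a rational solution to $\sa^1\xa$ and a common denominator $N$, the plan is to build explicit ``blowup'' structures: $\bfy_1$ is obtained from $\bfx$ by replacing each variable $x\in X$ with $N$ copies indexed by pairs $(x,a)$ whose multiplicity equals $N\cdot p_x(a)$, and adding, for each $R(\bx)\in\con\bfx$, one tuple in $R^{\bfy_1}$ for each $\ba\in\RA$ with multiplicity $N\cdot p_{R(\bx)}(\ba)$, suitably threaded through the copies so that the tuple projects coordinate-wise to the corresponding copies of $x_1,\dots,x_{\ar(R)}$; an analogous blowup of $\bfa$ yields $\bfy_2$. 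The homomorphism $\bfx\to\bfy_1$ picks out any copy for each variable, and $\bfy_2\to\bfa$ is the forgetful map. The main obstacle is verifying $\bfy_1\eqwl[1]\bfy_2$: I would do this by exhibiting an explicit common equitable partition, where each class on the $\bfy_1$-side is indexed by a pair $(x,a)$ (respectively by a constraint-tuple pair) and matched to the class on the $\bfy_2$-side indexed by $a$ alone (respectively by the constraint-tuple pair in~$\bfa$). The parameters \eqref{eq:eqpartQ}--\eqref{eq:eqpartP} of this partition can then be computed from the multiplicities $N\cdot p_x(a)$ and $N\cdot p_{R(\bx)}(\ba)$, and the marginal equations \eqref{eq:SA2}--\eqref{eq:SA3} are precisely what guarantees that these counts agree on both sides. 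Once this bookkeeping is carried out, Theorem~\ref{le:digraph} delivers $\bfy_1\eqwl[1]\bfy_2$, completing~(4).
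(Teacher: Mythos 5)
Your implication chain and the routine steps --- $(1)\Leftrightarrow(2)$, $(2)\Leftrightarrow(5)$, $(4)\Rightarrow(3)$, and $(3)\Rightarrow(1)$ via composition with the $\eqwl[1]$-matrices and with matrices induced by a homomorphism --- all track the paper's argument closely. The gap is in the central construction for $(1)\Rightarrow(4)$.

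You propose to build $\bfy_2$ as a blowup of $\bfa$ (so $\bfy_2\to\bfa$ is the forgetful map) and to certify $\bfy_1\eqwl[1]\bfy_2$ by a partition whose cells on the $\bfy_1$-side are indexed by pairs $(x,a)$ but on the $\bfy_2$-side by $a$ alone. This cannot be a common equitable partition: by definition the two partitions must have the \emph{same} index set, and corresponding cells must have equal sizes and equal parameters, whereas here you have up to $|X|\cdot|A|$ cells on one side and $|A|$ on the other. Coarsening the $\bfy_1$-side to be indexed by $a$ does not help either: the cell indexed by $a$ would then contain copies of distinct variables $x,x'$ with $p_x(a),p_{x'}(a)>0$, and those copies inherit the (generally different) degrees and incident label-types of $x$ and $x'$ in $\bfx$, so they cannot lie in a single cell of an equitable partition. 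Even more basically, a blowup of $\bfa$ carries $\bfa$'s local combinatorics while a blowup of $\bfx$ carries $\bfx$'s, and these need not match at all: if $\bfx$ is a single edge and $\bfa$ a single vertex with a loop, then $\sa^1\xa$ is feasible, yet every blowup of $\bfx$ has only edges (labels $(S,R)$ with $|S|=1$) and every blowup of $\bfa$ has only loops (labels with $|S|=2$), so no such pair can be $\eqwl[1]$-equivalent.

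The fix, and the actual crux of the paper's proof, is to make \emph{both} $\bfy_1$ and $\bfy_2$ blowups of $\bfx$ with the same universe $[m]\times X$. The structure $\bfy_1$ is the plain disjoint union of $m$ copies of $\bfx$, giving $\bfx\to\bfy_1$ for free. The structure $\bfy_2$ is a \emph{twisted} blowup of $\bfx$: for each constraint $R(\bx)$ one chooses permutations $\rho_1,\dots,\rho_{\ar(R)}$ of $[m]$ (respecting the repetition pattern of $\bx$, using (\ref{eq:SA4loops})) so that the re-threaded tuples land, under the map $h(k,x)=\bp_x[k]$, on tuples of $\RA$ with the multiplicities prescribed by $mp_{R(\bx)}(\ba)$; then $h$ is a homomorphism $\bfy_2\to\bfa$. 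Because the twist permutes copies within each variable's fibre without altering which relations any vertex participates in, every $(k,x)$ has iterated degree $\delta^\bfx(x)$ in both $\bfy_1$ and $\bfy_2$, and $\bfy_1\eqwl[1]\bfy_2$ follows immediately. The LP solution is used only to choose this twist over $\bfx$; it should not be used to manufacture a structure sitting over $A$.
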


\begin{proof} The equivalence $(\ref{1SA1Item})\Leftrightarrow(\ref{2SA1Item})$ is merely syntactic. In particular we shall show that there is a one-to-one satisfiability-preserving correspondence between pairs of matrices and variable assignments of $\saxa$. However, we first need to massage a bit the two formulations. First, we can assume that for every $R(\bx)\in {\mathcal C}_{\bfx}$ and $R'(\ba) \in {\mathcal C}_{\bfa}$, the corresponding entry in $\matright$ is null unless $R=R'$ and $f(\bx)=\ba$ for some  $f:\{\bx\}\rightarrow\{\ba\}$, since otherwise it is not possible for $\matright$ to form part of a feasible solution. Secondly, we note that the feasibility of $\saxa$ does not change if in (\ref{eq:SA3}) we replace
$=$ by $\leq$ obtaining a new set of inequalities (which to avoid confusion we shall denote by (\hypertarget{eq:SA3prime}{$\sa3'$})) and, in addition, we add
for every $R(\bx) \in \mathcal{C}_\bfx$ the equality
\begin{equation} \label{eq:SA5}\tag{\sa5}
\sum_{f:\{\bx\}\rightarrow A} p_{R(\bx)}(f)=1.
\end{equation}
Finally, note that in $\saxa$ we can ignore (\ref{eq:SA2}).

Then we can establish the following correspondence between pairs of matrices $\matleft$, $\matright$ and assignments $\saxa$:
for every $x\in X$ and $a \in A$, we set $p_x(a)=\matleft[a,x]$ and for every $R(\bx) \in \mathcal{C}_\bfx$ and $f:\{\bx\} \to A$ we define $p_{R(\bx)}(f)=\matright[R(f(\bx)),R(\bx)]$. Then, it is easy to see that (\hyperlink{eq:SA3prime}{$\sa3'$}) corresponds to $\matleft M_{\bfx}^{\ell}\leq \sum_{\ell' \in \LAB_\ell} M_{\bfa}^{\ell'}\matright$ for every $\ell\in \LAB_\sig$ (where $\LAB_{(S,R)}:=\{(S',R) \in \LAB_\sig \mid S \subseteq S'\}$), $\matleft$ being left stochastic corresponds to (\ref{eq:SA1}), and $\matright$ being left stochastic corresponds to (\ref{eq:SA5}).

The equivalence $(\ref{1SA1Item}) \Leftrightarrow (\ref{5SA1Item})$ is obtained as in $(\ref{1SA1Item}) \Leftrightarrow (\ref{2SA1Item})$. We just need to notice that when $\bfx$ and $\bfa$ have no loops, then none of the entries in $M_{\bfx}$ and $M_{\bfa}$ contain any label $\ell=(S,R)\in \LAB_\sig$ where $|S|>1$ and hence it is only necessary to consider labels $\ell=(S,R)\in \LAB_\sig$
where $S$ is a singleton. Observe that, in this case, the equation in $(\ref{2SA1Item})$ becomes 
$\matleft M_{\bfx}^{\ell}\leq M_{\bfa}^{\ell}\matright$ since for every label $\ell=(S,R)$ where $S$ is a singleton, the only label $(S',R)$ with $S\subseteq S'$ and $M_{\bfa}^{\ell}$ not a zero matrix is $\ell$ itself. Finally, in order to 
replace $\leq$ by $=$ in the previous equation we just need to use (\ref{eq:SA3}) instead of (\hyperlink{eq:SA3prime}{$\sa3'$}).

Notice that $(\ref{4SA1Item}) \Rightarrow (\ref{3SA1Item})$ is trivial.

The proof of $(\ref{3SA1Item}) \Rightarrow (\ref{2SA1Item})$ is by induction on $n$. If $n=0$ the claim is immediate, so assume that $n \geq 1$. Let $\bfy_0, \yone, \ldots, \bfy_n$ be a sequence of structures satisfying (\ref{3SA1Item}). By the induction hypothesis, there exist left stochastic matrices $\matleft$, $\matright$ such that $\matleft M_{\yone}^{\ell}\leq \sum_{\ell' \in \LAB_\ell} M_{\bfy_n}^{\ell'}\matright$ for all $\ell\in \LAB_\sig$.

If $\bfy_0 \eqwl[1] \yone$ then it follows from Theorem \ref{le:digraph} that there exist doubly stochastic matrices $\matleft'$ and $\matright'$ such that $\matleft'M_{\bfy_0}^{\ell}= M_{\yone}^{\ell}\matright'$ for all $\ell \in \LAB_\sig$, and so it is easy to verify that $\matleft\matleft'$, $\matright\matright'$ are such that (\ref{2SA1Item}) holds. Assume that $\bfy_0 \to \yone$. We shall show that there exist left stochastic matrices $\matleft'$ and $\matright'$ such that for all $y \in Y_1$, for all $R(\bx) \in \con\bfx$, and for all $\ell \in \LAB_\sig$ there exists $\hat{\ell}=\hat{\ell}(y,R(\bx),\ell) \in \LAB_\ell$ such that $\matleft' M_{\bfx}^{\ell}[y,R(\bx)] \leq M_{\bfy_1}^{\hat{\ell}}\matright'[y,R(\bx)]$. Assuming that this holds, again it follows by the induction hypothesis that $\matleft\matleft'$, $\matright\matright'$ are left stochastic matrices such that $\matleft\matleft'M_{\bfx}^\ell \leq \sum_{\ell' \in \LAB_\ell} M_{\bfa}^{\ell'} \matright \matright' $ for all $ \ell \in \LAB_\sig$.

Let $h$ be a homomorphism from $\bfx$ to $\bfy_1$. We define $\matleft'[y,x]=1$ if $y=h(x)$ and $\matleft'[y,x]=0$ otherwise. Similarly, we set $\matright'[R'(\by),R(\bx)]=1$ if $\by=h(\bx)$ and $R'=R$ and $\matright'[R'(\by),R(\bx)]=0$ otherwise. It is easy to see that $\matleft'$ and $\matright'$ are left stochastic. Now let $\ell =(S,R) \in \LAB_\sig$, $y \in Y_1$ and $R(\bx) \in \mathcal{C}_\bfx$. If $M^\ell_\bfx[x,R(\bx)]=0$ for all $x \in X$ then $\matleft' M^\ell_\bfx[y,R(\bx)]=0$ and there is nothing to prove. So we can assume that there is $x \in X$ such that for all $i \in [\ar(R)]$, $\bx[i] =x$ if and only if $i \in S$. Then we have that $\matleft' M^\ell_\bfx[y,R(\bx)]=1$ if $y=h(x)$, and $\matleft' M^\ell_\bfx[y,R(\bx)]=0$ otherwise. Again in the latter case there is nothing to prove so let us assume that $y=h(x)$. It follows that $h(\bx)[i]=y$ for all $i \in S$ and hence there exists $\hat{\ell}=(R,S')$ with $S \subseteq S'$ such that $M^{\hat{\ell}}_{\bfy_1}[y,R(h(\bx))]=1$, which completes the proof.

For $(\ref{1SA1Item}) \Rightarrow (\ref{4SA1Item})$, assume that $\saxa$ is feasible.
Let $p_x(a)$, $p_{R(\bx)}(\ba)$ form a feasible solution of $\saxa$ and let $m>0$ be an integer such that all the values $mp_x(a)$ and $mp_{R(\bx)}(\ba)$ are (non-negative) integers.

We define the universe of both structures $\yone$ and $\ytwo$ as $Y_1=Y_2 = [m] \times X$. The structure $\yone$ is simply a disjoint union of $m$ copies of $\bfx$: for every $R \in \sigma$, $\bx \in X^{\ar(R)}$ and $k \in [m]$ we set
\begin{displaymath}
((k,\bx[1]), (k,\bx[2]), \dots, (k,\bx[\ar(R)])) \in R^{\yone} \Leftrightarrow \bx \in R^{\bfx}.\end{displaymath}
Observe that clearly $\bfx \to \yone$, since for all $k \in [m]$ the map $h_k: x \mapsto (k,x)$ is a homomorphism. Also notice that for all $k \in [m]$, the iterated degree $\delta^{\yone}(k,x)$ is the same as $\delta^\bfx(x)$.

The structure $\ytwo$ is a ``twisted'' version of $\yone$ (the construction is a version of the twisted product  from~\cite{Kun13}).
For every $x \in X$, fix a tuple $\bp_x \in A^m$ in which $a \in A$ appears exactly $mp_x(a)$ times -- note that this is possible since the $mp_x(a)$ sum up to $m$ by (\ref{eq:SA1}). Moreover, for every $R(\bx) \in \mathcal{C}_\bfx$, denote $r=\ar(R)$, and consider an $m \times r$ matrix $T=T(R(\bx))$ that has, for each $\ba \in A^{r}$, exactly $mp_{R(\bx)}(\ba)$ rows equal to $\ba$. Note that then all the rows of $T$ are elements of $R^\bfa$ by (\ref{eq:SA4}), and that the $i\ith$ column of $T$ contains $a \in A$ exactly $mp_{\bx[i]}(a)$ times by (\ref{eq:SA3}), in other words, the multiset of elements of this columns is equal to $\mult{\bp_{\bx[i]}}$. In particular, $T$ indeed has $m$ rows. Moreover, if $\bx[i] = \bx[j]$, then the $i\ith$ and $j\ith$ columns of $T$ are identical by (\ref{eq:SA4loops}). It follows that there are permutations $\rho_1, \dots, \rho_r$ over $[m]$ such that 
\begin{enumerate}
  \item for every $k \in [m]$,
  $(\bp_{\bx[1]}[\rho_1(k)], \bp_{\bx[2]}[\rho_2(k)], \dots, \bp_{\bx[r]}[\rho_r(k)])$ is equal to the $k\ith$ row of $T$; \label{property:1}
  \item for every $i,j \in [r]$, if $\bx[i]=\bx[j]$ then $\rho_i=\rho_j$. \label{property:2}
\end{enumerate}
Then, for every $R \in \sigma$, $\bx \in X^{\ar(R)}$ and $k \in [m]$ we set \begin{displaymath}
((\rho_1(k),\bx[1]), (\rho_2(k),\bx[2]), \dots, (\rho_{r}(k),\bx[r])) \in R^{\ytwo} \Leftrightarrow \bx \in R^{\bfx}.\end{displaymath}

We define $h: Y_2 \to A$ by $h(k,x) = \bp_x[k]$ for all $k \in [m]$ and $x \in X$. It is easy to see that the image of any tuple in $R^{\ytwo}$ under $h$ is a row of $T(R(\bx))$ for some $R(\bx) \in \con\bfx$, and hence belongs to $R^\bfa$. In particular, for every $\by \in R^{\ytwo}$, there exists some $\bx \in R^\bfx$ and $k \in [m]$ such that $\by[i] = (\rho_i(k),\bx[i])$ for all $i \in [\ar(R)]$. Then, $h(\by[i])=\bp_{\bx[i]}(\rho_i(k))$ and hence $h(\by)$ is the $k\ith$ row of $T(R(\bx))$ by (\ref{property:1}).

Moreover, for all $(k,x) \in Y_2$ the iterated degree $\delta^{\ytwo}(k,x)$ is the same as $\delta^\bfx(x)$ and hence the same as in $\yone$ (note here that item (\ref{property:2}) above guarantees that repeated entries are handled correctly). It follows that $\yone \eqone \ytwo$, and the proof is concluded. 
\end{proof}

\subsection{Characterizing Weisfeiler-Leman invariant CSPs}

The principal novelty of our result is that it gives us an alternative combinatorial description of solvability by the first level of the Sherali-Adams relaxation and thus an improved understanding of solvability of CSPs by linear programs. A concrete application is the answer to the following question: for which structures $\bfa$ is $\csp(\bfa)$ closed under $\eqwl[1]$-equivalence? This question, which also arises in the context of distributed CSPs (see \cite{Butti2022distr-TOCS}), finds an answer in the following characterization of Weisfeiler-Leman invariant CSPs:

\begin{theorem} \label{th:main-csp}
Let $\bfa$ be a fixed finite $\sig$-structure. The following are equivalent:
\begin{enumerate}
    \item $\csp(\bfa)$ is closed under $\eqwl[1]$-equivalence; \label{2-main-csp}
    \item $\sa^1$ decides $\csp(\bfa)$; \label{3-main-csp}
    \item $\blp$ decides $\csp(\bfa)$; \label{4-main-csp}
    \item $\bfa$ has symmetric polymorphisms of all arities. \label{5-main-csp}
\end{enumerate}
\end{theorem}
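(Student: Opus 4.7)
The plan is to organise the argument around the cycle of implications $(\ref{5-main-csp}) \Rightarrow (\ref{4-main-csp}) \Rightarrow (\ref{3-main-csp}) \Rightarrow (\ref{2-main-csp}) \Rightarrow (\ref{5-main-csp})$, where the equivalences $(\ref{5-main-csp}) \Leftrightarrow (\ref{4-main-csp}) \Leftrightarrow (\ref{3-main-csp})$ are a repackaging of Kun's theorem~\cite{Kun2012} (as already recalled at the end of Section~\ref{sec:LP}), so the only genuinely new content is the equivalence $(\ref{3-main-csp}) \Leftrightarrow (\ref{2-main-csp})$. This last equivalence will be obtained as a direct consequence of the decomposition theorem (Theorem~\ref{thm:SA1}); all the real work is already packed into that theorem, and once it is available no further combinatorial construction is required.

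For $(\ref{3-main-csp}) \Rightarrow (\ref{2-main-csp})$, I would argue as follows. Assume that $\sa^1$ decides $\csp(\bfa)$, and let $\bfx_1 \eqwl[1] \bfx_2$ with $\bfx_1 \to \bfa$. The length-two chain $\bfx_2 \eqwl[1] \bfx_1 \to \bfa$ satisfies the hypotheses of item~(\ref{3SA1Item}) of Theorem~\ref{thm:SA1} for the pair $(\bfx_2, \bfa)$, so $\sa^1(\bfx_2, \bfa)$ is feasible by that theorem. Since $\sa^1$ decides $\csp(\bfa)$ by assumption, we conclude $\bfx_2 \to \bfa$, proving closure under $\eqwl[1]$-equivalence.

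For the converse $(\ref{2-main-csp}) \Rightarrow (\ref{3-main-csp})$, assume that $\csp(\bfa)$ is closed under $\eqwl[1]$-equivalence and suppose $\sa^1(\bfx,\bfa)$ is feasible. By item~(\ref{4SA1Item}) of Theorem~\ref{thm:SA1}, there exist $\sig$-structures $\bfy_1, \bfy_2$ with $\bfx \to \bfy_1$, $\bfy_1 \eqwl[1] \bfy_2$ and $\bfy_2 \to \bfa$. Applying closure under $\eqwl[1]$ to the pair $(\bfy_1, \bfy_2)$ yields $\bfy_1 \to \bfa$, and composing with $\bfx \to \bfy_1$ gives $\bfx \to \bfa$. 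Hence $\sa^1$ decides $\csp(\bfa)$.

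Finally, to close the cycle I would invoke Kun's theorem \cite{Kun2012} for $(\ref{4-main-csp}) \Leftrightarrow (\ref{3-main-csp})$ (the standard characterisation of $\blp$-solvability by symmetric polymorphisms of all arities), together with the observation already recorded in Section~\ref{sec:LP} that $\sa^1$ feasibility implies $\blp$ feasibility (since $\sa^1$ only adds the loop constraint~(\ref{eq:SA4loops})), which yields $(\ref{3-main-csp}) \Rightarrow (\ref{2-main-csp})$ at the template level and hence makes the four conditions a single equivalence class. The only point that requires any care is the use of item~(\ref{4SA1Item}) rather than (\ref{3SA1Item}) of Theorem~\ref{thm:SA1} in the $(\ref{2-main-csp}) \Rightarrow (\ref{3-main-csp})$ direction, since a chain of arbitrary length would force iterated applications of closure and the chain of homomorphism steps would have to be collapsed; the fact that a chain of length three always suffices is exactly what makes the argument immediate. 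There is no real obstacle beyond having Theorem~\ref{thm:SA1} available.
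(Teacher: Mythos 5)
Your two arguments for the equivalence $(\ref{2-main-csp}) \Leftrightarrow (\ref{3-main-csp})$ are correct and essentially identical to the paper's: both directions are read off from items~(\ref{3SA1Item}) and~(\ref{4SA1Item}) of Theorem~\ref{thm:SA1} exactly as you describe. That part is fine.

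The gap is in the remaining leg of the cycle, which you dispose of by declaring that ``$(\ref{5-main-csp}) \Leftrightarrow (\ref{4-main-csp}) \Leftrightarrow (\ref{3-main-csp})$ are a repackaging of Kun's theorem.'' This is not accurate, and it silently skips the one step that actually needs a new idea. Kun's theorem~\cite{Kun2012} characterises $\blp$-solvability by the existence of symmetric polymorphisms of all arities, i.e.\ it gives $(\ref{4-main-csp}) \Leftrightarrow (\ref{5-main-csp})$. The implication $(\ref{4-main-csp}) \Rightarrow (\ref{3-main-csp})$ is indeed immediate from the fact that $\sa^1$ only adds inequalities to $\blp$. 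But the converse $(\ref{3-main-csp}) \Rightarrow (\ref{4-main-csp})$ does not follow from any of this: you would be trying to deduce that whenever $\blp(\bfx,\bfa)$ is feasible the instance $\bfx$ is homomorphic to $\bfa$, given only that this conclusion holds for the \emph{stronger} system $\sa^1$, and $\blp$-feasibility does not imply $\sa^1$-feasibility on instances with repeated entries. So your cycle $(\ref{5-main-csp}) \Rightarrow (\ref{4-main-csp}) \Rightarrow (\ref{3-main-csp}) \Rightarrow (\ref{2-main-csp}) \Rightarrow (\ref{5-main-csp})$ is never closed: you prove $(\ref{2-main-csp}) \Leftrightarrow (\ref{3-main-csp})$, but nothing that returns you from $(\ref{2-main-csp})$ or $(\ref{3-main-csp})$ to $(\ref{4-main-csp})$ or $(\ref{5-main-csp})$. (The final paragraph of your write-up also seems to have a slip, citing Kun for ``$(\ref{4-main-csp}) \Leftrightarrow (\ref{3-main-csp})$'' and then asserting that this ``yields $(\ref{3-main-csp}) \Rightarrow (\ref{2-main-csp})$''; neither reading supplies the missing direction.)

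The paper closes this gap via the Sparse Incomparability Lemma: given $\bfx$ not homomorphic to $\bfa$ with $\blp(\bfx,\bfa)$ feasible, it produces a loop-free $\bfx'$ with $\bfx' \to \bfx$, $\bfx' \not\to \bfa$, and hence $\blp(\bfx',\bfa)$ feasible; since $\bfx'$ has no repeated entries, $\sa^1(\bfx',\bfa)$ is also feasible, contradicting $(\ref{3-main-csp})$. Some such reduction from general instances to loop-free ones is unavoidable if you want to pass from $\sa^1$-deciding to $\blp$-deciding, and your proposal is missing it.
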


\begin{proof}

The implication $(\ref{2-main-csp}) \Rightarrow (\ref{3-main-csp})$ is an immediate corollary of item (\ref{4SA1Item}) in Theorem \ref{thm:SA1}.

For $(\ref{3-main-csp}) \Rightarrow (\ref{2-main-csp})$, it is enough to notice that item (\ref{1digraphItem}) in Theorem \ref{le:digraph} is a stronger condition than item (\ref{2SA1Item}) in Theorem \ref{thm:SA1} and therefore, for any two similar structures $\bfx, \bfx'$, if $\bfx \eqone \bfx'$ then $\sa^1(\bfx,\bfx')$ is feasible. Alternatively, this implication is also a direct consequence of the equivalence of (\ref{1SA1Item}) and (\ref{3SA1Item}) in Theorem \ref{thm:SA1}.

For $(\ref{3-main-csp}) \Rightarrow (\ref{4-main-csp})$ we will use the following fact which is a consequence of the Sparse Incomparability Lemma \cite{Nesestril1989}: for every $\sig$-structure $\bfx$, there exists a structure $\bfx'$ with no loops such that $\bfx'\rightarrow \bfx$ and $\bfx\rightarrow \bfa$ iff $\bfx'\rightarrow \bfa$. Now, assume that $\blp$ does not decide $\csp(\bfa)$. This means that there exists a structure $\bfx$ not homomorphic to $\bfa$ and such that $\blp\xa$ is feasible. Let $\bfx'$ be the structure given by the Sparse Incomparability Lemma. Since $\bfx'\rightarrow \bfx$ it follows that $\blp(\bfx',\bfa)$ is feasible, and, since $\bfx'$ has no loops, $\sa^1(\bfx',\bfa)$ is feasible as well. Since $\bfx'$ is not homomorphic to $\bfa$ it follows that $\sa^1$ does not decide $\csp(\bfa)$.

The implication $(\ref{4-main-csp}) \Rightarrow (\ref{3-main-csp})$ follows immediately by comparing the inequalities of $\sa^1$ and $\blp$.

Finally, the equivalence of (\ref{4-main-csp}) and (\ref{5-main-csp}) is well-known, see for instance \cite{Kun2012}. 
\end{proof}

\section{Weisfeiler-Leman Invariant Promise Valued CSPs} \label{chap:pvcsp}

\subsection{Promise Valued CSPs} \label{sec:prelim-pvcsp}

\paragraph{PVCSP.} Let $\mathbb{Q}$ be the set of rational numbers.
We denote by $\qq$ the set of non-negative rationals and by $\qinfty$ the set $\mathbb{Q} \cup \{\infty\}$, where $\infty$ is an additional symbol interpreted as a positive infinity. We set $0 \cdot \infty = 0$ and $c \cdot  \infty = \infty$ for all $c>0$.

A $k$-ary \textit{valued relation} on $A$ is a function $R:A^k \to \qinfty$. A \textit{valued $\sig$-structure} $\bfa$ consists of a finite universe $A$, together with a valued relation $\RA$ of arity $\ar(R)$ on $A$ for each $R \in \sig$.
Valued structures are sometimes referred to as \textit{general-valued} in the literature \cite{KolmogorovTZ15,ThapperZ17} to emphasize that relations in $\bfa$ may take non-finite values. A $\sig$-structure $\bfa$ is said to be \textit{non-negative finite-valued} if for every $R \in \sig$, the range of $\RA$ is contained in $\qq$.

Let $\bfx$, $\bfa$ be valued $\sig$-structures, where $\bfx$ is non-negative finite-valued. The \emph{value} of a map $h: X \to A$ for $(\bfx,\bfa)$, and the \emph{optimum value} for $(\bfx, \bfa)$ are given by
\begin{displaymath}
\val(\bfx,\bfa,h) = \sum_{R \in \sig} \sum_{\textbf{x} \in X^{\ar(R)}} R^\bfx(\textbf{x}) R^\bfa(h(\textbf{x})), \ \quad \
\opt(\bfx,\bfa) = \min_{h:X \to A}\val(\bfx,\bfa,h).
\end{displaymath}
For two valued $\sig$-structures $\bfa$ and $\bfb$, the \textit{Promise Valued CSP over $\ab$}~\cite{ViolaZ21,AAA24}, denoted $\pvcsp \ab$, is defined as follows: given a pair $(\bfx,\thresh)$, where $\bfx$ is a non-negative finite-valued $\sig$-structure  and $\thresh \in \mathbb{Q}$ is a \emph{threshold}, output $\yes$ if $\opt\xa \leq \thresh$, and output $\no$ if $\opt\xb > \thresh$. 
 We call $(\bfa,\bfb)$ a \emph{PVCSP template} if the sets of $\yes$ and $\no$ instances are disjoint. We show in Proposition~\ref{prop:FH} that this least restrictive meaningful requirement on a PVCSP template coincides with the choice taken in~\cite{ViolaZ21}.
 
 Notice that the values of $R$ have a different intended meaning in the template valued structures $\bfa$, $\bfb$ and in the input valued structure $\bfx$. For the template, $\RA(\ba)$ and $\RB(\bb)$ should be understood as the \emph{cost} of $\ba$ and $\bb$: we wish an assignment $h$ to map tuples of variables to tuples of domain elements that are as cheap as possible (and, in fact, $\RA$ or $\RB$ is often referred to as a \emph{cost function}). On the other hand, $R^{\bfx}(\bx)$ is the \emph{weight} of tuple $\bx$: we need to be more concerned about heavy tuples, while we may ignore the tuples of zero weight. 
 For instance, the PCSP over a pair of structures $(\bfa',\bfb')$ is essentially the same problem as the PVCSP over the pair of $\{0,\infty\}$-valued structures $\ab$, where tuples in the latter template are given zero cost iff they belong to the corresponding relations in the former template; while to an instance $\bfx'$ of the PCSP corresponds a non-negative finite-valued structure $\bfx$ where the cost of a tuple is zero iff the tuple does \emph{not} belong to the corresponding relation in $\bfx'$ (and costs of the remaining tuples are arbitrary positive rationals), together with any threshold $\thresh \in \qq$.
 
 For a PVCSP input valued $\sig$-structure $\bfx$ we define the set of \emph{constraints} $\mathcal{C}_\bfx$ as the set of formal expressions of the form $R(\bx)$ where $R \in \sig$, $\bx \in X^{\ar(R)}$, and $R^{\bfx}(\bx) > 0$; the value $R^{\bfx}(\bx)$ is the \emph{weight} of the constraint.
 
 We say that a valued relation $R^\bfx$ has no repetitions if $R^\bfx(\bx)=0$ whenever $\bx$ has a repetition. Similarly, we say that an input valued structure $\bfx$ has no repetitions if none of its valued relations has a repetition.

\begin{example}
As mentioned in the introduction, the PVCSP framework can be used to model a decision version of constant-factor approximation problems for MaxCSP. More concretely, suppose that we want to find a $c$-approximation for $\csp(\bfa)$ for some (non-valued) $\sigma$-structure $\bfa$ and some $c < 1$. One can model this problem as $\pvcsp(\bfa',\bfb')$ where $A'=B'=A$ and for all $R \in \sigma$ and $\ba \in A^{\ar(R)}$,  $R^{\bfa'}(\ba)=-1$ if $\ba \in \RA$ and $R^{\bfa'}(\ba)=0$ otherwise; and $R^{\bfb'}(\ba)=\frac{1}{c} R^{\bfa'}(\ba)$. 
Given an instance $\bfx$ of $\csp(\bfa)$ and a parameter $0 < \beta \leq 1$, we turn it into an instance $(\bfx',-\beta m)$ of $\pvcsp(\bfa',\bfb')$ in a natural way, where $\bfx'$ is a 0-1 valued structure and $m$ is the number of constraints in $\bfx'$. Then, $\opt(\bfx',\bfa')\leq -\beta m$ if a $\beta$-fraction of all constraints of $\bfx$ can be satisfied in $\bfa$, and $\opt(\bfx',\bfb')>- \beta m$ if not even a $c\beta$-fraction of the constraints of $\bfx$ can be satisfied in $\bfa$.
\end{example}

\paragraph{Iterated degree.}
As in the crisp setting, we use the factor graph representation of valued structures to define the notion of iterated degree and $\eqone$-equivalence. This is defined analogously to the crisp setting, except that each edge $\{x,R(\bx)\}$ in the factor graph of $\bfx$ is now labelled by a triple $(S,R,q)$ where $q=R^\bfx(\bx)$ is the value weight of the constraint $R(\bx) \in \con\bfx$ and, as above, $S = \{i \in [\ar(R)] \mid x_i = x\}$. The notion of iterated degree and the equivalence relation $\eqone$ are then defined using the factor graph for valued structures in the same way as for crisp structures.

Moreover, recall the definition of Weisfeiler-Leman invariance from Section \ref{sec:1WL}. This translates to the PVCSP setting as follows: for any promise valued template $\ab$, $\pvcsp\ab$ is closed under $\eqone$-equivalence iff for any two non-negative finite-valued structures $\bfx_1$ and $\bfx_2$ similar to $\bfa$ and $\bfb$, $\bfx_1 \eqone \bfx_2$ implies that $\opt(\bfx_2,\bfb) \leq \opt(\bfx_1,\bfa)$.  Correspondingly, we call these templates \textit{Weisfeiler-Leman invariant PVCSPs}.

\paragraph{Linear programming relaxations.} 
Given two valued $\sig$-structures $\bfx$ and $\bfa$ where $\bfx$ is non-negative finite-valued, the systems of inequalities $\blp\xa$ and $\sa^1\xa$ are given by adapting equations (\ref{eq:BLP1thesis}), (\ref{eq:BLP2thesis}), (\ref{eq:BLP3thesis}) and (in the case of $\sa^1\xa$) (\ref{eq:SA4loops}) to the valued case, and importantly, adding an objective function. Concretely, $\blp\xa$ for valued $\bfx$ and $\bfa$ is the following linear program.
\begin{align}
& \mathrlap{\opt^{\blp} \xa := \min \sum_{R(\bx) \in \con{\bfx}}  \sum_{\ba \in A^{\ar(R)}} p_{R(\bx)}(\ba) R^\bfx(\bx) R^\bfa(\ba)} \label{eq:objBLP} \tag{$\star$}\\
\textnormal{subject} & \textnormal{ to:} \nonumber && \\[5pt]
& \sum_{a \in A} p_x(a)=1  &&  x \in X \label{eq:SA1pvcsp} \tag{v\blp1}\\
&  p_x(a) = \sum_{\mathclap{\ba \in A^{\ar(R)},\ a_i=a}} \, p_{R(\bx)}(\ba) &&  a\in A, R(\bx) \in \con{\bfx}, \ i \in [\ar(R)] \mbox{ such that } x_i=x \label{eq:SA2pvcsp} \tag{v\blp2} \\
&  p_{R(\bx)}(\ba) = 0 &&  R(\bx) \in \con{\bfx}, \  \ba \in A^{\ar(R)} \textnormal{ such that } R^\bfa(\ba) = \infty \nonumber\label{eq:SA3pvcsp} \tag{v\blp3}
\end{align}

As for $\saxa$, the objective function -- denoted $\opt^{\sa^1}\xa$ -- is the same as in $\blp\xa$. The variables are subject to all the constraints in $\blp\xa$ and, in addition, they are also subject to the following constraint which (as in the non-valued case) handles the repetitions in the constraints of $\bfx$.
\begin{align}
  & p_{R(\bx)}(\ba) = 0  && \quad  R(\bx) \in \con{\bfx}, \ \ba \in A^{\ar(R)} \colon \exists i,j \in [\ar(R)] \label{eq:SA4pvcsp} \tag{v$\circlearrowleft$} \\
&&& \quad \textnormal{such that } x_i = x_j \textnormal{ and }a_i \neq a_j \nonumber
\end{align} 
Notice that in general $\opt^\blp\xa \leq \opt^{\sa^1}\xa$. Additionally, in the particular case where $\bfx$ has no repetitions, $\blp$ and $\sa^1$ coincide and so $\opt^\blp\xa = \opt^{\sa^1}\xa$.

Moreover, for $\textnormal{L} \in \{\blp,\sa^1\}$, if there exists a rational solution to $\textnormal{L}\xa$ then $\opt^{\textnormal{L}}\xa < \infty$, since $\RA(\ba)=\infty$ implies $p_{R(\bx)}(\ba)=0$ for all $R(\bx)$ and $0\cdot\infty = 0$ (formally, one should skip these summands in (\ref{eq:objBLP})).
If $\textnormal{L}$ is infeasible, then we set $\opt^{\textnormal{L}}\xa = \infty$.

The inner sum in (\ref{eq:objBLP}) is equal to the expected ``cost'' of the constraint $R(\bx)$ with weight $R^{\bfx}(\bx)$ when $\bx$ is evaluated according to this distribution. From this observation it is apparent that $\opt^{\textnormal{L}}\xa \leq \opt \xa$. 
We say that $\textnormal{L}$ \textit{decides} $\pvcsp \ab$ if, for every input structure $\bfx$, we have $\opt \xb \leq \opt^{\textnormal{L}}\xa$. Note that in this case the algorithm for $\pvcsp \ab$ that answers $\yes$ iff $\opt^{\textnormal{L}}\xa \leq \thresh$ (where $\thresh$ is the input threshold) is correct, so the definition makes sense.%

\subsection{Fractional operations} \label{sec:fr-ops}

In this section we introduce the notions of fractional operation, which played a major role in the development of the algebraic approach to Valued CSPs \cite{Cohen2013,Kozik2015}, and discuss some of their basic properties.

An $n$-ary \textit{fractional polymorphism} of a pair of valued $\sig$-structures $\ab$ is a probability distribution $\omega$ on the set $B^{A^n}:=\{f:A^n \to B\}$ such that for every $R \in \sig$ and every list of $n$ tuples  $\textbf{a}_1,\ldots,\textbf{a}_n \in A^{\ar(R)}$ we have that
\begin{displaymath} \sum_{f \in B^{A^n}} \omega(f) R^\bfb(f(\textbf{a}_1,\ldots,\textbf{a}_n)) \leq \frac{1}{n}\sum_{i=1}^n  R^\bfa(\textbf{a}_i)\end{displaymath}
where $f$ is applied to $\textbf{a}_1,\ldots,\textbf{a}_n \in A^{\ar(R)}$ component-wise.\footnote{We use a simpler concept than fractional polymorphism as defined in, e.g., \cite{ViolaZ21}, which will be sufficient for our purposes.}
The \textit{support} of $\omega$ is the set of functions $f:A^n \to B$ such that $\omega(f) >0$. We say that $\omega$ is \textit{symmetric} if every operation in its support if symmetric.

The equivalence of solvability by $\blp$ to invariance under symmetric operations lifts to the promise valued setting.

\begin{theorem}[\cite{ViolaZ21}]  \label{th:LP-pvcsp}
Let $\ab$ be a promise valued template of signature $\sig$. The following are equivalent.
\begin{enumerate}
    \item $\blp$ decides $\pvcsp\ab$; \label{item:4thmain}
    \item $\ab$ has symmetric fractional polymorphisms of every arity. \label{item:5thmain}    
\end{enumerate}
\end{theorem}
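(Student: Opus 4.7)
The plan is to prove $(\ref{item:5thmain}) \Rightarrow (\ref{item:4thmain})$ by direct rounding and $(\ref{item:4thmain}) \Rightarrow (\ref{item:5thmain})$ by LP duality, following the template that is by now standard in the analysis of LP-based algorithms for (P)(V)CSPs.

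For $(\ref{item:5thmain}) \Rightarrow (\ref{item:4thmain})$ I would mimic the construction in the proof of Theorem~\ref{thm:SA1}. Take an optimal rational solution $(p_x(a), p_{R(\bx)}(\ba))$ to $\blp\xa$ of value $V$, and an integer $m$ clearing all its denominators. For each $x \in X$, build a tuple $\bp_x \in A^m$ in which each $a \in A$ appears exactly $mp_x(a)$ times, and for each constraint $R(\bx)$ build the $m \times \ar(R)$ matrix $T(R(\bx))$ whose row $\ba$ appears $mp_{R(\bx)}(\ba)$ times and whose $i$-th column equals $\bp_{x_i}$ after a permutation $\rho_i$ of $[m]$, with $\rho_i = \rho_j$ whenever $x_i = x_j$, exactly as in the proof of Theorem~\ref{thm:SA1}. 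Let $\omega$ be a symmetric $m$-ary fractional polymorphism of $\ab$ provided by (\ref{item:5thmain}), and for each $f$ in its support set $h_f(x) = f(\bp_x[1], \dots, \bp_x[m])$; symmetry makes $h_f(x)$ depend only on the multiset of entries of $\bp_x$. Applying the defining inequality of $\omega$ to the $m$ rows of $T(R(\bx))$ and using symmetry of $f$ to undo the permutations $\rho_i$ on each column yields, for every constraint,
\[
\sum_f \omega(f)\,R^\bfb\bigl(f(\bp_{x_1}), \dots, f(\bp_{x_{\ar(R)}})\bigr) \;\leq\; \sum_{\ba} p_{R(\bx)}(\ba)\,R^\bfa(\ba).
\]
Summing these with weights $R^\bfx(\bx)$ gives $\mathbb{E}_{f \sim \omega}[\val(\bfx, \bfb, h_f)] \leq V$, so some $h_f$ witnesses $\opt\xb \leq V$, as needed.

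For $(\ref{item:4thmain}) \Rightarrow (\ref{item:5thmain})$ I would argue the contrapositive by LP duality. Fix $n$, and view the set of symmetric $n$-ary fractional polymorphisms of $\ab$ as a finite-dimensional polytope defined by linear constraints indexed by pairs $(R, (\ba_1, \dots, \ba_n))$. If the polytope is empty, Farkas' lemma yields non-negative rational weights $w_{R, \ba_1, \dots, \ba_n}$ such that every symmetric $f : A^n \to B$ strictly violates the weighted sum of the defining inequalities. Since both sides of each defining inequality are invariant under the $S_n$-action permuting $(\ba_1, \dots, \ba_n)$ when $f$ is symmetric, averaging $w$ over this action preserves the violation; so we may assume that $w$ depends only on the multiset $\mathcal{M} = \mult{\ba_1, \dots, \ba_n} \in \multiset{A^{\ar(R)}}{n}$. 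Build the free PVCSP instance $\bfx$ with universe $X = \multiset{A}{n}$ and, for each $R$ and each $\mathcal{M}$ with $w_{R, \mathcal{M}} > 0$, a constraint $R(N_1, \dots, N_{\ar(R)})$ of weight $w_{R, \mathcal{M}}$, where $N_i \in X$ is the multiset of $i$-th coordinates of the tuples in $\mathcal{M}$. The natural fractional solution $p_N(a) = \tfrac{1}{n} \cdot (\textnormal{multiplicity of } a \textnormal{ in } N)$ (together with the analogous choice for the constraint variables) exhibits $\opt^\blp\xa \leq \sum_{R, \mathcal{M}} w_{R, \mathcal{M}} \tfrac{1}{n} \sum_{\ba \in \mathcal{M}} R^\bfa(\ba)$; but every $h : X \to B$ corresponds to a symmetric $f : A^n \to B$ via $f(\ba) = h(\mult{\ba})$, and the violated inequality gives $\val(\bfx, \bfb, h) > \sum_{R, \mathcal{M}} w_{R, \mathcal{M}} \tfrac{1}{n} \sum_{\ba \in \mathcal{M}} R^\bfa(\ba)$, whence $\opt\xb > \opt^\blp\xa$, contradicting (\ref{item:4thmain}).

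The main obstacle is securing \emph{symmetry} of the extracted fractional polymorphism in the second direction. A naive LP-duality argument would separate against arbitrary $f : A^n \to B$ and yield only a (possibly non-symmetric) fractional polymorphism. Working on the quotient $X = \multiset{A}{n}$ for the free instance, together with symmetrizing the dual weights over the $S_n$-action, are both essential to ensure that the dual witness is realisable only by symmetric operations, matching the shape of condition (\ref{item:5thmain}).
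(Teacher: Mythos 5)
The paper cites this result to \cite{ViolaZ21} and does not give a proof of its own, so there is no in-paper argument to compare against; I evaluate your sketch on its own terms. It follows the (by now standard) Thapper--Zivny route, and it is essentially correct, with two small but genuine gaps.

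Direction (\ref{item:5thmain})$\Rightarrow$(\ref{item:4thmain}) is sound. One remark: you mention arranging the columns via permutations $\rho_i$ with $\rho_i=\rho_j$ when $x_i=x_j$, imported from the proof of Theorem~\ref{thm:SA1}; but that property relies on the constraint (\ref{eq:SA4pvcsp}), which $\blp$ does not impose, so you cannot assume it. Fortunately you do not need it: by (\ref{eq:SA2pvcsp}), the $j$-th column of $T(R(\bx))$ has multiset of entries $\mult{\bp_{x_j}}$, and the symmetry of each $f$ in the support of $\omega$ already gives $f$ applied to that column equals $f(\bp_{x_j})=h_f(x_j)$, with no permutation bookkeeping. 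So the rounding to $h_f$ and the inequality $\mathbb{E}_{f\sim\omega}[\val(\bfx,\bfb,h_f)]\leq V$ go through as you claim.

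Direction (\ref{item:4thmain})$\Rightarrow$(\ref{item:5thmain}) has the right skeleton but glosses over two points. First, you apply Farkas directly to a system whose coefficients $R^{\bfa}(\ba_i)$ and $R^{\bfb}(f(\ldots))$ may be $\infty$; before invoking duality you must discard the trivially-satisfied constraints with infinite right-hand side and restrict the variable set to operations $f$ whose images have finite $R^{\bfb}$-cost on all remaining constraints, exactly as is done in the proof of Proposition~\ref{prop:FH} in the paper (this restriction is also what guarantees (\ref{eq:SA3pvcsp}) holds for your later fractional solution). Second, the map from $\mathcal{M}\in\multiset{A^{\ar(R)}}{n}$ to the scope $(N_1,\dots,N_{\ar(R)})\in X^{\ar(R)}$ is not injective (e.g.\ $\mult{(a,b),(c,d)}$ and $\mult{(a,d),(c,b)}$ both give $(\mult{a,c},\mult{b,d})$), so several multisets $\mathcal{M}$ can collide on the same constraint. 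You need to define $R^{\bfx}(\by)=\sum_{\mathcal{M}\mapsto\by}w_{R,\mathcal{M}}$ and take for $p_{R(\by)}$ the $w$-weighted mixture of the per-$\mathcal{M}$ distributions; with this one can check (\ref{eq:SA2pvcsp}) still reduces to $p_N(a)=\tfrac{1}{n}\mathrm{mult}_N(a)$ and the objective computation still evaluates to $\sum_{R,\mathcal{M}}w_{R,\mathcal{M}}\tfrac{1}{n}\sum_{\ba\in\mathcal{M}}R^{\bfa}(\ba)$, and, since any two colliding $\mathcal{M}$'s yield the same $f(\mathcal{M})=h(\by)$, the comparison against $\val(\bfx,\bfb,h)$ is unaffected. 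With these two repairs the argument is complete, and your observation that passing to $X=\multiset{A}{n}$ together with averaging $w$ over the $S_n$-action is what forces symmetry of the extracted polymorphism is exactly the key point.
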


\begin{example} \label{ex:pvcsp}
A CSP that can be decided by $\blp$ is e.g. the Horn-3-Sat, where the template has domain $\{\mathrm{true}, \mathrm{false}\}$ and two ternary relations defined by $\neg x \vee \neg y \vee \neg z$ and $\neg x \vee \neg y \vee z$ (as well as both constants). A simple PCSP template decidable by $\blp$ is e.g. $\pcsp \ab$ where $A$, $B$ are ordered sets, $|A| \leq |B|$, and $\bfa$ and $\bfb$ both have a single binary relation defined by $x < y$. A well-known class of templates with $\blp$-decidable VCSPs are those that contain only submodular valued relations (see~\cite{Kolmogorov13,KZ_vcsps}). Finally, the 2-approximation of the Vertex Cover problem~\cite{KZ_vcsps} is a PVCSP decidable by $\blp$. In all the mentioned examples, it is not hard to find symmetric (fractional) polymorphisms of every arity.
\end{example}

A \textit{fractional homomorphism} \cite{ThapperZ12,ViolaZ21} from $\bfa$ to $\bfb$  is a unary fractional polymorphism of $\ab$, or equivalently, a probability distribution $\mu$ over $B^{A}$ such that for every $R \in \sig$ and every $\textbf{a} \in A^{\ar(R)}$ we have that
\begin{displaymath} 
\sum_{f \in B^{A}} \mu(f) R^\bfb(f(\textbf{a})) \leq R^\bfa(\textbf{a}).
\end{displaymath}

If there exists a fractional homomorphism from $\bfa$ to $\bfb$, we say that $\bfa$ is  fractionally homomorphic to $\bfb$ and we write $\bfa \to_f \bfb$. 

The following is a characterization of PVCSP templates in terms of fractional homomorphisms.

\begin{proposition} \label{prop:FH}
For any two valued $\sig$-structures $\bfa$ and $\bfb$, the following are equivalent.
\begin{enumerate}
    \item There exists a fractional homomorphism from $\bfa$ to $\bfb$. \label{item:1FH}
    \item For all non-negative finite-valued $\sig$-structures $\bfx$, $\opt(\bfx, \bfb) \leq \opt(\bfx, \bfa)$.\label{item:2FH}
    \end{enumerate}
\end{proposition}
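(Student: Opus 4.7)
The plan is to prove both directions separately: (\ref{item:1FH}) $\Rightarrow$ (\ref{item:2FH}) by a direct averaging argument, and (\ref{item:2FH}) $\Rightarrow$ (\ref{item:1FH}) by contrapositive via LP duality.

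For (\ref{item:1FH}) $\Rightarrow$ (\ref{item:2FH}), I would fix a fractional homomorphism $\mu$ from $\bfa$ to $\bfb$, a non-negative finite-valued structure $\bfx$, and a map $h : X \to A$ with $\val(\bfx, \bfa, h) < \infty$ (if no such $h$ exists then $\opt(\bfx, \bfa) = \infty$ and the claim is trivial). Swapping the order of summation and applying the fractional homomorphism inequality with $\ba := h(\bx)$ (using $R^\bfx(\bx) \geq 0$) gives
\begin{align*}
\sum_{f} \mu(f)\, \val(\bfx, \bfb, f \circ h)
&= \sum_{R, \bx} R^\bfx(\bx) \sum_{f} \mu(f)\, R^\bfb(f(h(\bx))) \\
&\leq \sum_{R, \bx} R^\bfx(\bx)\, R^\bfa(h(\bx)) \;=\; \val(\bfx, \bfa, h).
\end{align*}
Since this average is at most $\val(\bfx, \bfa, h)$, some $f$ in the support of $\mu$ achieves $\val(\bfx, \bfb, f \circ h) \leq \val(\bfx, \bfa, h)$, and minimizing over $h$ gives $\opt(\bfx, \bfb) \leq \opt(\bfx, \bfa)$.

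For (\ref{item:2FH}) $\Rightarrow$ (\ref{item:1FH}) I would argue by contrapositive. The existence of a fractional homomorphism is feasibility of a finite-dimensional rational LP: find $\mu \geq 0$ with $\sum_f \mu(f) = 1$ and $\sum_f \mu(f)\, R^\bfb(f(\ba)) \leq R^\bfa(\ba)$ for all $R \in \sig$ and $\ba \in A^{\ar(R)}$ (constraints with $R^\bfa(\ba) = \infty$ being vacuous). The technical subtlety is that $R^\bfb(f(\ba))$ may equal $\infty$; call $f : A \to B$ \emph{admissible} if $R^\bfb(f(\ba)) < \infty$ whenever $R^\bfa(\ba) < \infty$. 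Any feasible $\mu$ must be supported on admissible maps, so I restrict the LP to this finite collection, after which all coefficients are finite. If there is no admissible $f$ at all, set $X = A$ and $R^\bfx(\ba) := 1$ when $R^\bfa(\ba) < \infty$ and $0$ otherwise; then $\val(\bfx, \bfa, \mathrm{id}_A)$ is finite while $\val(\bfx, \bfb, f) = \infty$ for every $f$, contradicting (\ref{item:2FH}).

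Otherwise the restricted LP has all rational finite data and Farkas' lemma yields $y \in \mathbb{Q}$ and $z_{R, \ba} \in \qq$ (indexed by pairs with $R^\bfa(\ba) < \infty$) satisfying $y + \sum_{R, \ba} z_{R, \ba}\, R^\bfb(f(\ba)) \geq 0$ for every admissible $f$ and $y + \sum_{R, \ba} z_{R, \ba}\, R^\bfa(\ba) < 0$. I define $\bfx$ on universe $X = A$ by $R^\bfx(\ba) := z_{R, \ba} + \varepsilon$ when $R^\bfa(\ba) < \infty$ and $R^\bfx(\ba) := 0$ otherwise, choosing a small rational $\varepsilon > 0$ so that $\val(\bfx, \bfa, \mathrm{id}_A) < -y$ is preserved. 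Then for admissible $f$ the Farkas inequality, together with the non-negativity of the $\varepsilon$-contribution, gives $\val(\bfx, \bfb, f) \geq -y$; for non-admissible $f$, the positive weight $\varepsilon$ placed on a witnessing $(R, \ba)$ forces $\val(\bfx, \bfb, f) = \infty$. Hence $\opt(\bfx, \bfa) < -y \leq \opt(\bfx, \bfb)$, contradicting (\ref{item:2FH}). The main obstacle is the careful bookkeeping around $\infty$-valued entries—the admissibility restriction that turns the LP into finite rational data to which Farkas applies, and the $\varepsilon$-perturbation that simultaneously rules out non-admissible $f$ as cheap maps to $\bfb$; once these are handled, both directions reduce to standard LP reasoning.
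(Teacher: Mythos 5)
Both directions follow essentially the same route as the paper: the first by the standard averaging argument, the second by contrapositive via Farkas' lemma on the finite LP (after restricting to admissible maps to make all coefficients finite). However, you are more careful than the paper about two points. First, you separately handle the degenerate case where no admissible $f$ exists; this matters, because in that case the Farkas dual of the restricted LP has only the single strict inequality, so $x=0$ is a valid Farkas solution, and the paper's claim that the extra constraints for $f\notin B^A_{<\infty}$ are ``trivially satisfied'' fails (they read $0>0$). Second, and more generally, your $\varepsilon$-perturbation of the Farkas certificate (setting $R^\bfx(\ba)=z_{R,\ba}+\varepsilon$) guarantees that every non-admissible $f$ has $\val(\bfx,\bfb,f)=\infty$, whereas the paper's unperturbed $\bfx$ may give a finite and small $\val(\bfx,\bfb,f)$ for a non-admissible $f$ whenever the witnessing pairs all carry weight $x_{R,\ba}=0$. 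In this sense your version repairs a genuine gap in the published argument.

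One inaccuracy in your justification: you appeal to ``the non-negativity of the $\varepsilon$-contribution'' for admissible $f$, but $\varepsilon\sum_{R,\ba}R^\bfb(f(\ba))$ need not be non-negative, because $R^\bfb$ is $\qinfty$-valued and may take negative rational values. The clean fix is to observe that for every admissible $f$ the Farkas certificate gives a \emph{strict} gap $\sum_{R,\ba}z_{R,\ba}\bigl(R^\bfb(f(\ba))-R^\bfa(\ba)\bigr)>0$; since there are only finitely many admissible $f$, you can choose $\varepsilon>0$ small enough that the perturbed gap $\sum_{R,\ba}(z_{R,\ba}+\varepsilon)\bigl(R^\bfb(f(\ba))-R^\bfa(\ba)\bigr)$ stays positive for all of them simultaneously, and then the conclusion $\opt(\bfx,\bfb)>\opt(\bfx,\bfa)$ follows. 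With that adjustment the argument is complete.
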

\begin{proof} (\ref{item:1FH}) $\Rightarrow$ (\ref{item:2FH}). Let $\mu$ be a fractional homomorphism from $\bfa$ to $\bfb$, let $g:X \to A$ be such that $\opt(\bfx,\bfa) = \val(\bfx,\bfa,g)$, and let $f \in B^A$ be some map that minimizes $\val(\bfx,\bfb,f\circ g)$. Then \begin{align*}
    \opt(\bfx,\bfb) & \leq \val(\bfx,\bfb,f\circ g)  \leq \sum_{f' \in B^A} \mu(f') \val(\bfx,\bfb,f'\circ g) \\
    & = \sum_{R \in \sig} \sum_{\bx \in X^{\ar(R)}} R^\bfx(\bx) \sum_{f' \in B^A} \mu(f') R^\bfb(f' \circ g(\bx))\\
   & \leq \sum_{R \in \sig} \sum_{\bx \in X^{\ar(R)}} R^\bfx(\bx) R^\bfa(g(\bx)) = \val(\bfx,\bfa,g) = \opt(\bfx,\bfa).
\end{align*}

$(\ref{item:2FH}) \Rightarrow (\ref{item:1FH})$. The idea for this proof is to assume that there is no fractional homomorphism from $\bfa$ to $\bfb$, formulate this fact as infeasibility of a system of linear inequalities, and then use a version of Farkas' Lemma~\cite{farkas} to find a valued structure $\bfx$ with $\opt\xb > \opt \xa$.

The existence of a fractional homomorphism from $\bfa$ to $\bfb$  can be reformulated as the following system of linear inequalities, where there is a rational-valued variable $\mu_f$ for every $f \in B^A$.
\begin{subequations}\label{eq:fh1-program}
\begin{align}
\textnormal{variables: }& \quad \mu_f  \textnormal{ for all } f \in B^A  \nonumber \\
\textnormal{constraints: }& \quad \sum_{f \in B^A} \mu_f R^{\bfb}(f(\textbf{a})) \leq R^{\bfa}(\textbf{a})  \quad \textnormal{ for all } R \in \sig \textnormal{ and } \textbf{a} \in A^{\ar(R)} \label{eq:fh1} \\
& \quad \sum_{f \in B^A} \mu_f \geq 1 \\
& \quad \mu_f \geq 0  \quad \textnormal{for all } f \in B^A.
\end{align}
\end{subequations}

If there is no fractional homomorphism from $\bfa$ to $\bfb$, the system (\ref{eq:fh1-program}) is infeasible. 

We now deal with infinite coefficients.
Define $B^A_{<\infty}= \{f \in B^A: \forall R \in \sig, \ \forall \ba \in A^{\ar(R)}, \  \RA(\ba) < \infty \textnormal{ implies } \RB(f(\ba))<\infty\}$.
Now consider the new linear system obtained from (\ref{eq:fh1-program}) by first removing all the inequalities in (\ref{eq:fh1}) where $\RA(\ba)=\infty$ (since these inequalities are always satisfied), and second, by removing from (\ref{eq:fh1-program}) the variable $\mu_f$ for all $f \in B^A \setminus B^A_{<\infty}$ and changing (\ref{eq:fh1}) so that the sums run over  $B^A_{<\infty}$ only (since we need to have $\mu_f=0$ for $f \in B^A \setminus B^A_{<\infty}$ in any feasible solution). Clearly, the system of linear inequalities resulting from this procedure remains infeasible and does not contain infinite coefficients.

This system of linear inequalities can be rewritten in matrix form as $M \textbf{f} \leq \textbf{a}$ subject to $\textbf{f} \geq 0$, where $\textbf{f} \in \mathbb{Q}_{\geq}^{ B^A_{<\infty}}$ is the vector of unknowns, and $M$ is a real-valued matrix. By Farkas' Lemma, if the program (\ref{eq:fh1-program}) is not feasible, then
the system of inequalities $M^T \textbf{y} \geq 0$ subject to $\textbf{a}^T \textbf{y} < 0$ and $\textbf{y} \geq 0$ is feasible. Explicitly, the latter system is the following.
\begin{subequations}
\begin{align} \label{eq:BTy0}
 \textnormal{variables: }& \quad y, x_{R,\ba} \textnormal{ for every } R \in \sig \textnormal{ and }\textbf{a} \in A^{\ar(R)} \textnormal{ with } \RA(\ba)< \infty \nonumber \\
 \textnormal{constraints: }& \quad  \sum_{R \in \sig}\sum_{\substack{\textbf{a} \in A^{\ar(R)}\\ \RA(\ba)< \infty}} x_{R,\ba} R^{\bfb}(f(\textbf{a})) \geq y \quad \quad \textnormal{for all } f \in  B^A_{<\infty} \\ 
    & \quad \sum_{R \in \sig} \sum_{\substack{\textbf{a} \in A^{\ar(R)}\\ \RA(\ba)< \infty}} x_{R,\ba} R^{\bfa}(\textbf{a}) < y \\
    & \quad y \geq 0, \ x_{R,\ba} \geq 0 \quad \textnormal{for all } R \in \sig , \textbf{a} \in A^{\ar(R)}.
\end{align}
\end{subequations}
Eliminating $y$, and adding trivially satisfied constraints to (\ref{eq:BTy0}) for all $f \in B^A \setminus B^A_{<\infty}$, we get that the following system is feasible.

\begin{subequations} \label{eq:BTy0reduced-program}
\begin{align}
    \textnormal{variables: }& \quad \mathrlap{ x_{R,\ba} \textnormal{ for every } R \in \sig \textnormal{ and }\textbf{a} \in A^{\ar(R)} \textnormal{ with } \RA(\ba)< \infty }\nonumber \\
    \textnormal{constraints: }& \quad \sum_{R \in \sig} \sum_{\substack{\textbf{a} \in A^{\ar(R)}\\ \RA(\ba)< \infty}} x_{R,\ba} R^{\bfb}(f(\textbf{a})) >  \sum_{R \in \sig} \sum_{\substack{\textbf{a} \in A^{\ar(R)}\\ \RA(\ba)< \infty}} x_{R,\ba} R^{\bfa}(\textbf{a}) \textnormal{ for all } f \in  B^A   \label{eq:BTy0reduced-ineq1}\\ 
    & \quad \mathrlap{x_{R,\ba} \geq 0 \quad \textnormal{for all } R \in \sig, \ \textbf{a} \in A^{\ar(R)}.}
\end{align}
\end{subequations}

Let $x_{R,\ba}$ for $R \in \sig$, $\ba \in A^r$ be a feasible solution to (\ref{eq:BTy0reduced-program}), and consider the structure $\bfx$ with domain $X=A$ and relations given by $R^{\bfx}(\textbf{a})=x_{R,\ba}$ for $\textbf{a} \in A^{\ar(R)}$ with $\RA(\ba)< \infty$ and $R^{\bfx}(\textbf{a})=0$ whenever $\RA(\ba)= \infty$. Notice that $\bfx$ is non-negative finite-valued, that the right-hand side in (\ref{eq:BTy0reduced-ineq1}) is equal to $\val(\bfx,\bfa,\mathrm{id})$ (where $\mathrm{id}$ denotes the identity function), and that the left-hand side is equal to $\val(\bfx,\bfb,f)$. 
Therefore $\opt\xb > \opt \xa$, as required.
\end{proof}

The valued version of the decomposition theorem for $\sa^1$ uses a concept that is ``dual'' to  fractional homomorphism, as suggested by Proposition~\ref{prop:dualFH} below. Even though a similar result has appeared elsewhere (see \cite[Proposition 3.6]{Carbonnel2022otherside}), we provide a proof for the sake of completeness. In fact, only the forward implication $(\ref{item:1dualFH}) \Rightarrow (\ref{item:2dualFH})$ is needed for our results.

We define a \textit{dual fractional homomorphism} from $\bfx$ to $\bfy$ ($\bfx \to_{df} \bfy$) to be a probability distribution $\eta$ over $Y^X$ such that for every $R \in \sig$ and every $\by \in Y^{\ar(R)}$ we have that
\begin{displaymath}
 R^{\bfy}(\by) \geq \sum_{f \in Y^X} \eta(f) \sum_{\substack{\bx \in X^{\ar(R)} \\ \by = f(\bx)}} R^{\bfx}(\bx).\end{displaymath}
\begin{proposition}[\cite{Carbonnel2022otherside}] \label{prop:dualFH} For any two non-negative finite-valued $\sig$-structures $\bfx$ and $\bfy$, the following are equivalent.
\begin{enumerate}
    \item There exists a dual fractional homomorphism from $\bfx$ to $\bfy$. \label{item:1dualFH}
    \item For all valued $\sig$-structures $\bfa$, $\opt(\bfx,\bfa) \leq \opt(\bfy,\bfa)$. \label{item:2dualFH}
    \end{enumerate}
\end{proposition}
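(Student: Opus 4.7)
The plan is to mirror the proof of Proposition~\ref{prop:FH} given in the excerpt. For the easier direction $(\ref{item:1dualFH}) \Rightarrow (\ref{item:2dualFH})$, I would fix a dual fractional homomorphism $\eta$ from $\bfx$ to $\bfy$, an arbitrary valued $\sig$-structure $\bfa$, and an optimal map $g\colon Y\to A$ realising $\val(\bfy,\bfa,g)=\opt(\bfy,\bfa)$; then I would average $\val(\bfx,\bfa,g\circ f)$ against $\eta(f)$. Grouping tuples $\bx$ according to their image $\by=f(\bx)$, the defining inequality gives
\begin{displaymath}
\sum_{f\in Y^X}\eta(f)\val(\bfx,\bfa,g\circ f)=\sum_R\sum_{\by}R^{\bfa}(g(\by))\sum_f\eta(f)\sum_{\substack{\bx\in X^{\ar(R)}\\ f(\bx)=\by}}R^{\bfx}(\bx)\leq\val(\bfy,\bfa,g),
\end{displaymath}
so at least one $f^*\in Y^X$ in the support of $\eta$ satisfies $\val(\bfx,\bfa,g\circ f^*)\leq\opt(\bfy,\bfa)$, yielding $\opt(\bfx,\bfa)\leq\opt(\bfy,\bfa)$.

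For the converse I would argue by contrapositive, producing a witnessing $\bfa$ from the absence of $\eta$. The existence of $\eta$ is captured by the feasibility of the linear system with variables $\eta_f\geq 0$ for $f\in Y^X$, subject to $\sum_f\eta_f\geq 1$ and, for every $R\in\sig$ and $\by\in Y^{\ar(R)}$,
\begin{displaymath}
\sum_{f\in Y^X}\eta_f\sum_{\substack{\bx\in X^{\ar(R)}\\ f(\bx)=\by}}R^{\bfx}(\bx)\leq R^{\bfy}(\by).
\end{displaymath}
Since $\bfx$ and $\bfy$ are non-negative finite-valued, all coefficients are finite non-negative rationals, so Farkas' Lemma applies directly, without the technical $\infty$-elimination step needed in Proposition~\ref{prop:FH}. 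Infeasibility produces rationals $y_0\geq 0$ and $x_{R,\by}\geq 0$ such that $y_0>\sum_{R,\by}x_{R,\by}R^{\bfy}(\by)$ and, for every $f\colon X\to Y$, $y_0\leq\sum_{R,\by}x_{R,\by}\sum_{\bx\colon f(\bx)=\by}R^{\bfx}(\bx)$. Non-negativity of the right-hand side of the strict inequality forces $y_0>0$.

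To conclude, I would define the valued structure $\bfa$ on universe $A=Y$ by $R^{\bfa}(\by):=x_{R,\by}$. A direct rearrangement shows that $\sum_{R,\by}x_{R,\by}\sum_{\bx\colon f(\bx)=\by}R^{\bfx}(\bx)=\val(\bfx,\bfa,f)$ for every $f\colon X\to A$, while $\sum_{R,\by}x_{R,\by}R^{\bfy}(\by)=\val(\bfy,\bfa,\mathrm{id}_Y)$. Hence $\opt(\bfx,\bfa)\geq y_0>\val(\bfy,\bfa,\mathrm{id}_Y)\geq\opt(\bfy,\bfa)$, contradicting (\ref{item:2dualFH}). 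The part I would have to be most careful with is the bookkeeping with $0\cdot\infty$ in the forward direction: when $R^{\bfa}(g(\by))=\infty$, either $R^{\bfy}(\by)>0$ and the target value is already $\infty$ (the desired inequality is trivial), or $R^{\bfy}(\by)=0$ and the defining inequality forces the corresponding inner sum to vanish, so the convention $0\cdot\infty=0$ keeps both sides consistent.
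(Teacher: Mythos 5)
Your proposal is correct and follows essentially the same route as the paper: for the forward implication, average $\val(\bfx,\bfa,g\circ f)$ against $\eta$ for an optimal $g$ and conclude by existence of a good $f$ in the support; for the converse, formulate the existence of a dual fractional homomorphism as a finite-coefficient LP, apply Farkas' Lemma, and read off a witnessing valued structure $\bfa$ on universe $Y$ with $R^\bfa(\by)=x_{R,\by}$. The only difference is cosmetic (your $y_0$ versus the paper's $t$) plus your extra, correct bookkeeping of the $0\cdot\infty$ cases, which the paper dispatches with a single remark that the argument holds for general-valued $\bfa$.
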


\begin{proof}
$(\ref{item:1dualFH}) \Rightarrow (\ref{item:2dualFH})$. Let $\eta$ be a dual fractional homomorphism from $\bfx$ to $\bfy$, and $g:Y \to A$ be such that $\opt(\bfy,\bfa) = \val(\bfy,\bfa,g)$. Then
\begin{align*}
    \opt(\bfy,\bfa) & =  \sum_{R \in \sig} \sum_{\by \in Y^{\ar(R)}} R^\bfy(\by) R^\bfa(g(\by)) \\
     & \geq \sum_{R \in \sig} \sum_{\by \in Y^{\ar(R)}} \sum_{f \in Y^X} \eta(f) \sum_{\substack{\bx \in X^{\ar(R)} \\ \by = f(\bx)}} R^{\bfx}(\bx)  R^\bfa(g \circ f(\bx)) \\
     & = \sum_{f \in Y^X} \eta(f) \val(\bfx,\bfa,g \circ f),
\end{align*}
which implies that there exists some function $f':X \to Y$ such that $ \val(\bfx,\bfa,g \circ f') \leq  \opt(\bfy,\bfa) $, hence $ \opt(\bfx,\bfa) \leq  \opt(\bfy,\bfa)$ as required. Notice that this holds regardless of whether $\bfa$ is finite-valued or general-valued. 

$(\ref{item:2dualFH}) \Rightarrow (\ref{item:1dualFH})$. 
The contrapositive is proved in a similar way to Proposition~\ref{prop:FH}. The proof is in fact somewhat simpler since we do not need to deal with infinite values. 

The existence of a dual fractional homomorphism from $\bfx$ to $\bfy$ can be reformulated as the following linear program: 

\begin{subequations} \label{eq:dualfh1-program}
\begin{align}
  \textnormal{variables: }& \quad \eta_f  \textnormal{ for all }  f \in Y^X \nonumber\\
  \textnormal{constraints: }& \quad  \sum_{f \in Y^X} \eta_f \sum_{\substack{\bx \in X^r\\ \by=f(\bx)}} R^\bfx(\textbf{x}) \leq R^{\bfy}(\by) \quad \textnormal{for all } R \in \sig \textnormal{ and } \textbf{y} \in Y^{\ar(R)}  \label{eq:dualfh1} \\
& \quad \sum_{f \in Y^X} \eta_f \geq 1 \\
& \quad \eta_f \geq 0 \quad \textnormal{for all } f \in Y^X.
\end{align}
\end{subequations}

Assume that there is no dual fractional homomorphism from $\bfx$ to $\bfy$. Then, the linear system (\ref{eq:dualfh1-program}) is infeasible.

Notice that $\bfx$ and $\bfy$ are assumed to be finite-valued, so all the coefficients in this system of linear inequalities are finite-valued too. Then, (\ref{eq:dualfh1-program}) can be rewritten in matrix form as $M \textbf{f} \leq \by$ subject to $\textbf{f} \geq 0$, where $\textbf{f} \in \mathbb{Q}_{\geq}^{Y^X}$ is the vector of unknowns, and $M$ is a real-valued matrix.  By Farkas' Lemma, if (\ref{eq:dualfh1-program}) is not feasible, then the linear program $M^T \textbf{z} \geq 0$ subject to $\textbf{y}^T \textbf{z} < 0$ and $\textbf{z} \geq 0$ is feasible. Explicitly, this program is the following.
\begin{subequations}
\begin{align} \label{eq:dualBTy0}
\textnormal{variables: }& \quad t, z_{R,\bu} \textnormal{ for every } R \in \sig \textnormal{ and }\textbf{y} \in Y^{\ar(R)}  \nonumber \\
\textnormal{constraints: }& \quad  \sum_{R \in \sig}\sum_{\textbf{y} \in Y^{\ar(R)}} \mkern-8mu z_{R,\by} \mkern-12mu \sum_{\substack{\bx \in X^{\ar(R)}\\ \by=f(\bx)}} \mkern-8mu R^{\bfx}(\textbf{x}) \geq t \quad  \textnormal{for all } f \in  Y^X  \\
    & \quad \sum_{R \in \sig} \sum_{\textbf{y} \in Y^{\ar(R)}} z_{R,\by} R^{\bfy}(\textbf{y}) < t\\
    & \quad  t \geq 0, \ z_{R,\by} \geq 0 \quad \textnormal{for all } R \in \sig , \textbf{y} \in Y^{\ar(R)}.
\end{align}
\end{subequations}
Eliminating $t$ and rearranging we get that the following program is feasible.
\begin{subequations} \label{eq:dualBTy0reduced-program}
\begin{align} \label{eq:dualBTy0reduced}
    \textnormal{variables: }& \mathrlap{\quad z_{R,\by} \textnormal{ for every } R \in \sig \textnormal{ and }\textbf{y} \in Y^{\ar(R)}} \nonumber \\
    \textnormal{constraints: }& \quad  \sum_{R \in \sig}\sum_{\textbf{y} \in Y^{\ar(R)}} \mkern-8mu z_{R,\by} \mkern-8mu \sum_{\substack{\bx \in X^{\ar(R)}\\ \by=f(\bx)}}  \mkern-8mu R^{\bfx}(\textbf{x}) >   \sum_{R \in \sig} \sum_{\textbf{y} \in Y^{\ar(R)}} \mkern-12mu z_{R,\by} R^{\bfy}(\textbf{y}) \quad \textnormal{ for all } f \in  Y^X \\
    & \mathrlap{\quad z_{R,\by} \geq 0 \quad \textnormal{ for } R \in \sig , \textbf{y} \in Y^{\ar(R)}.}
\end{align}
\end{subequations}

Let $z_{R,\by}$ for $R \in \sig$, $\by \in Y^{\ar(R)}$ be a feasible solution to (\ref{eq:dualBTy0reduced-program}), and consider the valued structure $\bfa$ with domain $A=Y$ and valued relations given by $R^\bfa(\textbf{y})=z_{R,\by}$ for each $\textbf{y} \in Y^{\ar(R)}$. 
For all $f \in Y^X$, we have
\begin{align*}
\val(\bfy,\bfa,\mathrm{id}) & = \sum_{R \in \sig} \sum_{\textbf{y} \in Y^{\ar(R)}} R^\bfa(\textbf{y}) R^{\bfy}(\textbf{y}) \\ &
< \sum_{R \in \sig}\sum_{\textbf{y} \in Y^{\ar(R)}} R^\bfa(\textbf{y}) \sum_{\substack{\bx \in X^{\ar(R)}\\ \by=f(\bx)}} R^{\bfx}(\textbf{x})= \val(\bfx,\bfa,f),
\end{align*}
therefore $\opt(\bfy,\bfa) < \opt(\bfx,\bfa)$, which is the negation of (\ref{item:2dualFH}).
\end{proof}

\subsection{The decomposition theorem for valued structures}
Finally, we state the decomposition theorem for valued structures. This will provide a connection between the combinatorial and the LP-based characterizations of the class of Weisfeiler-Leman invariant PVCSPs.

\begin{restatable}{theorem}{thmDecomp} \label{thm:Decomp}
Let $\bfx$, $\bfa$ be a pair of similar valued structures, where $\bfx$ is non-negative and finite-valued. 
Then there exist non-negative finite-valued structures $\yone,\ytwo$ such that
\begin{enumerate}
    \item $\bfx \to_{df} \yone$,
    \item $\yone \eqone \ytwo$, and
    \item $\opt(\ytwo,\bfa)\leq \opt^{\sa^1}\xa$.
\end{enumerate}
\end{restatable}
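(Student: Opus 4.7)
The plan is to mirror the construction of $\yone, \ytwo$ in the proof of $(\ref{1SA1Item}) \Rightarrow (\ref{4SA1Item})$ of Theorem \ref{thm:SA1}, adapted to the valued setting by introducing a $1/m$ weight scaling. If $\sa^1\xa$ is infeasible, then $\opt^{\sa^1}\xa = \infty$ and the claim is trivial by taking $\yone = \ytwo = \bfx$ with $\eta$ concentrated on the identity map on $X$. So assume $\sa^1\xa$ is feasible and fix a rational optimal solution $\{p_x(a), p_{R(\bx)}(\ba)\}$ together with an integer $m > 0$ such that all $mp_x(a)$ and $mp_{R(\bx)}(\ba)$ are non-negative integers.

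First I set $Y_1 = Y_2 := [m] \times X$ and define $\yone$ and $\ytwo$ exactly as in the crisp proof, but assigning each distinguished tuple the weight $R^{\bfx}(\bx)/m$ (where $\bx$ is the underlying $X$-tuple) instead of $1$. Concretely, $R^{\yone}((k, \bx[1]), \dots, (k, \bx[r])) := R^{\bfx}(\bx)/m$ for each $k \in [m]$ and $\bx \in X^{\ar(R)}$ (and $0$ for all other tuples), while $R^{\ytwo}$ assigns the weight $R^{\bfx}(\bx)/m$ to the tuple $((\rho^{R(\bx)}_1(k), \bx[1]), \dots, (\rho^{R(\bx)}_r(k), \bx[r]))$ for each $R(\bx) \in \con{\bfx}$ and $k \in [m]$, with the permutations $\rho^{R(\bx)}_i$ and the matrices $T(R(\bx))$ chosen exactly as in the crisp proof. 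This construction remains well-defined in the valued setting because (\ref{eq:SA3pvcsp}) forces $p_{R(\bx)}(\ba) = 0$ whenever $R^\bfa(\ba) = \infty$, so every row of every $T(R(\bx))$ has finite $R^\bfa$-cost.

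The dual fractional homomorphism $\bfx \to_{df} \yone$ is witnessed by $\eta(h_k) := 1/m$ for the inclusion maps $h_k(x) := (k, x)$. Indeed, for $\by = ((k, \bx[1]), \dots, (k, \bx[r]))$ the right-hand side of the defining inequality equals $(1/m) R^{\bfx}(\bx) = R^{\yone}(\by)$, and for all other $\by$ both sides vanish. For $\yone \eqone \ytwo$, property (\ref{property:2}) of the permutations ensures that at each $(k, x)$ the multiset of incident edge labels $(S, R, R^{\bfx}(\bx)/m)$ is the same in $\yone$ and $\ytwo$, and the constraint in $\ytwo$ corresponding to a given $R(\bx) \in \con{\bfx}$ has the same multiset of $X$-coordinates among its neighbours as its $\yone$-counterpart. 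A straightforward induction on the iteration level then yields $\delta^{\yone}_j(k, x) = \delta^{\ytwo}_j(k, x)$ for every $(k, x)$, hence $\yone \eqone \ytwo$.

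Finally, define $h \colon Y_2 \to A$ by $h(k, x) := \bp_x[k]$. By property (\ref{property:1}), the image under $h$ of the $\ytwo$-tuple indexed by $(R(\bx), k)$ is the $k\ith$ row of $T(R(\bx))$, and by construction the row $\ba$ appears in $T(R(\bx))$ exactly $mp_{R(\bx)}(\ba)$ times; hence
\begin{align*}
\val(\ytwo, \bfa, h) &= \sum_{R(\bx) \in \con{\bfx}} \frac{R^{\bfx}(\bx)}{m} \sum_{\ba \in A^{\ar(R)}} m\, p_{R(\bx)}(\ba)\, R^{\bfa}(\ba) = \opt^{\sa^1}\xa,
\end{align*}
which gives $\opt(\ytwo, \bfa) \leq \opt^{\sa^1}\xa$. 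The main subtlety is the $1/m$ weight scaling, which simultaneously makes the uniform distribution on the $h_k$ a valid dual fractional homomorphism and cancels the factor $m$ coming from summing $R^\bfa$ over the $m$ rows of each $T(R(\bx))$.
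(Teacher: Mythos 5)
Your proof is correct and follows essentially the same construction as the paper's: scale the weights in the crisp $\yone,\ytwo$ construction of Theorem~\ref{thm:SA1} by $1/m$, witness $\bfx \to_{df} \yone$ by the uniform distribution on the $m$ inclusion maps, and observe that the scaled labels preserve $\eqone$-equivalence while the homomorphism $h$ from the crisp proof certifies $\val(\ytwo,\bfa,h) = \opt^{\sa^1}\xa$. You are slightly more careful than the paper in a few places (explicitly taking an \emph{optimal} rational solution, which is indeed needed for the conclusion, indexing the permutations by $R(\bx)$, and noting that (\ref{eq:SA3pvcsp}) keeps the rows of $T(R(\bx))$ finite-cost), but these are clarifications rather than a different argument.
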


\begin{proof}
If $\saxa$ is not feasible, then we can take $\yone=\ytwo=\bfx$, and the statement follows trivially, so from now on we shall assume that $\saxa$ is feasible. In this case, the construction of the valued structures $\yone$ and $\ytwo$ is analogous to the construction of Theorem \ref{thm:SA1}, except that one has to additionally take into consideration the weights of the constraints in $\bfx$. In particular, let $m>0$ be the least common denominator of a rational solution to $\sa^1\xa$. For every constraint $R(\bx) \in \con\bfx$ of arity $r$, let 
$\rho_1, \ldots, \rho_r$ be the permutations of $[m]$ obtained as in the proof of Theorem \ref{thm:SA1} so that they satisfy conditions (\ref{property:1}) and (\ref{property:2}) in said proof. Then, for each $k \in [m]$ we set $R^{\yone}((k,\bx[1]),\ldots,(k,\bx[r])) = R^{\ytwo}((\rho_1(k),\bx[1]),\ldots,(\rho_r(k),\bx[r]))=1/m \cdot R^{\bfx}(\bx)$, and all the other tuples in $\yone$ and $\ytwo$ are given zero weight.

Then $\bfx \to_{df} \yone$ by the dual fractional homomorphism given by the uniform distribution over $f_k$, $k \in [m]$, where $f_k: X \to Y_1$ is defined by $f_k(x) = (k,x)$ for all $x \in X$.

On the other hand, let $h: Y_2 \to A$ be the same map that gives a homomorphism from $\bfy_2$ to $\bfa$ in the crisp setting.
For each $R(\bx) \in \con\bfx$ the weights of the tuples that correspond to $R(\bx)$ are selected so that their contribution to $\val(\ytwo,\bfa,h)$ is equal to the inner sum in the objective function (\ref{eq:objBLP}) of $\sa^1\xa$; therefore, the total value of $h$ for $(\ytwo,\bfa)$ is equal to $\opt^{\sa^1}\xa$. It follows that $\opt(\ytwo,\bfa)\leq \opt^{\sa^1}\xa$.

Moreover, the iterated degree of $(k,x)$ in both $\yone$ and $\ytwo$ is obtained from the iterated degree of $x$ by scaling down each label $(S,R,q)$ to $(S,R,q/m)$. It follows that $\yone \eqone \ytwo$, and the proof is concluded.\end{proof}

As in the crisp setting, the dual fractional homomorphism $\bfx \to_{df} \yone$, the equivalence $\yone \eqone \ytwo$, and the assignment $Y_2 \to A$ that witness $\opt(\ytwo,\bfa)$ from the proof of Theorem~\ref{thm:Decomp} can all be associated with rational matrices of the appropriate dimensions, the product of which is associated to a solution to $\saxa$. Hence, again, Theorem~\ref{thm:Decomp} can be regarded as a decomposition theorem.

\subsection{Characterizing Weisfeiler-Leman invariant PVCSPs} 

Similarly to the crisp case, Theorem \ref{thm:Decomp} can be applied to characterize the promise valued templates $\ab$ for which $\pvcsp\ab$ is Weisfeiler-Leman invariant.

\begin{theorem} \label{th:main-pvcsp}
Let $\ab$ be a promise valued template of signature $\sig$. Then the following are equivalent.
\begin{enumerate}
    \item $\pvcsp\ab$ is closed under $\eqone$-equivalence; \label{item:2thmain-pvcsp}
    \item $\sa^1$ decides $\pvcsp\ab$. \label{item:3thmain-pvcsp}
\end{enumerate}
\end{theorem}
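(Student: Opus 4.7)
The plan is to prove $(1)\Leftrightarrow(2)$ via two separate directions, each built on a different tool established earlier in the paper. For $(1)\Rightarrow(2)$, the decomposition in Theorem~\ref{thm:Decomp}, together with the monotonicity encoded in Proposition~\ref{prop:dualFH}, will do essentially all the work. For $(2)\Rightarrow(1)$, the crux will be a transfer lemma asserting that $\opt^{\sa^1}(\cdot,\bfa)$ is invariant under $\eqone$-equivalence of inputs, which I plan to derive from a valued version of the matrix characterization of fractional isomorphism.

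For $(1)\Rightarrow(2)$, I would fix an arbitrary non-negative finite-valued input $\bfx$ and apply Theorem~\ref{thm:Decomp} to $(\bfx,\bfa)$ to obtain structures $\bfy_1,\bfy_2$ satisfying $\bfx\to_{df}\bfy_1$, $\bfy_1\eqone\bfy_2$, and $\opt(\bfy_2,\bfa)\leq\opt^{\sa^1}(\bfx,\bfa)$. Closure of $\pvcsp\ab$ under $\eqone$-equivalence, applied to $\bfy_1\eqone\bfy_2$, gives $\opt(\bfy_1,\bfb)\leq\opt(\bfy_2,\bfa)$, while Proposition~\ref{prop:dualFH} applied to $\bfx\to_{df}\bfy_1$ with target $\bfb$ gives $\opt(\bfx,\bfb)\leq\opt(\bfy_1,\bfb)$. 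Chaining these three inequalities yields $\opt(\bfx,\bfb)\leq\opt^{\sa^1}(\bfx,\bfa)$, which is precisely the statement that $\sa^1$ decides $\pvcsp\ab$.

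For $(2)\Rightarrow(1)$, suppose $\bfx_1\eqone\bfx_2$; I aim to show $\opt(\bfx_2,\bfb)\leq\opt(\bfx_1,\bfa)$. This will follow from the chain
\[
\opt(\bfx_2,\bfb)\leq\opt^{\sa^1}(\bfx_2,\bfa)\leq\opt^{\sa^1}(\bfx_1,\bfa)\leq\opt(\bfx_1,\bfa),
\]
where the first inequality is the hypothesis applied to the input $\bfx_2$, the last is the standard LP-relaxation bound, and the middle one is a special case of the following \emph{transfer lemma} (whose symmetric form gives equality): if $\bfx_1\eqone\bfx_2$, then $\opt^{\sa^1}(\bfx_1,\bfa)=\opt^{\sa^1}(\bfx_2,\bfa)$ for every valued $\bfa$.

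To establish the transfer lemma I would first upgrade the matrix characterization in Theorem~\ref{le:digraph}(\ref{1digraphItem}) to the valued setting by refining each $0/1$ matrix $M_\bfx^\ell$ into matrices $M_\bfx^{\ell,q}$ recording both the label $\ell\in\LAB_\sig$ and the weight $q=R^\bfx(\bx)$ of the incident constraint; the same argument as in the crisp case then shows that $\bfx_1\eqone\bfx_2$ is equivalent to the existence of doubly stochastic matrices $P',Q'$ satisfying $P'M_{\bfx_1}^{\ell,q}=M_{\bfx_2}^{\ell,q}Q'$ and $M_{\bfx_1}^{\ell,q}(Q')^T=(P')^TM_{\bfx_2}^{\ell,q}$ for every weighted label $(\ell,q)$. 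Given a feasible $\sa^1$-solution $p^1$ for $(\bfx_1,\bfa)$, I would transfer it by setting $p^2_{x_2}(a):=\sum_{x_1}P'[x_2,x_1]p^1_{x_1}(a)$ and $p^2_{R(\bx_2)}(\ba):=\sum_{R(\bx_1)}Q'[R(\bx_2),R(\bx_1)]p^1_{R(\bx_1)}(\ba)$; constraints (v\blp1)--(v\blp3) are then straightforward consequences of the stochasticity of $P',Q'$ and the matrix equations (summed over $q$), and the objective value is preserved because $Q'$ connects only constraints of equal weight, so $\sum_{R(\bx_2)}Q'[R(\bx_2),R(\bx_1)]R^{\bfx_2}(\bx_2)=R^{\bfx_1}(\bx_1)$. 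The main obstacle will be verifying the repetition constraint (v$\circlearrowleft$) for the transferred solution, which requires showing that $Q'[R(\bx_2),R(\bx_1)]>0$ forces $\bx_1$ and $\bx_2$ to induce the same partition of $[\ar(R)]$ into blocks of equal coordinates; this in turn follows because that partition is already encoded in the multiset of labels $(S,R,q)$ incident to the constraint node in the factor graph, hence preserved by any common equitable partition witnessing $\bfx_1\eqone\bfx_2$.
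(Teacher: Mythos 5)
Your proof of $(1)\Rightarrow(2)$ is exactly the paper's argument: apply Theorem~\ref{thm:Decomp}, invoke closure under $\eqone$ on $\bfy_1\eqone\bfy_2$, and chain through Proposition~\ref{prop:dualFH}.

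Your proof of $(2)\Rightarrow(1)$ reaches the same target --- $\opt^{\sa^1}(\bfx_1,\bfa)=\opt^{\sa^1}(\bfx_2,\bfa)$ whenever $\bfx_1\eqone\bfx_2$ --- but by a genuinely different route. The paper avoids any explicit matrix manipulation: it works with the disjoint union $\bfx_1\cup\bfx_2$, observes that averaging a feasible $\sa^1$-solution over $\eqone$-classes preserves feasibility and objective value, and then uses symmetry (each $\eqone$-class in $\bfx_1\cup\bfx_2$ splits evenly between the two components) together with $\opt^{\sa^1}(\bfx_1\cup\bfx_2,\bfa)=\opt^{\sa^1}(\bfx_1,\bfa)+\opt^{\sa^1}(\bfx_2,\bfa)$ to force the two optima to coincide. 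Your approach instead constructs an explicit pushforward of a feasible solution via doubly stochastic matrices $P',Q'$ witnessing the (valued) fractional isomorphism. Your argument is essentially correct: the validity of (v\blp1)--(v\blp3) indeed follows from stochasticity together with \emph{both} matrix equations $P'M^{\ell,q}_{\bfx_1}=M^{\ell,q}_{\bfx_2}Q'$ and its transpose (the latter is needed to get $P'[x_2,x_1]=\sum_{R(\bx_1):(\bx_1)_i=x_1}Q'[R(\bx_2),R(\bx_1)]$, which makes (v\blp2) go through), the objective value is preserved by the column-sum normalization of $Q'$ and the fact that $Q'$ only mixes constraints of equal weight, and your handling of (v$\circlearrowleft$) via the label multiset encoding the coincidence pattern of coordinates is the right idea. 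Two caveats worth flagging. First, you tacitly rely on a valued analogue of Theorem~\ref{le:digraph}; the paper only states this theorem for crisp structures, so your proof strictly needs a (routine but unstated) extension of that equivalence to labels $(S,R,q)$. Second, to make the (v$\circlearrowleft$) and objective-value arguments bulletproof you should fix $P',Q'$ to be the block-constant matrices $\oplus_i\frac{1}{|\ipartp_i|}J_{\ipartp_i}$, $\oplus_j\frac{1}{|\ipartq_j|}J_{\ipartq_j}$ coming from a common equitable partition --- an arbitrary pair satisfying the matrix equations need not have support contained in the coarsest equitable partition, but this canonical choice does, which is exactly what guarantees that $Q'[R(\bx_2),R(\bx_1)]>0$ forces equal label multisets and hence equal weight and equal coordinate-coincidence pattern. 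In fact, once you specialize to these canonical matrices your transfer becomes precisely the averaging operation the paper uses, so the two proofs are different presentations of the same underlying idea; the paper's version is shorter because the disjoint-union trick sidesteps the bookkeeping, while yours is more explicit and makes the invariance of $\opt^{\sa^1}$ under $\eqone$ into a self-contained lemma reusable elsewhere.
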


\begin{proof}

 The forward direction $(\ref{item:2thmain-pvcsp}) \Rightarrow (\ref{item:3thmain-pvcsp})$ is an immediate consequence of the decomposition theorem for valued structures.
 We need to show that for every non-negative finite-valued $\sig$-structure $\bfx$, $\opt \xb \leq \opt^{\sa^1}\xa$.
 Let $\yone, \ytwo$ be the structures obtained from Theorem~\ref{thm:Decomp}, i.e., $\bfx \to_{df} \yone$, $\yone \eqone \ytwo$, and $\opt(\ytwo,\bfa) \leq \opt^{\sa^1}\xa$. Then, by (\ref{item:2thmain-pvcsp}) we have that $\opt(\yone,\bfb) \leq \opt^{\sa^1}\xa$, and by Proposition \ref{prop:dualFH}, $\opt(\bfx,\bfb) \leq \opt^{\sa^1}\xa$ too as required (see Figure \ref{fig:decomp-proof} for a diagram of this proof).

\begin{figure}
\begin{minipage}[c]{0.6\textwidth}
\vspace{-1.25cm}
\hspace{-.75cm}
   \begin{tikzcd}
     & \yone \arrow[rr, "\eqone" description, no head, dotted] \arrow[rrrr,"\leq \opt^{\sa^1}\!\!\xa\!" description, bend left ] && \ytwo \arrow[rddd, "\leq \opt^{\sa^1}\!\!\xa\!" description] &      & \bfb &\\
     &&&&&& \\
     &&&&&& \\
\bfx \arrow[rrrr, "\saxa" description, rightsquigarrow] \arrow[ruuu, "df" description, dashed] \arrow[rrrrruuu,"\leq \opt^{\sa^1}\!\!\xa\!" description, bend left=100, looseness=1.2]& & & & \bfa \arrow[ruuu, "f" description, dashed] & &\\
\end{tikzcd}
  \end{minipage}\hfill
  \begin{minipage}[c]{0.35\textwidth}
    \caption[Diagram of the proof in Theorem \ref{th:main-pvcsp}.]{Diagram of the proof of (\ref{item:2thmain-pvcsp}) $\Rightarrow$ (\ref{item:3thmain-pvcsp}) of Theorem \ref{th:main-pvcsp}. Fractional and dual fractional homomorphisms are pictured as dashed arrows, the equivalence relation $\eqone$ as a dotted line, the feasibility of the linear program $\sa^1$ as a squiggly arrow, and an upper bound on the optimum value for a pair of structures as a label on a standard arrow.} \label{fig:decomp-proof}
  \end{minipage}  
    \end{figure}
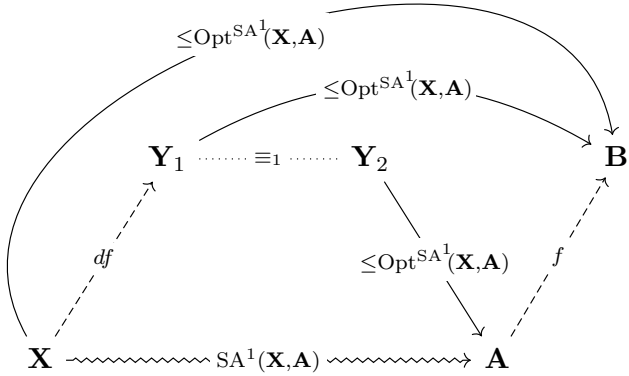

For $(\ref{item:3thmain-pvcsp}) \Rightarrow (\ref{item:2thmain-pvcsp})$, we cannot use the same simple argument as in the crisp case, due to the fact that in the valued setting, feasible solutions to linear programs cannot be composed to obtain meaningful upper bounds on the objective functions. Fortunately, we can work around this. Consider an input valued structure $\bfx$ with $\opt^{\sa^1}\xa < \infty$ and let $p$ be a feasible solution of $\saxa$. Notice that the solution $\overline{p}$ obtained by taking the average of $p$ over the equivalence classes of $\eqone$ is still feasible and in particular it achieves the same optimal value as $p$. It follows that if $\saxa$ is feasible, then there is always an optimal feasible solution that assigns the same value to every class of $\eqone$-equivalent variables and constraints of $\bfx$. If we apply this reasoning to $\sa^1(\bfx_1\cup \bfx_2,\bfa)$, where $\bfx_1\cup \bfx_2$ denotes the disjoint union of two $\eqone$-equivalent structures $\bfx_1$ and $\bfx_2$, we can deduce that $\sa^1(\bfx_1,\bfa)$ and $\sa^1(\bfx_2,\bfa)$ have the same optimal value. Since $\sa^1$ decides $\pvcsp\ab$, it follows in particular that that $\opt(\bfx_2,\bfb) \leq \opt(\bfx_1,\bfa)$ as required.
\end{proof}

Clearly, the implication (\ref{4-main-csp}) $\Rightarrow$ (\ref{3-main-csp}) in Theorem~\ref{th:main-csp} remains true for the PVCSP (so, by virtue of Theorem \ref{th:main-pvcsp}, the PVCSPs in Example~\ref{ex:pvcsp} are all Weisfeiler-Leman invariant). The following example shows that, unlike for CSPs, the converse implication does not hold in general: we provide a $\pvcsp$ template that is decided by $\sa^1$ but not by $\blp$.

\begin{example} \label{ex:SAnotBLP}
Let $\bfa$, $\bfb$ be $\sig$-structures where $\sig$ contains a single binary relation symbol $R$. Let $A=B=\{0,1\}$,  $R^\bfa(a,a)=R^\bfb(a,a)=3$ for $a \in \{0,1\}$, and $R^\bfa(a,b)=2$, $R^\bfb(a,b)=0$ for $a \neq b \in \{0,1\}$. The probability distribution which assigns probability 1 to the identity function is a fractional homomorphism, so $\ab$ is a $\pvcsp$ template.

We claim that $\blp$ does not decide $\pvcsp\ab$. Indeed, let $\bfx$ be the $\pvcsp$ input structure given by $X=\{x\}$ and $R^{\bfx}(x,x)=1$. Then, there is a feasible solution to $\blp\xa$ given by $p_x(a)=1/2$ for $a \in \{0,1\}$ and $p_{R(x,x)}(a,a)=0$, $p_{R(x,x)}(a,b)=1/2$ for $a \neq b \in \{0,1\}$. This solution witnesses that $\opt^\blp\xa \leq 2$, however, it is easy to see that $\opt\xb =3$ and so $\blp$ does not decide $\pvcsp\ab$.

On the other hand, we show that  $\opt\xb \leq \opt^{\sa^1}\xa$ for any input valued structure $\bfx$. Let $V_l(\bfx) = \sum_{x \in X} R^\bfx(x,x)$ and $V_e(\bfx) = \sum_{x \neq y} R^\bfx(x,y)$ be the total weight of the constraints in $\bfx$ with and without repetitions, respectively. We choose an assignment $h:X \to B$ at random: each $h(x)$ is chosen independently and uniformly (both 0 and 1 with probability $1/2$).
The expected value of $\val(\bfx,\bfb,h)$ is $3 V_l(\bfx) + 3/2V_e(\bfx) $, which implies that $\opt\xb\leq 3 V_l(\bfx) + 3/2V_e(\bfx)$. 
As for $\sa^1$, we know that any feasible solution must have $p_{(x,x)}(a,b)=0$ whenever $a \neq b$. Therefore, we get
\begin{align*}
    \opt^{\sa^1}\xa & = \min \Big[ \sum_{x \in X}  \sum_{a \in A}  p_{R(x,x)}(a,a) R^\bfx(x,x) R^\bfa(a,a)   + \\ & \  \sum_{x \neq y \in X} \sum_{a,b \in A} p_{R(x,y)}(a,b) R^\bfx(x,y) R^\bfa(a,b) \Big]
 \geq  3 V_l(\bfx)  + 2 V_e(\bfx) > \opt\xb.
\end{align*}
\end{example}

\section{Hierarchies of relaxations} \label{chap:higher}

We start by discussing the $k$-dimensional Weisfeiler-Leman method for graphs and its connection to logic and bounded treewidth. We then define a notion of higher-dimensional equivalence for relational structures and show that it is deeply connected to both the Sherali-Adams and Weisfeiler-Leman hierarchies.

\subsection{$k$-WL, Counting Logic, and Treewidth} \label{sec:kWL}

To address the limitations of the basic colour refinement algorithm, an increasingly powerful hierarchy of relaxations based on the Weisfeiler-Leman method, known as the \textit{$k$-dimensional Weisfeiler-Leman algorithm} ($k$-WL), has been developed independently by multiple researchers.

As in the 1-dimensional case, $k$-WL proceeds in iterations. At every iteration $j$ on a graph $G=(V;E)$, $k$-WL produces a colouring $\chi^{k}_j$ of the set of $k$-tuples of $V$.
In particular, for $\bv \in V^k$, $\chi^{k}_0(\bv)$ is given by the isomorphism type of $\bv$: that is, two $k$-tuples $\bu=(u_1,\ldots,u_k)$ and $\bv=(v_1,\ldots,v_k)$ have the same isomorphism type if and only if the mapping
$u_i \mapsto v_i$ is an isomorphism between the subgraphs of $G$ induced by $\{u_1,\ldots,u_k\}$ and $\{v_1,\ldots,v_k\}$ respectively. For $j \geq 1$, $k$-WL sets $$\chi^{k}_{j+1}(\bv) = (\chi^{k}_j(\bv), \mathcal{M}_{j}(\bv))$$ where for $k \geq 2$ and $i\in[k]$, 
\begin{align*}
    \mathcal{M}_j(\bv) &= \{\{\big(\chi^{k}_j(\zeta_1(\bv,w)), \ldots,\chi^{k}_j(\zeta_k(\bv,w))\big) \mid w \in V\}\},\\
     \zeta_i(\bv,w)&=(v_1,\ldots,v_{i-1},w,v_{i+1},\ldots,v_k)
\end{align*}
and for $k=1$, $$\mathcal{M}_j(\bv) = \{\{\big(\chi^{k}_j(w) \mid \{v,w\} \in E\}\},$$ so that 1-WL is precisely the colour refinement algorithm.

As before, a stable colouring is eventually reached, and two graphs are said to be distinguished by $k$-WL if their stable colourings differ. Notice that for all practical purposes (for example in terms of isomorphism testing), it is enough to compute the partition induced by the stable colouring instead of the actual colours viewed as multisets, which can quickly become very large. In this setting, the stable $k$-WL colouring of an $n$-vertex graph can be computed in time $\mathcal{O}(n^{k+1} \log n)$ \cite{Immerman1990}.

It had initially been conjectured that
the $k$-dimensional Weisfeiler-Leman algorithm would provide a polynomial time isomorphism test for graphs of bounded degree. In their seminal paper,  Cai, F\"{u}rer, and Immerman disproved this conjecture by constructing for every integer $k$ a pair of non-isomorphic graphs on $\mathcal{O}(k)$ vertices which cannot be distinguished by $k$-WL (yet they are distinguished by $(k+1)$-WL).

\begin{theorem}[\cite{Cai92}]\label{th:CFI} 
For all $k \in \mathbb{N}$, there exist two graphs on $\mathcal{O}(k)$ vertices such that they are not distinguished by $k\textnormal{-WL}$.
\end{theorem}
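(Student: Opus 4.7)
The plan is to establish the theorem via the classical \emph{Cai--F\"urer--Immerman} (CFI) construction. Starting from a connected base graph $G=(V,E)$ of minimum degree at least $3$, I construct two graphs $X(G)$ and $\tilde X(G)$ as follows. For each vertex $v\in V$ with incident edge set $E(v)$, introduce a gadget whose \emph{inner vertices} are indexed by the subsets $S\subseteq E(v)$ of even cardinality; for each edge $e\in E$, introduce two \emph{edge vertices} $e^0,e^1$. An inner vertex at $v$ indexed by $S$ is made adjacent to $e^1$ for $e\in S$ and to $e^0$ for $e\in E(v)\setminus S$. The graph $X(G)$ is the union of these gadgets. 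The twisted version $\tilde X(G)$ is obtained from $X(G)$ by swapping the labels $e_0^0\leftrightarrow e_0^1$ on a single fixed edge $e_0\in E$.

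The first step is to show $X(G)\not\cong \tilde X(G)$ whenever $G$ is connected. Here one observes that every automorphism of $X(G)$ must act on each edge pair $\{e^0,e^1\}$ either as the identity or as the transposition, and that the set of swapped edges is constrained to be, modulo edge-stars at vertices, an element of the cycle space of $G$. Consequently the parity of the number of swaps across a spanning tree of $G$ is a well-defined isomorphism invariant, and $X(G)$ and $\tilde X(G)$ differ precisely in this parity.

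The decisive step is to exhibit a Duplicator winning strategy in the bijective $(k+1)$-pebble game on $X(G)$ versus $\tilde X(G)$. By the standard correspondence between $k$-WL, the $(k+1)$-variable counting logic $\counting^{k+1}$, and this game (which will be discussed in the relational setting later in this section), winning for Duplicator is equivalent to $k$-WL indistinguishability. Duplicator's strategy is to maintain, at every round, a bijection realising a local isomorphism in which the twist has been rerouted so as to lie on an edge of $G$ far from the currently pebbled positions. The parity argument ensures that the twist can be slid along any path in $G$; the \emph{cops and robbers} characterisation of treewidth then guarantees that, whenever $G$ has treewidth exceeding $k$, the robber (the twist location) can evade every placement of $k$ cops, which is exactly what is needed to answer any pebble move Spoiler makes.

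Finally, I invoke the existence of a family of $3$-regular graphs on $\Theta(k)$ vertices with treewidth $\Omega(k)$, such as constant-degree expanders. Plugging such a $G$ into the construction yields $X(G)$ and $\tilde X(G)$ of size $O(k)$ that are not distinguished by $k$-WL. The main obstacle in the whole argument is the pebble-game step: one must translate Spoiler's moves in the gadget graph into cop placements in the base graph $G$, use the robber's escape strategy from the cops-and-robbers game to relocate the twist, and then verify that the resulting bijection between $X(G)$ and $\tilde X(G)$ really is a local isomorphism on the pebbled configuration. The remaining ingredients --- the gadget definition, the parity invariant, and the expander input --- are essentially book-keeping.
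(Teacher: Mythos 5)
The paper itself does not prove this theorem; it is stated as a background result and attributed directly to Cai, F\"urer, and Immerman \cite{Cai92}. Your sketch correctly reproduces the high-level shape of the CFI argument: the even-subset gadget construction, the parity-based non-isomorphism proof, Duplicator's winning strategy in the bijective $(k+1)$-pebble game for the corresponding logic $\counting^{k+1}$, and the use of constant-degree expanders to keep the base graph (and hence the CFI graph, since the gadget blow-up is by a constant factor on a $3$-regular base) of size $O(k)$ while keeping treewidth $\Omega(k)$.

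Two small caveats on the details. First, the cops-and-robbers / treewidth framing you use is a later and now-standard reformulation; the original CFI paper phrases the connectivity requirement in terms of balanced separators of size $>k$, which is essentially equivalent. Second, your parity argument is algebraically a bit loose: the clean invariant is that two twists $X^{S}(G)$ and $X^{S'}(G)$ of a connected base graph $G$ are isomorphic iff $|S|\equiv|S'|\pmod 2$ --- the $|V|$ per-vertex even-cardinality constraints on the swap assignment have rank $|V|-1$, with the single global dependency being exactly overall parity. Your phrase ``modulo edge-stars $\dots$ an element of the cycle space'' seems to conflate the automorphism group of $X(G)$ (which is indeed isomorphic to the cycle space of $G$) with the classifying invariant of the twists (which collapses to a single parity bit). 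The conclusion $X(G)\not\cong\tilde X(G)$ is nevertheless correct, and the rest of the argument goes through as you describe.
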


The proof of indistinguishability of the graphs from Theorem \ref{th:CFI}, which came to be known as the \textit{CFI construction}, relies on the equivalence between $k$-WL and the counting logic $\counting^{k+1}$, which is the other main contribution in \cite{Cai92}.

On the other hand, it is known from an old result of Lov{\'a}sz \cite{lovasz1967operations} that counting homomorphisms characterizes isomorphism classes: that is, two graphs $G$ and $H$ are isomorphic if and only if for every graph $F$, $F$ has the same number of homomorphisms into $G$ and into $H$. Theorem \ref{th:fractional-graph} shows how restricting this condition to counting homomorphisms from trees yields a combinatorial characterization of indistinguishability by 1-WL. It turns out that considering not just trees but structures of bounded treewidth corresponds to indistinguishability under $k$-WL \cite{Dvorak2010,Dell2018}.

Let $\mathscr{P}(A)$ denote the power set of $A$. A {\em tree-decomposition}~\cite{Robertson1984} of a structure $\bfa$ is a pair $(G,\beta)$ where $G=(V,E)$ is a tree and $\beta:V\rightarrow\mathscr{P}(A)$ is a mapping such that the following conditions are satisfied:
\begin{enumerate}
    \item For every constraint $R(\ba)$ in $\con\bfa$ there exists a node $v\in V$ such that $\{\ba\}\subseteq\beta(v)$;
    \item If $a\in\beta(u)\cap \beta(v)$ then $a\in\beta(w)$ for every node $w$ in the unique path in $G$ joining $u$ to $v$.
    \end{enumerate}
The {\em width} of a tree-decomposition $(G,\beta)$ is $\max\{|\beta(v)| \mid v\in V\}-1$ and the \textit{treewidth} of $\bfa$ is defined as the smallest width among all its tree-decompositions. It is easy to see that the only graphs that have treewidth 1 are trees and forests. Moreover, a $\sig$-ftree has treewidth at most $r-1$, where $r$ is the maximum arity of a relation in $\sig$.

Then we have that the logical and combinatorial characterizations of colour refinement from Theorem \ref{th:fractional-graph} can be lifted to higher dimensions, as summarized in the following result:

\begin{theorem}[\cite{Cai92,Dvorak2010,Dell2018}] \label{th:kWL} Let $G$, $H$ be graphs. The following are equivalent:
\begin{enumerate}
    \item $k\textnormal{-WL}$ does not distinguish $G$ from $H$.
    \item $G$ and $H$ satisfy the same formulae in the logic $\counting^{k+1}$;
    \item $\hom(T;G)=\hom(T;H)$ for all graphs $T$ of treewidth at most $k$.
\end{enumerate}
\end{theorem}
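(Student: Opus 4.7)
The plan is to prove the cycle $(1) \Leftrightarrow (2) \Rightarrow (3) \Rightarrow (1)$, splitting the work into the classical Cai--F\"urer--Immerman equivalence between $k$-WL and $\counting^{k+1}$ and the Dvo\v{r}\'ak / Dell--Grohe--Rattan style characterization via homomorphism counts from graphs of bounded treewidth.

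For $(1) \Leftrightarrow (2)$, I would proceed by a double induction. In one direction, I would show by induction on the iteration $j$ that for every color $c$ appearing in $\chi^k_j$ there is a $\counting^{k+1}$ formula $\varphi_{j,c}(x_1,\dots,x_k)$ such that $G \models \varphi_{j,c}(\bv)$ iff $\chi^k_j(\bv) = c$. The base case $j=0$ handles the isomorphism type of a $k$-tuple with a quantifier-free formula in $x_1,\dots,x_k$. For the inductive step, the update $\chi^k_{j+1}(\bv) = (\chi^k_j(\bv), \mathcal{M}_j(\bv))$ reads off counts of vertices $w$ producing a prescribed combination of colors on $\zeta_1(\bv,w),\dots,\zeta_k(\bv,w)$; these counts are expressible by a single additional variable $w$, bringing the total to $k+1$ variables and requiring the counting quantifier $\exists^{=n}w$. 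In the other direction, a structural induction on $\counting^{k+1}$ formulae shows that if $\chi^k(\bu) = \chi^k(\bv)$ in a common stable refinement on $G \sqcup H$, then $\bu$ and $\bv$ satisfy the same $\counting^{k+1}$ formulae; the only nontrivial case is a counting quantifier $\exists^{=n} x_i$, which is handled precisely because the $k$-WL refinement step counts the multisets appearing in $\mathcal{M}_j$.

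For $(2) \Rightarrow (3)$, given $T$ of treewidth at most $k$, I would fix a tree decomposition $(G_T,\beta)$ with every bag of size at most $k+1$, and compute $\hom(T,G)$ by dynamic programming on $G_T$. For each node $v$ and each partial map $f: \beta(v) \to V(G)$, let $N_v(f)$ be the number of homomorphisms from the substructure of $T$ induced by variables appearing in the subtree rooted at $v$ that extend $f$. A standard bottom-up recursion expresses $N_v(f)$ as a sum over the children of $v$ of products of counts $N_u(g)$ where $g$ is an extension of $f|_{\beta(v)\cap\beta(u)}$. Because every bag has at most $k+1$ elements, each such count is definable by a $\counting^{k+1}$ formula whose free variables correspond to the vertices of $\beta(v)$, with counting quantifiers over the variables that are being forgotten. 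Summing at the root yields $\hom(T,G)$ as a value determined entirely by the $\counting^{k+1}$-theory of $G$, so $G \equiv_{\counting^{k+1}} H$ forces $\hom(T,G) = \hom(T,H)$.

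For $(3) \Rightarrow (1)$, the hard direction, I would take the contrapositive: if $k$-WL distinguishes $G$ from $H$, construct a graph $T$ of treewidth at most $k$ with $\hom(T,G) \neq \hom(T,H)$. The natural approach is to unfold the $k$-WL computation into a ``$k$-pebble tree'' $T$ indexed by a finite sequence of pebble moves of depth equal to the stabilization time: each node carries a $k$-tuple pattern, and branching encodes the choice of an extra vertex $w$ as in the definition of $\mathcal{M}_j$. Such unfoldings have tree decompositions whose bags are precisely the pebble configurations, hence of width at most $k$. One then shows by induction on depth that $\hom(T,G)$ equals a polynomial in the multiplicities of $k$-WL colors on $G$, and that this polynomial distinguishes the two color profiles when they differ. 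Alternatively, one can give a purely linear-algebraic proof by exhibiting, among homomorphism counts from graphs of treewidth at most $k$, a basis that separates the $k$-WL equivalence classes. The main obstacle is precisely this step: verifying that the unfolding really has treewidth at most $k$ and that its homomorphism counts faithfully encode the $k$-WL color classes (rather than merely the 1-WL refinement of the $k$-tuple graph) requires setting up the combinatorics of $k$-pebble trees carefully and matching them with the recursion defining $\chi^k_j$.
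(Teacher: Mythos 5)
The paper does not give its own proof of Theorem~\ref{th:kWL}; it is stated as a known background result with citations to Cai, F\"urer, and Immerman~\cite{Cai92} for the equivalence of items (1) and (2), and to Dvo\v{r}\'ak~\cite{Dvorak2010} and Dell, Grohe, and Rattan~\cite{Dell2018} for the addition of item (3). (The paper does prove a relational-structure analogue, Theorem~\ref{th:homTreewidth}, but there too the hard equivalence between the logical and the homomorphism-counting characterizations is imported from the literature rather than proved from scratch, and the new content is the equivalence with $\eqwl[k]$ via the bijective $k$-pebble game.) So there is no in-paper proof to compare against; what follows is an assessment of your argument on its own terms.

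Your outline of (1)$\Leftrightarrow$(2) and (2)$\Rightarrow$(3) follows the standard lines and is essentially sound. For (2)$\Rightarrow$(3) a couple of points need tightening: you should fix a \emph{nice} tree decomposition so that each parent--child step introduces or forgets only one element, otherwise the number of simultaneously active variables can temporarily exceed $k+1$; and ``each such count is definable by a $\counting^{k+1}$ formula'' should be read as ``each threshold comparison $N_v(f)\geq m$ is definable for every fixed $m$,'' which together with finiteness pins down the exact count. With those caveats the dynamic-programming argument goes through.

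The genuine gap is in (3)$\Rightarrow$(1), and you flag it yourself. You describe a ``$k$-pebble tree'' unfolding of the $k$-WL computation, but you neither construct the graph $T$ concretely nor verify that it has treewidth at most $k$, and the passage from a coloured structure on $k$-tuples (which is what the unfolding naturally produces) to an ordinary \emph{graph} whose homomorphism counts separate the colour profiles is exactly where the work lies. Your sentence ``the main obstacle is precisely this step'' identifies the missing content without supplying it. The alternative you gesture at---a linear-algebraic argument exhibiting a separating basis among bounded-treewidth homomorphism vectors---is roughly the Dell--Grohe--Rattan strategy and is likewise not carried out. Without one of these the implication cycle does not close, so the proposal is a reasonable roadmap but not a complete proof.
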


Recall that indistinguishability under 1-WL is equivalent to the feasibility of the fractional isomorphism linear program (see item (\ref{item:1fr}) in Theorem \ref{th:fractional-graph}). Surprisingly, Atserias and Maneva \cite{Atserias2013} and Malkin \cite{Malkin2014} were able to lift this correspondence to higher dimensions: in particular, they showed that a hierarchy of linear programs obtained by applying the Sherali-Adams method to the system of linear equations defining fractional isomorphism interleaves with indistinguishability under $k$-WL. Subsequently, in \cite{Grohe2015Pebble} Grohe and Otto gave a variant of the SA-based relaxation of isomorphism whose levels correspond tightly with $k$-WL, and showed that the interleaving of the two hierarchies from \cite{Atserias2013,Malkin2014} is strict.

\subsection{Sherali-Adams meets Weisfeiler-Leman}

Motivated by this correspondence between the Sherali-Adams hierarchy applied to fractional isomorphism and the Weisfeiler-Leman algorithm, we ask the question: can a similar correspondence 
be established when applying the Sherali-Adams method to relaxations of homomorphism (instead of isomorphism) of arbitrary relational structures (instead of graphs)? Our results seem to indicate that the answer to this question is positive. In a nutshell, our approach is based on the observation (formalized in 
Lemma \ref{le:starkSherali}) that the feasibility of $\sa^k\xa$ is equivalent to the feasibility of $\sa^1(\Phi(\bfx),\Phi(\bfa))$ where $\Phi(\cdot)$ is a suitable transformation that maps every relational structure to a binary structure. More specifically, both the domain and the relations of the output structure $\Phi(\bfa)$,
which we shall denote $\astarkbf$, are defined from $\bfa$ via a primitive-positive formula. In a similar spirit, \cite{Ciardo2022Tensors} analyzes $\sa^k$ via a ``reduction'' to $\sa^1$ obtained via tensorization. 

More specifically, for every $k>0$ we define an operator $\stark$ that maps a $\sig$-structure to a new structure whose signature we will denote $\sig^*_k$. 
Let $\bfa$ be a $\sig$-structure. Then we define the universe of $\astarkbf$ to be $\astark := \cup_{j \leq k} A^j \cup \con\bfa.$
Additionally, for $\ba \in A^j$ and ${\bf i}=(i_1,\dots,i_n)\in [j]^n$, we use $\pr_{\bf i} \ba$ to denote the projection of $\ba$ to $\bf i$, i.e, the tuple 
$(a_{i_1},\dots,a_{i_n})$. Then, $\astarkbf$ is defined to contain the following unary ($T_{j,S}$, $R_S$) and binary ($\tjii$, $R_{\bi}$) relations:
\begin{spreadlines}{8pt}
\begin{align*}
T_{j,S}^{\astarkbf}\,\,&=\{ \ba\in A^j \mid a_i=a_{i'} \: \forall i,i'\in S\}  &&  j\leq k, \, S\subseteq[j]\\
\tjii^{\astarkbf}\,\,&=\{(\textbf{a},\pii\textbf{a}) \mid \textbf{a} \in A^j\}  &&  j',j\leq k, \, \textbf{i}\in [j]^{j'}\\
R_S^{\astarkbf}\,\,&=\{ R(\ba) \mid \ba \in R^\bfa, a_i=a_{i'} \: \forall i,i'\in S\} &&  R\in\sig, \, S\subseteq[\ar(R)]\\
R_{\bi}^{\astarkbf}\,\,&=\{ (R(\ba),\pii\ba) \mid \ba\in R^\bfa\} &&  R\in\sig, \, j\leq k, \, \textbf{i}\in [\ar(R)]^{j}.
\end{align*}
\end{spreadlines}
We remark that $\astarkbf$ can be seen as a multi-sorted pp-power of $\bfa$ with domains $A^1,\ldots, A^k$, $R^{\bfa}_1, R^{\bfa}_2,\ldots$ (notice $R^{\bfa}_i \subseteq A^{\ar(R_i)}$).

To begin with, thanks to the operator $\stark$, we are able to reduce feasibility of $\sa^k$ to feasibility of $\sa^1$.

\begin{restatable}{lemma}{lestarkSherali} \label{le:starkSherali}
Let $\bfx$, $\bfa$ be $\sig$-structures. Then $\sa^k\xa$ is feasible if and only if $\sa^1(\xstarkbf,\astarkbf)$ is feasible.
\end{restatable}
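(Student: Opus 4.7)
The plan is to establish a bijective, satisfiability-preserving correspondence between rational solutions of $\sak$ and of $\sastar$. The structures $\xstarkbf$ and $\astarkbf$ are designed precisely so that the $\sa^1$ inequalities on them encode exactly the arity-matching, repetition-pattern, projection, and constraint-satisfaction conditions that $\sak$ imposes on $(\bfx,\bfa)$.

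First I would fix the translation between variables. To each variable $p_V(f)$ of $\sak$, where $V=\{v_1,\dots,v_j\}\subseteq X$ with $j\le k$ and $f:V\to A$, associate the variable $q_{\bx}(\ba)$ of $\sastar$ obtained by fixing any enumeration $\bx=(v_1,\dots,v_j)\in X^j\subseteq X^\stark$ and setting $\ba=(f(v_1),\dots,f(v_j))\in A^j$. Similarly, $p_{R(\bx)}(f)$ is matched with the variable attached to $R(\bx)\in\con{\xstarkbf}$ and image $R(\ba)\in A^\stark$, where $\ba=f(\bx)$. The remaining variables of $\sastar$ (those with incompatible arities, those violating the repetition pattern, and the ones indexed by the binary constraints of type $\tjii$ or $\rii$) are then determined by consistency, and most of them are forced to zero.

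For the forward direction, given a feasible solution $p$ of $\sak$, I would define $q$ via the correspondence, set $q_{\bx}(\ba)=0$ whenever the arities mismatch or $\ba$ is not compatible with the repetition pattern of $\bx$, and set $q_{\tjii(\bx,\pii\bx)}(g)=q_{\bx}(g(\bx))$ if $g(\pii\bx)=\pii g(\bx)$ and $0$ otherwise (with the analogous definition for $\rii$). The verifications then proceed family by family: (\ref{eq:SA1}) reduces to (\ref{eq:SA1}) for $p$ together with the unary relation $T_{j,\emptyset}$, which via (\ref{eq:SA3})--(\ref{eq:SA4}) in $\sa^1$ forces $q_{\bx}(\ba)=0$ for $\ba\notin A^j$ so that the sum over $A^\stark$ collapses to a sum over $A^j$; (\ref{eq:SA2}) is vacuous at level $1$; the nontrivial instance of (\ref{eq:SA3}), applied to a binary constraint $\tjii(\bx,\pii\bx)$, amounts to marginalizing $q$ over one entry, which via the correspondence translates to (\ref{eq:SA2}) of $p$ for the inclusion $U=\{\pii\bx\}\subseteq V=\{\bx\}$ with $|V|\le k$; and (\ref{eq:SA4}) follows from the defining conditions of $R_S^{\astarkbf}$ and $R_\bi^{\astarkbf}$ together with (\ref{eq:SA4}) for $p$.

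For the reverse direction, given a feasible $q$ of $\sastar$, I would define $p$ via the same correspondence. Well-definedness, i.e.\ independence from the chosen enumeration of $V$, follows from the symmetries encoded by the binary relations $\tjii$ for permutation indices $\bi$, together with (\ref{eq:SA3}) in $\sa^1$. Checking (\ref{eq:SA1})--(\ref{eq:SA4}) for $p$ then simply reverses the previous analysis, with the marginalization required by (\ref{eq:SA2}) arising from the projections encoded in $\tjii$. The main obstacle is entirely bookkeeping: each of the four families of relations in $\sig^\stark$ must be shown to encode exactly one feature of $\sak$ (arity via $T_{j,\emptyset}$, repetition patterns via $T_{j,S}$, projections and hence marginalizations via $\tjii$, and constraint satisfaction via $R_S$ and $\rii$), and one must verify that no combination of $\sa^1$ constraints on $(\xstarkbf,\astarkbf)$ imposes any strictly stronger condition than $\sak$ on its variables. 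Once this dictionary is laid out, each individual verification reduces to a routine check.
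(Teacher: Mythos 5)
Your proposal is correct and matches the paper's proof exactly: both set up a syntactic, satisfiability-preserving correspondence between the variables of $\sak$ and $\sastar$, using the relations $T_{j,S}$, $\tjii$, $R_S$, $R_{\bi}$ of the starred structures to encode arity, repetition patterns, projections/marginalizations, and relation membership, and arguing that the auxiliary $\sastar$ variables are forced to zero or identified with others. One minor slip: $R(\bx)\in\con{\xstarkbf}$ should read $R(\bx)\in\con{\bfx}$, viewed as an element of the \emph{universe} of $\xstarkbf$, but the intended variable $q_{R(\bx)}(R(f(\bx)))$ is clear.
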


\begin{proof}
    The proof is purely syntactical, although it is convenient to first slightly modify the LP formulations $\sak$ and $\sastar$. We shall refer to the solutions of $\sak$ and $\sastar$ by appropriately indexed sets of variables $p$, $q$ respectively. 

\begin{itemize}
\item In $\sak$, it follows from (\ref{eq:SA4}) that we can safely replace all variables $p_{R(\bx)}(f)$ with $f(\textbf{x}) \not\in R^\bfa$ by $0$.
\item In $\sastar$ some more substantial observations are needed. First, for each $j\leq k$ and each $\bx\in X^j$, it follows from conditions (\ref{eq:SA3}) and (\ref{eq:SA4}) for
$T_{j,S}$ ($S\subseteq[j])$ that for every $v$ in $\astark$, $q_{\bx}(v)$ must take value $0$ unless $v=f(\bx)$ for some function $f:\{\bx\}\rightarrow A$. Hence, in a first stage we set $q_{\bx}(v)$ to zero for each $j\leq k$, each $\bx\in X^j$
and each $v$ that is not a tuple of the form $f(\bx)$ for some function $f:\{\bx\}\rightarrow A$.

Furthermore, it follows from condition (\ref{eq:SA3}) for $\tjii$ that $q_{\bx}(f(\bx))=q_{\bx'}(f(\bx'))$ for every $\bx$, $\bx'$ satisfying $\{\bx\}=\{\bx'\}$ and every $f:\{\bx\}\rightarrow A$. Hence, in a second stage,
for each $V\subseteq X$ with $|V|\leq k$ and every $f:V\rightarrow A$ we identify all variables $q_{\bx}(f(\bx))$ which satisfy 
$\{\bx\}=V$. 

Then, consider now the variables of the form $q_{R(\bx)}(v)$, $v\in\astark$. It
follows from conditions (\ref{eq:SA3}) and (\ref{eq:SA4}) for $R_S$ ($S\subseteq[\ar(R)])$ that
$q_{R(\bx)}(v)$ must be set to $0$ unless $v=R(f(\bx))$ for some function $f:\{\bx\}\rightarrow A$.

The other variables in $\sastar$ are of the form $q_C(f)$ where $C\in \con{\xstarkbf}$. As we shall see they can always safely be identified with some of the other variables. Let us start first with the case in which $C$ is a unary constraint. If $C=T_{j,S}(\bx)$ or $C=R_S(\bx)$, then it follows from (\ref{eq:SA3}) that $q_C(f)=q_{\bx}(f(\bx))$. Assume
now that $C$ is a binary constraint, that is $C=\tjii(\bx,\pii \bx)$ or $C=R_{\bf i}(\bx,\pii \bx)$. It follows again from 
(\ref{eq:SA3}) that $q_C(f)=q_{\bx}(f(\bx))$.
\end{itemize}

Now we are ready to prove the lemma. In particular, consider the following one-to-one correspondence between
the assignments in $\sak$ and $\sastar$:
\begin{itemize}
\item Every variable $p_V(f)$ of $\sak$ is assigned to the variable $q_{\bx}(f(\bx))$ of $\sastar$, where $\bx$ is any tuple satisfying $\{\bx\}=V$.
\item Every variable $p_{R(\bx)}(f)$ of $\sak$ is assigned to the variable $q_{R(\bx)}(R(f(\bx)))$ of $\sastar$.
\end{itemize}
It is not difficult to see that this correspondence preserves feasibility.
\end{proof}

As an immediate consequence of Lemma \ref{le:starkSherali} together with the Theorem \ref{thm:SA1}, we obtain a decomposition of the full hierarchy of Sherali-Adams relaxations for the homomorphism problem in terms of fractional isomorphism.

\begin{theorem} \label{thm:SAk}
Let $\bfx$, $\bfa$ be relational structures. Then, the following are equivalent:
\begin{enumerate}
    \item $\sa^k(\bfx,\bfa)$ is feasible;
    \item There exists a sequence of structures $\bfy_0, \ldots, \bfy_n$ such that $\bfy_0=\xstarkbf$, $\bfy_n=\astarkbf$, and for all $i=0,\ldots,n-1$ we have that $\bfy_i \to \bfy_{i+1}$ or $\bfy_{i} \eqone \bfy_{i+1}$;
    \item There exists a pair of structures $\yone, \ytwo$ such that $\xstarkbf \to \bfy_{1}$, $\bfy_{1} \eqone \bfy_{2}$, and $\ytwo \to \astarkbf$. 
\end{enumerate}
\end{theorem}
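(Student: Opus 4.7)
The plan is essentially to combine Lemma \ref{le:starkSherali} with Theorem \ref{thm:SA1} applied to the pair $(\xstarkbf, \astarkbf)$. Specifically, Lemma \ref{le:starkSherali} reduces the feasibility of $\sa^k(\bfx,\bfa)$ to the feasibility of $\sa^1(\xstarkbf,\astarkbf)$; then since Theorem \ref{thm:SA1} already establishes the desired decomposition for the first level of the Sherali--Adams hierarchy applied to \emph{any} pair of similar relational structures, we can simply instantiate that theorem with $\xstarkbf$ and $\astarkbf$ in place of $\bfx$ and $\bfa$.

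Concretely, I would argue as follows. By Lemma \ref{le:starkSherali}, condition (1) of the present statement -- feasibility of $\sa^k(\bfx,\bfa)$ -- is equivalent to feasibility of $\sa^1(\xstarkbf,\astarkbf)$, which is condition (\ref{1SA1Item}) of Theorem \ref{thm:SA1} for the pair $(\xstarkbf,\astarkbf)$. By the equivalence (\ref{1SA1Item})$\Leftrightarrow$(\ref{3SA1Item}) in Theorem \ref{thm:SA1}, this is in turn equivalent to the existence of a sequence of structures $\bfy_0, \ldots, \bfy_n$ with $\bfy_0 = \xstarkbf$ and $\bfy_n = \astarkbf$ such that for each $i < n$ either $\bfy_i \to \bfy_{i+1}$ or $\bfy_i \eqone \bfy_{i+1}$, which is condition (2) of our statement. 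Similarly, the equivalence (\ref{1SA1Item})$\Leftrightarrow$(\ref{4SA1Item}) in Theorem \ref{thm:SA1} gives the equivalence with condition (3) of our statement.

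Since Theorem \ref{thm:SA1} and Lemma \ref{le:starkSherali} do all the work, there is essentially no new obstacle here; the reduction $\stark$ was designed precisely so that this kind of lifting of results from the first Sherali--Adams level to higher levels becomes automatic. The only thing to verify carefully is that $\xstarkbf$ and $\astarkbf$ are indeed similar $\sig^*_k$-structures (which follows immediately from the definition of $\stark$) so that Theorem \ref{thm:SA1} applies to them without modification. The proof is therefore just two or three lines long, consisting of chained invocations of the two preceding results.
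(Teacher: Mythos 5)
Your proof is correct and matches the paper's approach exactly: the paper states Theorem~\ref{thm:SAk} as ``an immediate consequence of Lemma~\ref{le:starkSherali} together with Theorem~\ref{thm:SA1}'', which is precisely the chaining you describe. Nothing further is needed.
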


Furthermore, the operator $\stark$ allows us to define, similarly to the case of graphs, a hierarchy of increasingly tighter relaxations of isomorphism for relational structures. In particular, for every $k> 1$, we shall denote $\bfa \eqwl[k] \bfb$ whenever $\astarkbf \eqone \bstarkbf$. 

Then, Theorem \ref{thm:SAk} implies that feasibility of $\sa^k$ (for a fixed right-hand structure $\bfa$) is closed under $\eqwl[k]$.
\begin{corollary} \label{cor:sa-k-eqwl-k}
Let $\bfx$, $\bfx'$ and $\bfa$ be $\sig$-structures and $k \geq 1$. Suppose that $\bfx\eqwl[k]\bfx'$. Then, $\sa^k\xa$ is feasible  if and only if $\sa^k(\bfx',\bfa)$ is feasible.
\end{corollary}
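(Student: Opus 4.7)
The plan is to combine Lemma \ref{le:starkSherali}, Theorem \ref{thm:SAk}, and the definition of $\eqwl[k]$ in a short chain-concatenation argument. Since $\eqone$ is an equivalence relation (as immediate from its definition in terms of iterated degree), $\eqwl[k]$ is also symmetric, so it suffices to prove one implication; the converse follows by swapping $\bfx$ and $\bfx'$.

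First I would use Lemma \ref{le:starkSherali} to translate the hypothesis: $\sa^k\xa$ being feasible is equivalent to $\sa^1(\xstarkbf,\astarkbf)$ being feasible. Then by condition (3) of Theorem \ref{thm:SAk} (equivalently, condition (\ref{4SA1Item}) of Theorem \ref{thm:SA1} applied to $\xstarkbf$ and $\astarkbf$), there exist structures $\yone,\ytwo$ such that
\[
\xstarkbf \;\to\; \yone \;\eqone\; \ytwo \;\to\; \astarkbf.
\]
Now the hypothesis $\bfx\eqwl[k]\bfx'$ unfolds to $\bfx'^{*_k} \eqone \xstarkbf$ (by definition of $\eqwl[k]$). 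Prepending this equivalence to the chain above yields a sequence
\[
\bfx'^{*_k} \;\eqone\; \xstarkbf \;\to\; \yone \;\eqone\; \ytwo \;\to\; \astarkbf,
\]
which is a witness for condition (\ref{3SA1Item}) of Theorem \ref{thm:SA1} applied to the pair $(\bfx'^{*_k},\astarkbf)$. Consequently $\sa^1(\bfx'^{*_k},\astarkbf)$ is feasible, and a second application of Lemma \ref{le:starkSherali} gives feasibility of $\sa^k(\bfx',\bfa)$.

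There is no genuine obstacle here: the whole argument is a one-line concatenation once the right translation theorems are invoked. The only subtle point worth flagging is that one must use condition (\ref{3SA1Item}) of Theorem \ref{thm:SA1} (the length-$n$ chain version) rather than condition (\ref{4SA1Item}) (the length-3 version) for the intermediate step, because the concatenated chain has length four; the equivalence of these two conditions in Theorem \ref{thm:SA1} is precisely what makes this work, so the chain length is never an obstruction. The symmetric argument for the converse implication then completes the proof.
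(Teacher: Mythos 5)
Your proof is correct and is essentially the chain-concatenation argument the paper treats as immediate from Theorem~\ref{thm:SAk}: by symmetry of $\eqone$ (hence of $\eqwl[k]$), one direction suffices; prepending the link $\bfx'^{*_k} \eqone \xstarkbf$ to the chain from $\xstarkbf$ to $\astarkbf$ supplied by Theorem~\ref{thm:SAk}~(2) yields a chain witnessing feasibility of $\sa^k(\bfx',\bfa)$. The only difference is presentational -- you pass explicitly through Lemma~\ref{le:starkSherali} and Theorem~\ref{thm:SA1}~(\ref{3SA1Item}) rather than invoking Theorem~\ref{thm:SAk}~(2) directly -- but this is exactly how the paper derives Theorem~\ref{thm:SAk} in the first place, so the substance is the same.
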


For $k=1$, it is easy to see that $\bfa^*_1$ and $\bfb^*_1$ satisfy the conditions of Theorem $\ref{le:digraph}$ precisely when $\bfa$ and $\bfb$ do, and hence, $\eqwl[k]$ correctly extends $\eqone$. For other small values of $k$ other than $1$, the characterization of $\equiv_k$ might not be so straightforward. However, as long as $k$ is at least as large as the arity of any relation in the signature, then we have the following characterization that parallels an analogous result for $k$-WL  (i.e., Theorem \ref{th:kWL}). 

\begin{restatable}{theorem}{homTreewidth} \label{th:homTreewidth}
Let $r$ be the maximum arity among all relations in $\sig$ and assume that $r\leq k$. Then for every pair of structures $\bfa$, $\bfb$ the following are equivalent:
\begin{enumerate}
    \item $\bfa \eqwl[k] \bfb$; \label{1homTreewidthItem}
    \item $\hom(\bfx;\bfa)=\hom(\bfx;\bfb)$ for every $\sig$-structure $\bfx$ of treewidth $<k$;\label{2homTreewidthItem}
    \item $\bfa$ and $\bfb$ satisfy the same formulae in the logic $\counting^k$. \label{3homTreewidthItem-logic}
\end{enumerate}

\end{restatable}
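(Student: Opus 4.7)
My plan is to reduce Theorem~\ref{th:homTreewidth} to classical characterizations of 1-WL equivalence by applying the $\stark$ operator, after first extending those characterizations from graphs to arbitrary relational structures. The three equivalences split into $(\ref{1homTreewidthItem})\Leftrightarrow(\ref{3homTreewidthItem-logic})$ and $(\ref{3homTreewidthItem-logic})\Leftrightarrow(\ref{2homTreewidthItem})$.

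For $(\ref{1homTreewidthItem})\Leftrightarrow(\ref{3homTreewidthItem-logic})$, I would first extend the $\counting^2$-characterization of fractional isomorphism from graphs (Theorem~\ref{th:fractional-graph}(\ref{item:4fr})) to arbitrary relational structures: for any similar $\sigma'$-structures $\bfc,\bfd$, $\bfc\eqone\bfd$ iff $\bfc$ and $\bfd$ agree on all $\counting^2$-sentences. This is a routine adaptation of the Cai-F\"urer-Immerman argument~\cite{Cai92} via the bijective $2$-pebble game, with the labels in $\LAB_{\sigma'}$ of the factor graph playing the role of edge colours. Applied to $(\astarkbf,\bstarkbf)$, this yields $\bfa\eqwl[k]\bfb$ iff $\astarkbf\equiv_{\counting^2}\bstarkbf$. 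Next I would set up an inter-translation between $\counting^k$ on $\bfa$ and $\counting^2$ on $\astarkbf$: a single $\counting^2$-variable ranging over $\astark=\cup_{j\leq k}A^j\cup\con\bfa$ encodes a tuple of up to $k$ elements of $A$, with coordinate access provided by the binary relations $\tjii$, equality patterns encoded by the unary relations $T_{j,S}$, and relation memberships encoded by $R_S$ and $R_\bi$. The hypothesis $r\leq k$ ensures that every constraint of $\bfa$ lives inside $\astark$.

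For $(\ref{3homTreewidthItem-logic})\Leftrightarrow(\ref{2homTreewidthItem})$, I would extend the arguments of Dvor{\'a}k~\cite{Dvorak2010} and Dell-Grohe-Rattan~\cite{Dell2018} from graphs to relational structures of arity at most $k$. For $(\ref{3homTreewidthItem-logic})\Rightarrow(\ref{2homTreewidthItem})$: given $\bfx$ of treewidth $<k$, pick a tree decomposition with bags of size at most $k$; walking it bottom-up, $\hom(\bfx;\bfa)$ compiles into a $\counting^k$-sentence that introduces counting quantifiers for new bag elements and releases variables for those that drop out (the bound $r\leq k$ guarantees that each constraint of $\bfx$ lies inside a single bag). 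For $(\ref{2homTreewidthItem})\Rightarrow(\ref{3homTreewidthItem-logic})$, I would invoke the bijective $k$-pebble game characterization of $\counting^k$-equivalence: from a winning Spoiler strategy one unfolds a $\sigma$-structure of treewidth $<k$ whose homomorphism counts into $\bfa$ and $\bfb$ differ.

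The main technical obstacle is making the translation $\counting^2(\astarkbf)\to\counting^k(\bfa)$ respect the variable budget: two $\astark$-variables could naively unpack into $2k$ elements of $A$, exceeding the $k$-variable bound. This is addressed by rewriting $\counting^2$-sentences so that at most one $\astark$-variable is ``active'' at any time, interleaving coordinate access with quantification so that the second tuple is handled by re-quantification rather than being held simultaneously. Combined with the extension of Theorem~\ref{le:digraph} to include $\counting^2$-equivalence and with the generalized Dvor{\'a}k/Dell-Grohe-Rattan argument, this closes the loop $(\ref{1homTreewidthItem})\Leftrightarrow(\ref{3homTreewidthItem-logic})\Leftrightarrow(\ref{2homTreewidthItem})$.
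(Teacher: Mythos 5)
Your plan takes a genuinely different route from the paper's. The paper's own proof deliberately avoids any direct $\counting^2 \leftrightarrow \counting^k$ translation: it cites Hella for the equivalence of $\counting^k$-equivalence with Duplicator winning the bijective $k$-pebble game on $(\bfa,\bfb)$, cites Dawar--Jakl--Reggio and Butti--Dalmau for the equivalence with homomorphism counts from treewidth-$<k$ structures, and then reduces the new content to a single combinatorial implication: $\astarkbf \eqone \bstarkbf$ iff Duplicator has a winning strategy in the bijective $k$-pebble game. Both directions of that implication are proved by massaging equitable partitions of $\astarkbf \cup \bstarkbf$ --- in one direction reading off a winning strategy $W$ from an equitable partition, and in the other closing $W$ under ``odd chains'' (to get $\overline W$) and then taking a transitive closure to obtain an equitable partition. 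The hypothesis $r \leq k$ is used precisely to simplify $\stark$ (identifying $R(\ba)$ with $\ba$, dropping $R_\bi$, $R_S$, and long projections), which makes the partition argument tractable.

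Two points in your sketch are genuine gaps. First, the lemma you state as ``for any similar $\sigma'$-structures $\bfc,\bfd$, $\bfc\eqone\bfd$ iff $\bfc\equiv_{\counting^2}\bfd$'' is wrong in the generality you claim it. Recall that $\eqone$ in this paper is defined through the \emph{factor graph}, whose constraint nodes encode full $r$-ary co-occurrence information, whereas $\counting^2$ over a signature with a relation of arity $\geq 3$ can only form atoms $R(t_1,\dots,t_r)$ using two distinct variables, so it is blind to $r$-tuples with three or more distinct entries. You only apply the claim to $\astarkbf,\bstarkbf$, which have unary and binary relations only, so the restricted statement you actually need is plausible --- but it still needs a proof (it does not appear in Theorem~\ref{le:digraph}, and the paper never establishes a logical characterization of $\eqone$ for general relational structures), and the way you state the lemma would have to be corrected. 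Second, and more seriously, the inter-translation $\counting^2(\astarkbf)\leftrightarrow\counting^k(\bfa)$ is the real technical load-bearing wall, and the sketch only discusses one direction and handles the variable-budget issue with a one-line ``keep one tuple active at a time'' remark. To carry this out you would need to show, for instance, that a $\counting^2$-formula over $\astarkbf$ which quantifies two tuples that are not related by any $T_{j,\bi}$ (hence could together reference up to $2k$ elements of $A$) can still be simulated with only $k$ elements of $\bfa$ held at once; and you would separately need the other direction, compiling a $\counting^k$-formula over $\bfa$ into a $\counting^2$-formula over $\astarkbf$ by carrying the ``current'' $\leq k$-tuple in one pebble and extending it via $\tjii$ with the second. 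This is roughly the same work the paper does with the bijective pebble game and equitable partitions, so the $\counting^2$ detour doesn't obviously buy you anything, and as written the plan does not yet constitute a proof of that central equivalence.
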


We note that $\equiv_k$ is {\em not} intended to be a strict generalization of $k$-WL to relational structures as $k$-WL and $\equiv_k$ do not necessarily coincide for graphs. Still, a comparison of Theorems \ref{th:homTreewidth} and Theorem \ref{th:kWL} reveals that $\equiv_k$ and $k$-WL are intimately related. In particular, in the case of graphs, $\equiv_k$ corresponds to $(k-1)$-WL for every $k\geq 2$.

We conclude this chapter by providing  proofs of Theorem \ref{le:digraph} and Theorem \ref{th:homTreewidth}.

\subsection{Proof of Theorem \ref{le:digraph}}

\ledigraph*

Before starting we note that two structures $\bfa$ and $\bfb$ have a common equitable partition if
there is an equitable partition $(\partp,\partq)=(\{\ipartp_i \mid i\in I\},\{\ipartq_j \mid j\in J\})$ of the disjoint union
$\bfa\cup \bfb$ such that 
\begin{enumerate}
    \item $|\ipartp_i\cap A|=|\ipartp_i\cap B|$ for every $i\in I$, and \label{1conditionPart}
    \item $|\ipartq_j\cap \con\bfa|=|\ipartq_j\cap \con\bfb|$ for every $j\in J$. \label{2conditionPart}
\end{enumerate}
Note that since $(\partp,\partq)$ is equitable, then any of conditions (\ref{1conditionPart}) or (\ref{2conditionPart}) implies the other. We shall freely switch between the two alternative definitions of common equitable partition. 

We also need a few additional definitions before embarking on the proof. 
We denote by $J_V$  the $V \times V$ matrix whose entries are all ones and we use $\oplus$ to denote direct sums of matrices. Let $M$ be a $V\times W$ matrix and consider two subsets $S_{V} \subseteq V$, $S_{W} \subseteq W$. The restriction of $M$ to $S_{V}$ and $S_{W}$, denoted by $M[S_{V},S_{W}]$, is the matrix obtained by removing from $M$ all the rows that do not belong in $S_{V}$ and all the columns that do not belong in $S_{W}$.

A matrix $M \in \mathbb{R}^{V\times W}$ is \textit{decomposable} if there exists a partition $(V_1,V_2)$ of $V$ 
and a partition $(W_1,W_2)$ of $W$ such that for every two distinct $i,j\in [2]$ and every $v\in V_i$ and $w\in W_j$, $M_{v,w}=0$. In this case we can write $M=M_1\oplus M_2$ where $M_i=M[V_i,W_i]$. Otherwise, $M$ is said to be \textit{indecomposable}.

Let $\matleft\in[0,1]^{V\times V}$ be a doubly stochastic matrix. Note that $\matleft$ has a unique decomposition $\matleft=\oplus_{i\in I} \matleft_i$ where each $\matleft_i$ is an indecomposable doubly stochastic matrix. The \textit{row partition} of $\matleft$ is defined to be the partition of $V$ into classes $\ipartp_i,i\in I$ where $\ipartp_i$ contains the rows of $\matleft_i$, and the \textit{column partition} is defined in an analogous manner.

We will also need the following lemma (see for example \cite{Scheinerman2011fractional}).

\begin{lemma} \label{le:indecomposable}
Let $\matleft\in \mathbb{R}^{V \times V}$, $\matright\in \mathbb{R}^{W \times W}$ be doubly stochastic indecomposable matrices.
\begin{enumerate}
    \item\label{1indecomposableItem} Let  ${\bf a},{\bf b}\in\mathbb{R}^{V}$ such that $\matleft\cdot {\bf a}={\bf b}$ and $\matleft^T\cdot {\bf b}={\bf a}$. Then, there exists $c\in\mathbb{R}$ such that \[{\bf a}={\bf b}=c\cdot {\bf 1.}\] 
    \item\label{2indecomposableItem} Let $M_1,M_2\in\mathbb{R}^{V\times W}$ such that $\matleft M_1=M_2\matright$ and $M_1\matright^T=\matleft^T M_2$. \label{miitem}
Then, there exist $c,d\in\mathbb{R}$ such that the following identities hold:
\[M_1{\bf 1}=M_2{\bf 1}=c\cdot{\bf 1} \hspace{2cm} {\bf 1}^T M_1={\bf 1}^T M_2=d \cdot {\bf 1}^T\]
\end{enumerate}
\end{lemma}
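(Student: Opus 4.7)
My plan is to first prove~(1) by a maximum-propagation argument along the bipartite support graph of $\matleft$, and then deduce~(2) from~(1) by multiplying the given matrix identities by $\mathbf{1}$ on the right and on the left.

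For~(1), let $G$ denote the bipartite graph with vertex set $V_{\text{row}}\sqcup V_{\text{col}}$ (two disjoint copies of $V$) and edges $\{(v,w): \matleft[v,w]>0\}$. The key reformulation is that $\matleft$ is indecomposable iff $G$ is connected: any nontrivial decomposition with row partition $(V_1,V_2)$ and column partition $(W_1,W_2)$ corresponds to a partition of the vertices of $G$ into two nonempty parts with no edges between them, and conversely a disconnection of $G$ yields such a decomposition (both row and column parts are nonempty by double stochasticity, since no row or column of $\matleft$ can be identically zero).

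Now let $M:=\max_w \mathbf{a}[w]$ and pick $w^*\in V_{\text{col}}$ with $\mathbf{a}[w^*]=M$. Since $\matleft$ is doubly stochastic, each entry of $\matleft\mathbf{a}$ is a convex combination of entries of $\mathbf{a}$, so $\max_v \mathbf{b}[v]\leq M$; the same reasoning applied to $\mathbf{a}=\matleft^T\mathbf{b}$ gives the reverse inequality, hence these two maxima coincide. Writing $\mathbf{a}[w^*] = \sum_v \matleft[v,w^*]\mathbf{b}[v] = M$ as a convex combination of values $\leq M$ forces $\mathbf{b}[v] = M$ for every $v$ adjacent to $w^*$ in $G$; in turn, for each such $v$, the equality $\mathbf{b}[v]=\sum_w \matleft[v,w]\mathbf{a}[w]=M$ forces $\mathbf{a}[w]=M$ for every $w$ adjacent to $v$. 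Iterating along walks in $G$, the value $M$ propagates to every vertex by connectedness, yielding $\mathbf{a}=\mathbf{b}=M\cdot\mathbf{1}$.

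For~(2), I right-multiply $\matleft M_1 = M_2\matright$ by $\mathbf{1}$ and use $\matright\mathbf{1}=\mathbf{1}$ to get $\matleft(M_1\mathbf{1}) = M_2\mathbf{1}$; similarly, right-multiplying $M_1\matright^T=\matleft^T M_2$ by $\mathbf{1}$ and using $\matright^T\mathbf{1}=\mathbf{1}$ yields $M_1\mathbf{1} = \matleft^T(M_2\mathbf{1})$. Applying~(1) to $\matleft$ with $\mathbf{a}:=M_1\mathbf{1}$ and $\mathbf{b}:=M_2\mathbf{1}$ gives $M_1\mathbf{1}=M_2\mathbf{1}=c\cdot\mathbf{1}$ for some $c\in\mathbb{R}$. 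A symmetric manipulation — transposing both identities, right-multiplying by $\mathbf{1}$, and using $\matleft\mathbf{1}=\matleft^T\mathbf{1}=\mathbf{1}$ — produces $\matright(M_1^T\mathbf{1})=M_2^T\mathbf{1}$ and $M_1^T\mathbf{1}=\matright^T(M_2^T\mathbf{1})$; applying~(1) to the indecomposable matrix $\matright$ now yields $M_1^T\mathbf{1}=M_2^T\mathbf{1}=d\cdot\mathbf{1}$, i.e., $\mathbf{1}^T M_1=\mathbf{1}^T M_2=d\cdot\mathbf{1}^T$. The main (relatively minor) technical ingredient is the bipartite-graph reformulation of indecomposability together with the equality case in the convex combinations; once these are set, the propagation step and the reduction of~(2) to~(1) are essentially mechanical.
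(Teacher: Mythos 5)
Your proof is correct and, for item~(2), follows exactly the paper's route: right-multiply the two identities by $\mathbf{1}$, use double stochasticity of $\matright$ (resp.\ $\matleft$) to reduce to the hypotheses of item~(1), and then transpose to get the second identity. The only difference is that the paper simply cites item~(1) from Scheinerman--Ullman (Theorem~6.2.4(ii)) whereas you supply the standard self-contained proof of it (connectedness of the bipartite support graph plus maximum propagation through the equality case of convex combinations), which is sound.
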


\begin{proof}
Item (\ref{1indecomposableItem}) is Theorem 6.2.4 (ii) from \cite{Scheinerman2011fractional}. We enclose a proof of item (\ref{2indecomposableItem}). Let ${\bf a}=M_1\cdot {\bf 1}$ and ${\bf b}=M_2\cdot {\bf 1}$. We have $\matleft M_1\cdot {\bf 1}=M_2 \matright\cdot{\bf 1}=M_2\cdot {\bf 1}$ since $\matright$ is doubly stochastic, which implies $\matleft\cdot {\bf a}={\bf b}$. Similarly, we have $M_1\cdot {\bf 1}=M_1 \matright^T\cdot {\bf 1}=\matleft^T M_2\cdot {\bf 1}$ which implies ${\bf a}=\matleft^T\cdot {\bf b}$. Then, we apply item (\ref{1indecomposableItem}) to deduce that there exists $c \in \mathbb{R}$ such that $M_1{\bf 1}=M_2{\bf 1}=c\cdot{\bf 1}$. The other identity is proven in an analogous way.
\end{proof}

 $(\ref{1digraphItem})\Rightarrow (\ref{3digraphItem})$. Let $\matleft$ and $\matright$ be doubly stochastic matrices satisfying (\ref{1digraphItem}). Let $\matleft=\oplus_{i\in I} \matleft_i$, $\matright=\oplus_{j\in J} \matright_j$ be decompositions of $\matleft$ and $\matright$. Denote by $(\partp^{\bfa},\partp^{\bfb})$ the column and row partitions of $\matleft$ respectively and by $(\partq^{\bfa},\partq^{\bfb})$ the column and row partitions of $\matright$.
Then, the restrictions of $M_{\bfa}^{\ell}$ and $M_{\bfb}^{\ell}$ to $(\ipartp^{\bfa}_{i},\ipartq^{\bfa}_{j})$, $(\ipartp^{\bfb}_{i},\ipartq^{\bfb}_{j})$ respectively  satisfy 
\[\matleft_i M_{\bfa}^{\ell}[\ipartp^{\bfa}_{i},\ipartq^{\bfa}_{j}]= M_{\bfb}^{\ell}[\ipartp^{\bfb}_{i},\ipartq^{\bfb}_{j}] \matright_j,\]
\[M_{\bfa}^{\ell}[\ipartp^{\bfa}_{i},\ipartq^{\bfa}_{j}]\matright_j^T= \matleft_i^T M_{\bfb}^{\ell}[\ipartp^{\bfb}_{i},\ipartq^{\bfb}_{j}].\]

Hence it follows from Lemma \ref{le:indecomposable} that there exist $c^{\ell}_{i,j}$, $d^{\ell}_{i,j}$ such 
that 
\begin{align*}
&M_{\bfa}^{\ell}[\ipartp^{\bfa}_{i},\ipartq^{\bfa}_{j}]\cdot{\bf 1}=M_{\bfb}^{\ell}[\ipartp^{\bfb}_{i},\ipartq^{\bfb}_{j}]\cdot{\bf 1}=c^{\ell}_{i,j}\cdot {\bf 1} \\
&{\bf 1}^T \cdot M_{\bfa}^{\ell}[\ipartp^{\bfa}_{i},\ipartq^{\bfa}_{j}]={\bf 1}^T \cdot M_{\bfb}^{\ell}[\ipartp^{\bfb}_{i},\ipartq^{\bfb}_{j}]={\bf 1}^T\cdot d^{\ell}_{i,j} 
\end{align*}
which is equivalent to conditions (\ref{eq:eqpartQ}) and (\ref{eq:eqpartP}) showing that partitions $(\partp^{\bfa},\partq^{\bfa})$ and $(\partp^{\bfb},\partq^{\bfb})$ have the same parameters.

$(\ref{3digraphItem})\Rightarrow (\ref{2digraphItem})$. Assume that $(\{\ipartp^{\bfa}_i \mid i\in I\},\{\ipartq^{\bfa}_j \mid j\in J\})$ and $(\{\ipartp^{\bfb}_i \mid i\in I\},\{\ipartq^{\bfb}_j \mid j\in J\})$ define a common equitable partition of $\bfa$ and $\bfb$. We shall prove that any two elements that are in the same set of the partition of $\bfa$ and $\bfb$ must have the same iterated degree. That is, we show by induction on $k$ that for all $k\geq 0$, $\delta_k^{\bfa}(a)=\delta_{k}^{\bfb}(b)$ whenever there exists $i\in I$ such that $a\in \ipartp^{\bfa}_i$ and $b\in \ipartp^{\bfb}_i$ (the case $R(\textbf{a}) \in \ipartq^{\bfa}_j,R(\textbf{b}) \in \ipartq^{\bfb}_j$ is analogous).
The base case ($k=0$) is immediate. For the inductive case, assume that the statement holds for $k-1$. 
Let $(\ell,\delta)$ be any arbitrary element in $\delta^{\bfa}_k(a)$. We shall show that it has the same multiplicity in $\delta^{\bfa}_k(a)$ and in $\delta^{\bfb}_{k}(b)$. By the inductive hypothesis it follows that there exists $J_{\delta}\subseteq J$ such that 
\[\{R(\textbf{a}) \in \con\bfa \mid \delta^{\bfa}_{k-1}(R(\textbf{a}))=\delta\}=\bigcup_{j\in J_{\delta}} \ipartq_{j}^{\bfa}\]
\[\{R(\textbf{b})\in  \con\bfb \mid \delta^{\bfb}_{k-1}(R(\textbf{a}))=\delta\}=\bigcup_{j\in J_{\delta}} \ipartq_{j}^{\bfb}\]
Then, the multiplicity of $(\ell,\delta)$ in $\delta^{\bfa}_k(a)$ is
\begin{equation}\label{eq:multiplicity}
|\{R(\textbf{a}) \in\bigcup_{j\in J_{\delta}} \ipartq_{j}^{\bfa} \mid  M_{\bfa}[a,R(\textbf{a})]=\ell\}|.
\end{equation}
Since $(\partp^{\bfa},\partq^{\bfa})$ is an equitable partition it follows that (\ref{eq:multiplicity}) is equal to $\sum_{j\in J_{\delta}} c_{i,j}^{\bfa,\ell}$
where $c_{i,j}^{\bfa,\ell}$ are the parameters of the partition.

It can analogously be shown that the the multiplicity of $(\ell,\delta)$ in $\delta^{\bfb}_k(b)$ is
$\sum_{j\in J_{\delta}} c_{i,j}^{\bfb,\ell}$.  Since $(\partp^{\bfa},\partq^{\bfa})$ and $(\partp^{\bfb},\partq^{\bfb})$ define
a common equitable partition it follows that $c^{\bfa,\ell}_{i,j}=c^{\bfb,\ell}_{i,j}$ for every $i\in I,j\in J$ and we are done.

$(\ref{2digraphItem})\Rightarrow (\ref{3digraphItem})$. Assume that $\bfa$ and $\bfb$ have the same iterated degree sequence. Let $\partp=\{\ipartp_i \mid i\in I\}$ and $\partq=\{\ipartq_j \mid j\in J\}$ be the partition of $A\cup B$ and $\con\bfa\cup \con\bfb$ induced by the fixed point $\delta^{\bfa \cup \bfb}$. It is easy to verify that $(\partp,\partq)$ defines a common equitable partition.

$(\ref{3digraphItem})\Rightarrow (\ref{1digraphItem})$. Assume that 
$(\{\ipartp^{\bfa}_i \mid i\in I\},\{\ipartq^{\bfa}_j \mid j\in J\})$ and $(\{\ipartp^{\bfb}_i \mid i\in I\},\{\ipartq^{\bfb}_j \mid j\in J\})$ define a common equitable partition. For ease of notation it is convenient that matrices $M^{\ell}_\bfa$ and $M^{\ell}_\bfb$
are indexed by the same sets of rows $V$ and columns $W$, which can be done by fixing a one-to-one correspondence
between $A$, $B$, and $V$ and similarly between $\con\bfa$, $\con\bfb$, and $W$. Furthermore, this correspondence can be established
in such a way that under 
it $\partp^{\bfa}$ and $\partp^{\bfb}$ become the same partition $\partp$ over $V$, and $\partq^{\bfa}$ and $\partq^{\bfb}$ become the same partition $\partq$ over $W$.

Define $\matleft = \oplus_{i \in I} \frac{1}{|\ipartp_{i}|} J_{\ipartp_i}$ and $\matright = \oplus_{j \in J} \frac{1}{|\ipartq_{j}|} J_{\ipartq_j}$. Clearly $\matleft$ and $\matright$ are doubly stochastic. It remains to show that $\matleft M_{\bfa}^{\ell}=M_{\bfb}^{\ell}\matright$ and $M_{\bfa}^{\ell}\matright^{T}=\matleft^{T}M_{\bfb}^{\ell}$ for all labels $\ell \in \LAB_{\sig}$. 
Now since $(\partp,\partq)$ is an equitable partition, it follows that for every $i \in I$, $j \in J$, and $\ell \in \LAB_{\sig}$ there exist parameters $c^{\ell}_{ij}$ and $d^{\ell}_{ji}$ which satisfy conditions (\ref{eq:eqpartQ}) and (\ref{eq:eqpartP}).

Then for all $i \in I$, $j \in J$ and $\ell \in \LAB_{\sig}$ it holds that 
\begin{equation*}
|\ipartp_{i}| c^{\ell}_{ij} = |\ipartq_{j}|d^{\ell}_{ji},
\end{equation*}
and that the sum of the elements of the respective $\ipartp_{i}\times \ipartq_{j}$ portions of $M_{\bfa}^{\ell}$ and $M_{\bfb}^{\ell}$ are equal. Now let $b \in \ipartp_{i} \cap B$ and $R(\textbf{a}) \in \ipartq_{j} \cap \con\bfa$. Then 
\begin{align*}
    (\matleft M_{\bfa}^{\ell})[b,R(\textbf{a})] = \frac{1}{|\ipartp_{i}|} d_{ji}^{\ell} = \frac{1}{|\ipartq_{j}|} c_{ij}^{\ell} = (M_{\bfb}^{\ell}\matright)[b,R(\textbf{a})], 
\end{align*}
showing that $\matleft M_{\bfa}^{\ell} = M_{\bfb}^{\ell}\matright$ for all labels $\ell \in \LAB_{\sig}$ as required, and similarly, noting that $\matleft^{T}=\matleft$ and $\matright^{T}=\matright$, we obtain that $M_{\bfa}^{\ell}\matright^{T} = \matleft^{T}M_{\bfb}^{\ell}$ for all $\ell \in \LAB_{\sig}$ too.

$(\ref{1digraphItem})\Rightarrow (\ref{5digraphItem})$. Assume that $\matleft$ and $\matright$ satisfy (\ref{1digraphItem}). We shall prove that $\matleft$ satisfies (\ref{5digraphItem}).

Let $R$ be the unique binary edge relation symbol in $\sig$. We define two matrices $U_{\bfa}$, $Z_{\bfa}$ where $U_{\bfa}=(M_{\bfa}^{(1,R)}+M_{\bfa}^{(2,R)})/\sqrt{2}$ and $Z_{\bfa}=2M_{\bfa}^{(1,R)}/\sqrt{2}$.
Then we have that $\na=U_{\bfa} U_{\bfa}^T-Z_{\bfa} Z_{\bfa}^T$. We similarly obtain $\nb=U_{\bfb} U_{\bfb}^T- Z_{\bfb}Z_{\bfb}^T$ if we define $U_{\bfb}$ and $Z_{\bfb}$ accordingly. 
Note that the identities of (\ref{1digraphItem}) are still satisfied if we replace $M_{\bfa}^{\ell}$ and $M_{\bfb}^{\ell}$ by $U_{\bfa}$
and $U_{\bfb}$ or $Z_{\bfa}$
and $Z_{\bfb}$ respectively.

It then follows that
\[\matleft U_{\bfa} U_{\bfa}^T= U_{\bfb} \matright U_{\bfa}^T = U_{\bfb} U_{\bfb}^T \matleft. \] and \[\matleft Z_{\bfa} Z_{\bfa}^T= Z_{\bfb} \matright Z_{\bfa}^T = Z_{\bfb} Z_{\bfb}^T \matleft. \] 
Therefore, $\matleft \na = \nb \matleft$ as required.

$(\ref{5digraphItem})\Rightarrow (\ref{3digraphItem})$. Let $R$ be the unique binary edge relation symbol in $\sig$. Let $\matleft$ be a doubly stochastic matrix satisfying $\matleft\na= \nb \matleft$. Since $\na$ and $\nb$ are symmetric it also holds that $\na \matleft^T= \matleft^T \nb$.  Let $\matleft=\oplus_{i\in I} \matleft_i$ be a decomposition of $\matleft$ and denote by $\{\ipartp_i^{\bfa} \mid i\in I\}$ and $\{\ipartp_i^{\bfb} \mid i\in I\}$ the column and row partitions of $\matleft$. Applying the same reasoning as in $(\ref{1digraphItem})\Rightarrow (\ref{3digraphItem})$ it follows that for every $i,i'\in I$, there exists $c_{i,i'}$ such that for every $\bfd\in \{\bfa,\bfb\}$ and every $d\in \ipartp_i^{\bfd}$
\[\{ d'\in \ipartp^{\bfd}_{i'} \mid R(d,d')\in \con\bfa\}=c_{i,i'}  \]
It then easily follows that if we let $\partq^{\bfa}$, $\partq^{\bfb}$ be the partitions of $\con\bfa$, $\con\bfb$ respectively given by assigning two edges to the same partition class if the  partition classes of their vertices coincide then $(\partp^{\bfa},\partq^{\bfa})$ and $(\partp^{\bfb},\partq^{\bfb})$ define
a common equitable partition of $\bfa$ and $\bfb$.

\subsection{Proof of Theorem \ref{th:homTreewidth}}

In order to prove Theorem \ref{th:homTreewidth}, it will be useful to introduce a simple combinatorial game, which is a variant of the Ehrenfeucht-Fra\"{i}ssé games \cite{Fraisse1955,Ehrenfeucht1961}, that characterizes equivalence in $\counting^k$.

The \textit{bijective $k$-pebble game} is played by two players, $\spo$ and $\dup$, by placing $k$ pairs of pebbles on a pair of structures $\bfa$, $\bfb$ of the same size (if $\bfa$ and $\bfb$ have different sizes, $\spo$ is always assumed to win the game). We shall denote each pair of pebbles by $(x_i,y_i)$, where $x_i$ belongs to $\spo$ and $y_i$ belongs to $\dup$. At every round, $\spo$ picks up a pebble $x_i$, and $\dup$ picks up the corresponding pebble $y_i$. At this point, $\dup$ chooses a bijection $f$ between $A$ and $B$. Then, $\spo$ places a pebble $x_i$ on an element $a \in A$, and $\dup$ must place the corresponding pebble $y_i$ on $f(a) \in B$.

$\dup$ wins a round of the bijective $k$-pebble game if the partial map defined by $x_i \mapsto y_i$ (i.e., where the element of $A$ under pebble $x_i$ is mapped to the element of $B$ under pebble $y_i$) is a partial isomorphism between (the respective vertex-induced substructures of) $\bfa$ and $\bfb$. Otherwise, $\spo$ wins the round.
We say that $\dup$ has a \textit{winning strategy} for the bijective $k$-pebble game if she has a strategy to win every round of the game (note that the game has infinitely many rounds).

It is a well-known result of Hella \cite{Hella1996hierarchies} (see also \cite{Grohe2015Pebble,Abramsky2017pebbling}) that two structures $\bfa$ and $\bfb$ satisfy the same formulae in the logic $\counting^k$ if and only if Duplicator has a winning strategy for the bijective $k$-pebble game on $\bfa$, $\bfb$.

On the other hand, the extension of the Lov{\'a}sz-type result of Dvo\v{r}{\'{a}}k from graphs to relational structures (i.e., the equivalence of (\ref{2homTreewidthItem}) and (\ref{3homTreewidthItem-logic}) in Theorem \ref{th:homTreewidth}) was shown by Dawar, Jakl and Reggio \cite{DawarJR21} using a categorical approach based on the game comonads of Abramsky, Dawar and Wang \cite{Abramsky2017pebbling}. Additionally, an independent proof of this result was given by Butti and Dalmau \cite{ButtiD21fractional} using purely combinatorial techniques.

Therefore, all together, we have the following result.

\begin{theorem} \label{th:tree+logic}
Let $r$ be the maximum arity among all relations in $\sig$ and assume that $r\leq k$. Then for every pair of structures $\bfa$, $\bfb$ the following are equivalent:
\begin{enumerate}
    \item $\bfa \eqwl[k] \bfb$; \label{item:eqwl}
    \item $\hom(\bfx;\bfa)=\hom(\bfx;\bfb)$ for every $\sig$-structure $\bfx$ of treewidth $<k$;\label{1treewidth}
    \item $\bfa$ and $\bfb$ satisfy the same formulae in the logic $\counting^k$. \label{item:logic}
    \item Duplicator has a winning strategy for the bijective $k$-pebble game on $\bfa$, $\bfb$. \label{item:pebble}
\end{enumerate}
\end{theorem}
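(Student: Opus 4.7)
The plan is to invoke the auxiliary Theorem \ref{th:tree+logic}, which packages the equivalence of (\ref{2homTreewidthItem}), (\ref{3homTreewidthItem-logic}), and the additional condition that Duplicator has a winning strategy in the bijective $k$-pebble game; these latter equivalences follow from prior work already cited in the excerpt. Consequently, the only new content is to show that (\ref{1homTreewidthItem}), namely $\bfa \eqwl[k] \bfb$, is equivalent to one of the other three conditions, and I would choose the bijective $k$-pebble game characterization since it pairs most naturally with the partition-theoretic content of $\eqwl[1]$.

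By definition, $\bfa \eqwl[k] \bfb$ says $\astarkbf \eqone \bstarkbf$, and by Theorem \ref{le:digraph} this is equivalent to the existence of a common equitable partition $(\partp,\partq)$ of $\astarkbf$ and $\bstarkbf$. Each class of $\partp$ groups together, for some $j \leq k$, certain tuples in $A^j$ and $B^j$, and each class of $\partq$ groups together constraints; the "equitable" condition says that the class-to-class multiplicities match between $\bfa$ and $\bfb$ for every label in $\LAB_{\sig^*_k}$. These labels track equality patterns (through $T_{j,S}$ and $R_S$) and projections to subtuples (through $\tjii$ and $R_\bi$), and hence the partition encodes all local information of the structures through tuples of length at most $k$.

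For the direction $\eqwl[k] \Rightarrow$ game, I would describe $\dup$'s winning strategy as maintaining the invariant that the pebbled tuples $(\ba, \bb) \in A^j \times B^j$ always lie in the same class of $\partp$. At each round, after $\spo$ lifts a pebble the projection of $(\ba,\bb)$ to the remaining coordinates lands in a common class via the parameters for the $\tjii$ labels; then $\dup$ produces a bijection $f \colon A \to B$ by gluing together the local bijections arising from the matched partition parameters for the projection labels with the last coordinate added. The fact that placing any element yields a partial isomorphism follows from the matched parameters for $R_S$ and $R_\bi$, which ensure that $\ba$ and $\bb$ project into the same constraints of $\bfa$ and $\bfb$ (using the hypothesis $r \leq k$, so every constraint is encoded at some length $\leq k$). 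For the converse game $\Rightarrow \eqwl[k]$, I would define classes on $A^{\leq k} \sqcup B^{\leq k}$ by placing $\ba \sim \bb$ iff $\dup$ has a winning strategy starting from the position that places the first $j$ pebbles on $\ba, \bb$, and extend this to constraints by identifying $R(\ba)$ with the class of $\ba$. Equitability then follows by examining the bijections $\dup$ is forced to provide: their existence matches the parameters for the projection labels, and the partial-isomorphism condition matches the parameters for the constraint labels.

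The main obstacle will be the careful bookkeeping between the objects of the pebble game (tuples in $A$, $B$ with $\leq k$ pebbles placed) and the objects of $\astarkbf, \bstarkbf$ (tuples of all lengths $\leq k$, together with constraints). In particular, constraints are not directly manipulated in the game, so one must check that the equitability with respect to the $R_S$ and $R_\bi$ labels can be reconstructed purely from $\dup$'s control over placements of single pebbles; this is where the condition $r \leq k$ is essential, as it guarantees that the entire argument tuple of any constraint can be held under the pebbles simultaneously, so that a winning strategy witnesses all relevant local isomorphisms at once.
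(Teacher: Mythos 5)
Your overall approach matches the paper's: you correctly reduce to showing $(\ref{item:eqwl}) \Leftrightarrow (\ref{item:pebble})$, since $(\ref{1treewidth})\Leftrightarrow(\ref{item:logic})\Leftrightarrow(\ref{item:pebble})$ are cited results, and your $(\ref{item:eqwl}) \Rightarrow (\ref{item:pebble})$ direction (maintaining the invariant that pebbled pairs lie in a common class of the equitable partition, and gluing the matched $\tjii$-parameters into the round-by-round bijection) is precisely the paper's construction.

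The gap is in $(\ref{item:pebble}) \Rightarrow (\ref{item:eqwl})$. You propose to define the partition of $A^{\le k}\sqcup B^{\le k}$ as the equivalence closure of the ``winning position'' relation $W$, and then assert that equitability ``follows by examining the bijections Duplicator is forced to provide.'' But condition (iii) of a winning strategy only furnishes a bijection $f\colon A\to B$ for pairs $(\ba,\bb)$ that are \emph{directly} in $W$; two tuples $\ba,\ba'\in A^j$ may land in the same class only through a chain $\ba,\bb_0,\ba_1,\bb_1,\dots$ of intermediate $W$-pairs, and the local bijections given for consecutive pairs are a priori unrelated. Without reconciling them, you cannot exhibit a single class-preserving bijection transporting the incoming $T_{j+1,\bi(j+1,i)}$-edges of $\ba$ to those of a chain partner, and neither the incoming-edge parameters $d^\ell_{j,i}$ nor the balance condition $|\ipartp_i\cap\astark|=|\ipartp_i\cap\bstark|$ are justified. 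This is exactly what the paper addresses by introducing the odd-chain closure $\overline{W}$ of $W$ and proving the nontrivial lemma that the composition $f_{n,n}\circ f^{-1}_{n,n-1}\circ\cdots\circ f_{1,1}\circ f^{-1}_{1,0}\circ f_{0,0}$ of Duplicator's bijections again witnesses condition (iii) for pairs in $\overline{W}$, after which every class is linked by a length-two chain and equitability becomes easy to check. Your sketch would have to rediscover this composition argument to become rigorous.

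A secondary, more cosmetic difference: before building the partition, the paper first prunes $\sig^\stark$ (identifying $R(\ba)$ with $\ba$, and eliminating $R_\bi$, $R_S$ for $S\neq\emptyset$, and every $\tjii$ with $|[j]\setminus\{\bi\}|>1$, all definable from the remaining symbols given $r\le k$), so that the only binary labels left correspond to inserting or deleting a single coordinate. This is what makes each pebble-game round line up with exactly one projection label. You handle all the labels directly, which is workable in principle, but you should record this reduction (or some equivalent) explicitly, otherwise the equitability verification for the constraint relations and multi-coordinate projections becomes noticeably more cumbersome than what you have sketched.
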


\begin{proof}
By virtue of the above discussion, it only remains to prove the equivalence of (\ref{item:eqwl}) and (\ref{item:pebble}).
First, let us make the definition of winning strategy a little more formal.

For a $j$-tuple $\bd =(d_1,\ldots,d_j)$, an element $d$, and $i \in [j+1]$, define $\bd^i_d = (d_1,\ldots,d_{i-1},d,d_{i},\ldots,d_j)$. Additionally, for $j \in [k]$ and $i \in [j]$, define $\bi(j,i)=(1,\ldots,i-1,i+1,\ldots,j)$. 
Then, a winning strategy for the bijective $k$-pebble game is a non-empty set $W \subseteq \cup_{0 \leq j \leq k} (A^j \times B^j)$ such that, for every $\ba=(a_1,\ldots,a_j)$ and $\bb=(b_1,\ldots,b_j)$ with $(\ba,\bb) \in W$, the following conditions hold:
\begin{enumerate}[(i)]
    \item\label{strat:1} The partial map given by $a_i \mapsto b_i$, $i \in [j]$ is a partial isomorphism between $\bfa$ and $\bfb$;
    \item\label{strat:2} If $j \geq 1$, then $(\pi_{\bi(j,i)}\ba,\pi_{\bi(j,i)}\bb) \in W$ for every $i \in [j]$;
  \item\label{strat:3} If $j<k$, there exists a bijection $f:A \to B$ such that $(\ba^i_a,\bb^i_{f(a)})\in W$ for every $a \in A$ and every $i \in [j+1]$. 
\end{enumerate}
In what follows, we will assume without loss of generality that $A$ and $B$ are disjoint, so when we talk about the union of $\bfa$ and $\bfb$, we will always mean the \textit{disjoint} union.

$(\ref{item:eqwl}) \Rightarrow (\ref{item:pebble})$.
It will be convenient to assume that $\astark$ and $\bstark$ additionally include a tuple of null arity (this will correspond to the configuration where no pebbles are placed yet). Let $(\partp,\partq)$ be an equitable partition of $\astarkbf \cup \bstarkbf$. A winning strategy $W$ for $\dup$ can be obtained as the set of all pairs $(\ba,\bb) \in \astark \times \bstark$ such that $\ba$ and $\bb$ belong to the same class of $\partp$.

To see that $W$ is a winning strategy, first observe that, by the definition of the transformation $\stark$, any two tuples of $\astarkbf \cup \bstarkbf$ in the same partition class of $\partp$ must have the same isomorphism type in the original structure $\bfa \cup \bfb$. It follows that for all elements $(\ba,\bb) \in W$, the map $a_i \mapsto b_i$ is a partial isomorphism from $\bfa$ to $\bfb$, hence $W$ satisfies condition $(\ref{strat:1})$. 

Second, it is easy to see that $W$ satisfies condition (\ref{strat:2}), since for each $j \in [k]$ and $\bd \in A^j \cup B^j$, $\bd$ has a unique out-neighbor of type $T_{j,\bi(j,i)}$, which is precisely $\pi_{\bi(j,i)}\bd$. Therefore, for every $(\ba,\bb) \in W$, $\pi_{\bi(j,i)}\ba$ and $\pi_{\bi(j,i)}\bb$ must belong to the same class of $\partp$ and hence $(\pi_{\bi(j,i)}\ba,\pi_{\bi(j,i)}\bb) \in W$. 

Finally, to see that $W$ satisfies condition (\ref{strat:3}), for each $(\ba,\bb) \in W$ of arity $j < k$ we need to find a bijection $f:A \to B$ in such a way that for every $a \in A$ and every $i \in [j+1]$, the tuples $\ba^i_a$ and $\bb^i_{f(a)}$ belong to the same partition class of $\partp$.
Since $\ba$ and $\bb$ belong to the same class of $\partp$ and  $(\partp,\partq)$ is equitable, there is a one-to-one correspondence that respects the partition between the constraints that $\ba$ and $\bb$ participate in; in particular, this holds for the constraints of type $T_{j+1,\bi(j+1,i)}$. Since for each $a \in A$ and $b \in B$, $T_{j+1,\bi(j+1,i)}(\ba^i_{a},\ba) \in \con{\astarkbf}$ and $T_{j+1,\bi(j+1,i)}(\bb^i_{b},\bb) \in \con{\bstarkbf}$, there is a one-to-one partition-preserving correspondence between tuples $\ba^i_{a}\in A^{j+1}$, $\bb^i_{b} \in B^{j+1}$ that induces an appropriate bijection between $A$ and $B$. In particular, the unary constraints of $\astarkbf$, $\bstarkbf$ of the form $T_{j+1,S}$, $S \subseteq[j+1]$ imply that $f(a_i)=b_i$ for each $i \in [j]$.

$(\ref{item:pebble}) \Rightarrow (\ref{item:eqwl})$. We begin with the simple observation that, since $\ar(R) \leq k$ for all $R \in \sig$, the transformation $\stark$ can be substantially simplified without its basic properties being affected (particularly, Lemma \ref{le:starkSherali} and the definition of $\eqwl[k]$ remain unchanged). Specifically, this simplification consists of the following changes:
\begin{enumerate}
    \item \label{obs:1pebbles} For each $R \in \sig$, identify $R(\ba)$ with $\ba$, so the universe of $\astarkbf$ is just $\cup_{j \leq k}A^j$;
    \item \label{obs:2pebbles} Remove from $\sig^\stark$ all binary relation symbols of the form $R_\bi$ (since for all $\textbf{i}\in [\ar(R)]^{j}$ and all $\ba \in R^{\bfa}$, $R_\bi(\ba,\pii\ba)$ is equivalent to $T_{\ar(R),\bi}(\ba,\pii\ba) \land R_\emptyset(\ba)$); 
    \item \label{obs:3pebbles} Remove from $\sig^\stark$ all unary relation symbols of the form $R_S$ for all $S \neq \emptyset$ (since for all $S \subseteq [\ar(R)]$ and all $\ba \in R^\bfa$, $R_S(\ba)=T_{\ar(R),S}(\ba) \land R_\emptyset(\ba)$);
    \item \label{obs:4pebbles} Remove from $\sig^\stark$ all binary relation symbols of the form $\tjii$ whenever $|[j] \setminus \{\bi\}|> 1$. This is because, if $|[j] \setminus \{\bi\}|> 1$, then there exists a sequence $\bi_1 \in [j_0]^{j_1}$, $\bi_2 \in [j_1]^{j_2}$, $\ldots \ $, $\bi_m \in [j_{m-1}]^{j_m}$ with $j_0=j$ and $m \geq 2$ such that $|[j_{i-1}] \setminus \{\bi_i\}| \leq 1$ for each $i \in [m]$, and $\pi_{\bi_m} \ldots \pi_{\bi_2} \pi_{\bi_1}= \pii$. Then, we have that \begin{align*}
        \tjii(\ba,\pii\ba) & = T_{j,\bi_{1}}(\ba,\pi_{\bi_1}\ba) \land T_{j_1,\bi_{2}}(\pi_{\bi_1}\ba,\pi_{\bi_2}\pi_{\bi_1}\ba) \land \ldots \\
        & \ldots \land T_{j_{m-1},\bi_{m}}(\pi_{\bi_{m-1}}\ldots\pi_{\bi_1}\ba,\pi_{\bi_{m}} \pi_{\bi_{m-1}}\ldots \pi_{\bi_1}\ba).
    \end{align*}
\end{enumerate}

Now, assume that $\dup$ has a winning strategy for the bijective $k$-pebble game. We need to find an equitable partition of $\astarkbf \cup \bstarkbf$ such that each class of the partition has the same number of elements from $\astark$ and $\bstark$.
Since $\astarkbf \cup \bstarkbf$ is a (vertex- and edge-coloured) digraph, it is enough to define a partition $\partp=\{\ipartp_i \mid i \in I\}$ of $\astark \cup \bstark$ where any two elements in the same partition class of $\partp$ are subject to the same unary constraints, and have the same number of in- and out-neighbors in any other class of $\partp$ connected by an edge of any given colour.

Let $W$ be a winning strategy for $\dup$. We define the set $\overline{W}$ to contain precisely all the pairs $(\ba,\bb) \in \cup_{j \leq k} A^j \times B^j$ such that there exist $\ba_0,\ba_1,\ldots,\ba_n \in A^j$ and  $\bb_0,\bb_1,\ldots,\bb_n \in B^j$ such that $\ba_0=\ba$, $\bb_n=\bb$,  $(\ba_0,\bb_0) \in W$, and for each $i \in [n]$, the pairs $(\ba_i,\bb_{i-1})$ and $(\ba_i,\bb_i)$ both belong to $W$. That is, we think of $\overline{W}$ as the closure of $W$ under odd chains (see Figure \ref{fig:W-odd-chains}).

\begin{figure}
     \centering
\begin{tikzpicture}
    \node at (0,0) (n1) {$\ba_0$};
    \node at (0,2) (n2) {$\bb_0$};
    \node at (2,0) (n3) {$\ba_1$};
    \node at (2,2) (n4) {$\bb_1$};
    \node at (4,0) (x) {$\ldots$};
    \node at (4,2) (y) {$\ldots$};
    \node at (6,0) (n7) {$\ba_n$};
    \node at (6,2) (n8) {$\bb_n$};
    \draw[-latex, line width=0.25mm] (n1)--(n2);
    \draw[-latex, line width=0.25mm] (n3)--(n2);
    \draw[-latex, line width=0.25mm] (n3)--(n4);
    \draw[-latex, line width=0.25mm] (2.5,1.5)--(n4);
    \draw[dashed, line width=0.25mm] (3,1)--(2.5,1.5);
    \draw[line width=0.25mm] (n7)--(5.5,0.5);
    \draw[dashed, line width=0.25mm] (5.5,0.5)--(5,1);
    \draw[-latex, line width=0.25mm] (n7)--(n8);
    \draw[-latex, dashed, gray, line width=0.15mm] (n1)--(n4);
    \draw[-latex, dashed, gray, line width=0.15mm] (n1)--(n8);
    \draw[-latex, dashed, gray, line width=0.15mm] (n3)--(n8);
    \draw[-latex, dashed, gray, line width=0.15mm] (n7)--(n2);
    \draw[-latex, dashed, gray, line width=0.15mm] (n7)--(n4);
\end{tikzpicture}
     \caption[A winning strategy closed under odd chains.]{A representation of the closure of $\overline{W}$ under a chain of length $2n+1$ in $W$. Pairs in $W$ are represented as solid black arrows and the added pairs in $\overline{W}$ are represented as dashed gray arrows.}
     \label{fig:W-odd-chains}
 \end{figure}
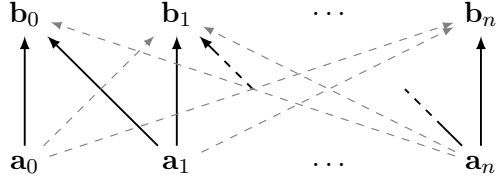

We shall show that $\overline{W}$ is also a winning strategy for the bijective $k$-pebble game over $\bfa$, $\bfb$. Conditions (\ref{strat:1}) and (\ref{strat:2}) are easily verified by using the corresponding conditions for $W$, plus the fact that isomorphisms compose. For (\ref{strat:3}), let $(\ba,\bb) \in \overline{W}$ have arity $j$ and let $\ba_0,\bb_0,\ldots,\ba_n,\bb_n$ be the odd chain witnessing this. Let $f_{i,i}$, $f_{i+1,i}$ be the bijections given by condition (\ref{strat:3}) of $W$ for $(\ba_i,\bb_i)$, $(\ba_{i+1},\bb_i)$ respectively. We claim that $f:=f_{n,n} \circ f^{-1}_{n,n-1} \circ \ldots \circ f_{1,1} \circ f_{1,0}^{-1} \circ f_{0,0}$ witnesses that (\ref{strat:3}) is satisfied for $(\ba_0,\bb_n)=(\ba,\bb)$ in $\overline{W}$. 
Recall that for a $j$-tuple $\bd$, an element $d$, and $l \in [j+1]$, we denote by $\bd^l_d$ the tuple obtained from $\bd$ by adding $d$ to $\bd$ before the $l\ith$ coordinate. 
By condition (\ref{strat:3}) for $W$, we have that for every $l \in [j]$, every $i=0,\ldots,n$, and every $\alpha_i \in A$, 
$$((\ba_i)^l_{\alpha_i},(\bb_i)^l_{f_{i,i}(\alpha_i)}) \in W,$$
and similarly for every $l \in [j]$, every $i=0,\ldots,n-1$, and every $\beta_i \in A$, 
$$((\ba_{i+1})^l_{f^{-1}_{i+1,i}(\beta_i)},(\bb_i)^l_{\beta_i}) \in W.$$
Therefore, if for $i=0,\ldots,n-1$ we choose 
$$\beta_i=f_{i,i}(\alpha_i), \quad \alpha_{i+1}=f_{i+1,i}^{-1}(\beta_i)$$
we get that for every choice of $\alpha_0 \in A$, $((\ba)^l_{\alpha_0},(\bb)^l_{f(\alpha_0)}) \in \overline{W}$ as required. Hence, $\overline{W}$ is a winning strategy.

The partition $\partp$ of $\astark \cup \bstark$ is defined by the transitive closure of $\overline{W}$. That is, $\ba \in A^j$ and $\bb \in B^j$ are in the same set of $\partp$ if $(\ba,\bb) \in \overline{W}$; any two $\ba, \ba' \in A^j$ are in the same set of $\partp$ if there exists some $\bb$ such that $(\ba,\bb), (\ba',\bb) \in \overline{W}$; and similarly for any two $\bb, \bb' \in B^j$.
In particular, there is a partial isomorphism between any two tuples $\bd, \bd'$ in the same class of $\partp$. Recall that, since the universe of $\astarkbf$ is collapsed to just $\cup_{j \leq k}A^j$ (see item (\ref{obs:1pebbles}) above), we do not need to worry about defining a partition of $\con\bfa \cup \con\bfb$. It remains to show that the partition $\partp$ is equitable.

It is easily seen that any two tuples in the same class of $\partp$ are subject to the same unary constraints. In particular, if there is a partial isomorphism between $\bd$ and $\bd'$, then they are subject to the same relations in $\bfa \cup \bfb$ (accounting for unary constraints of the type $R_\emptyset$, $R \in \sig$) and they have the same repetition structure (accounting for unary constraints of the type $T_{j,S}$ for $j \leq k$, $S \subseteq[j]$).

Now we are left to deal with the binary constraints of type $\tjii$ for $j,j' \leq k$ and $\textbf{i}\in [j]^{j'}$. It is easy to see that any two tuples of arity $j$ in the same class of $\partp$ have the same number of outgoing edges of type $\tjii$ that are incident to each other class of $\partp$, since for any $j,j' \in k$, any two isomorphic tuples $\bd, \bd' \in A^j \cup B^j$, and any $\bi \in [j]^{j'}$, $\pii \bd$ and $\pii \bd'$ will still be isomorphic (equivalently, any winning strategy for the bijective $k$-pebble game is trivially closed under removing pairs of pebbles, under adding pairs of pebbles on already pebbled elements, and under permuting pairs of pebbles).

Finally, let us deal with the incoming edges of type $\tjii$. For simplicity, let us start by considering two elements of the same class of $\partp$ belonging to different structures $\bfa$, $\bfb$. That is, let $\ba \in A^{j'}$, $\bb \in B^{j'}$ for some $j' \leq k$, $(\ba,\bb) \in \overline{W}$, and let $j \leq k$ and $\bi \in [j]^{j'}$. For each partition class $\ipartp_l$ of $\partp$, we need to show that $\ba$ and $\bb$ have the same number of incoming edges labelled  $\tjii$ from $\ipartp_l$.

Notice that, if $\{\bi\}=[j]$, then there is nothing to prove (since as we pointed out above, the winning strategy is trivially closed under adding pebbles on already pebbled elements and permuting pairs of pebbles), so we may assume that $\{\bi\} \subsetneq [j]$, and for the same reasons, we may assume that $\bi$ has no repetitions and that its entries are in increasing order.
Moreover, from item (\ref{obs:4pebbles}) above, we have that $|[j]\setminus\{\bi\}|=1$. All together then we may assume that $j'=j-1$ and $\bi=\bi(j,i)$ for some $i \in [j]$. Therefore, for any $j'$-tuple $\bd$, $\pii\bd^i_d=\bd$. 

Since $(\ba,\bb) \in \overline{W}$, there exists a bijection $f:A \to B$ 
such that, for each $a \in A$, $(\ba^i_a,\bb^i_{f(a)}) \in \overline{W}$. Hence, for each $a \in A$, $\ba^i_a$ and $\bb^i_{f(a)}$ belong to the same class of $\partp$. By the definition of $\tjii$, this means that $\ba$ and $\bb$ have the same number of incoming edges labelled  $\tjii$ from each class of $\partp$. Since $|\astark|=|\bstark|$ and $\astarkbf$,  $\bstarkbf$ are disjoint components of $\astarkbf\cup\bstarkbf$, this also implies that any partition class of $\partp$ contains the same number of elements from $\astark$ and $\bstark$.

Now we still need to check that this condition holds for any two elements of the same class of $\partp$ that belong to the same structure, say $\bfa$. But this is immediate since if $\ba,\ba' \in A^{j'}$ belong to the same class of $\partp$, then there exists $\bb \in B^{j'}$ such that $(\ba,\bb)$ and $(\ba',\bb)$ belong to $\overline{W}$, and since both $\ba$ and $\bb$ and $\ba'$ and $\bb$ have the same number of incoming edges labelled $\tjii$ from each class of $\partp$, the same must hold for $\ba$ and $\ba'$. It follows that $\partp$ is equitable.
\end{proof}

\bibliography{cleanbib}

\appendix
\section{Applying the SA method exactly} \label{app:realSAmethod}

In this section we will show that the family of relaxations $\sa^k$ considered in the present paper is 
closely interleaved with the system of relaxations obtained by applying the SA method to a natural choice of initial polytope $\polytope$.

Let $\bfx$ and $\bfa$ be $\sig$-structures. We define the polytope $\polytope=\polytope\xa$ using a system of inequalities. The variables of the system are $y_{x,a}$ for each $x\in X$ and $a\in A$. Each variable must take a value in the range $[0,1]$. We remark that by fixing some arbitrary ordering on the variables $y_{x,a}$ we can represent any assignment of the variables $y_{x,a}$ with a tuple $\by\in\mathbb{R}^n$ with $n=|X|\cdot |A|$. Therefore we shall abuse notation and use $\by_{x,a}$ to refer to the value of the assignment $\by$ corresponding to variable $y_{x,a}$.

The variables are subject to the following inequalities. 
\begin{align}
& \sum_{a\in A} y_{x,a}=1 && x\in X, \label{eq:blp2} \\
& \sum_{x\in\{\bx\}} y_{x,f(x)}\leq |\{\bx\}|-1 &&
\textnormal{for each } R\in\sig, \ \bx\in R^\bfx,\, \textnormal{ and} \label{eq:blp3}\\[-8pt]
&&& f:\{\bx\}\rightarrow A \textnormal{ with } f(\bx)\not\in \RA. \nonumber
\end{align}

Note that if $h$ is a homomorphism from $\bfx$ to $\bfa$ then the assignment setting $y_{x,h(x)}=1$ for every $x\in X$ and the rest of variables to zero is feasible.

Now let $\polytope^k$, $k\geq 1$ be the sequence of polytopes obtained by applying the SA method to $\polytope$. The following result shows that the sequence of relaxations defined by $\sa^k$ and $\polytope^k$ are interleaved.

\begin{lemma} \label{le:SAkrPk}
Let $k\geq 1$ and let $r$ be the maximum arity of a relation in $\sig$. Then
\begin{enumerate}
\item If $\polytope^k\neq\emptyset$ and $r\leq k$ then $\sa^{k}$ is feasible. \label{1SAkrPk}
\item If $\sa^{k+r-1}\xa$ is feasible then $\polytope^k\neq\emptyset$. \label{2SAkrPk}
\end{enumerate}
\end{lemma}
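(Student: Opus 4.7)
The plan is to establish, for both directions, an explicit correspondence between variables of $\sa^{\bullet}$ and variables $z_K$ of $\polytope^{\bullet}_L$, and then verify that each inequality on one side follows from inequalities on the other. For part (\ref{1SAkrPk}), I would take a feasible $\bz \in \polytope^k_L$ and define $p_V(f) := z_{K_f}$ where $K_f = \{(x, f(x)) : x \in V\}$, together with $p_{R(\bx)}(f) := p_{\{\bx\}}(f)$, which is legitimate because $|\{\bx\}| \leq r \leq k$. Conditions (\ref{eq:SA1}) and (\ref{eq:SA2}) then follow from the Sherali-Adams consequences of $\sum_a y_{x,a} = 1$: multiplying by $\Pi_{(x',a') \in K'} y_{x',a'}$ (allowed for $|K'| \leq k-1$) and linearizing yields the marginalization identity $\sum_a z_{K' \cup \{(x,a)\}} = z_{K'}$, from which (\ref{eq:SA1}) and (\ref{eq:SA2}) follow by iteration. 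Condition (\ref{eq:SA3}) is immediate, and for (\ref{eq:SA4}) one fixes $R(\bx)$ and $f:\{\bx\} \to A$ with $f(\bx) \notin \RA$, then multiplies (\ref{eq:blp3}) by $\Pi_{x \in \{\bx\}\setminus\{x_0\}} y_{x, f(x)}$ for a fixed $x_0 \in \{\bx\}$; this uses $|\{\bx\}| - 1 \leq r-1 \leq k-1$ extra factors, and after linearization and the substitutions $y^2 = y$ the inequality collapses to $-z_{K_f} \geq 0$, forcing $z_{K_f} = 0$ by non-negativity.

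For part (\ref{2SAkrPk}), given a feasible solution $p$ of $\sa^{k+r-1}(\bfx,\bfa)$, I would set $z_K := p_{\pi_1(K)}(f_K)$ when $K$ is functional (each first coordinate determines a unique second coordinate) and $z_K := 0$ otherwise. The key insight is that each $p_V$ is a probability distribution $\mu_V$ on $A^V$, and condition (\ref{eq:SA2}) makes $\{\mu_V\}_{|V| \leq k+r-1}$ a consistent family of marginals. Under this lens $z_K = \mathbb{E}_{\mu_V}[\Pi_{(x,a) \in K} y_{x,a}]$ for any $V \supseteq \pi_1(K)$. To verify an SA inequality of $\polytope^k$ produced by multiplying some defining inequality $c(\by) \geq 0$ by a monomial $m(\by) = \Pi_{I} y \cdot \Pi_J (1-y)$ with $|I \cup J| \leq k-1$, I would identify the linearized expression $[c\cdot m]_{\mathrm{lin}}$ evaluated on $\bz$ with $\mathbb{E}_{\mu_V}[c(\by) m(\by)]$ for $V = \pi_1(I \cup J)$ in the case of (\ref{eq:blp2}), or $V = \pi_1(I \cup J) \cup \{\bx\}$ in the case of (\ref{eq:blp3}). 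In the latter case $|V| \leq k + r - 1$, so $\mu_V$ is guaranteed by $\sa^{k+r-1}$; pointwise under $\mu_V$ one has $c(\by) \geq 0$ (using (\ref{eq:SA4}) applied to $R(\bx)$, which is available since $\{\bx\} \subseteq V$) and $m(\by) \in \{0,1\}$, so the expectation is non-negative and the SA inequality holds.

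The main technical hurdle is matching the syntactic Sherali-Adams linearization---with its expansions of $\Pi(1-y_j)$ terms and the substitutions $y_i^2 = y_i$---with the probabilistic expectation identity, but this becomes routine once one observes that both operations are linear and compatible with $y_i \in \{0,1\}$ under $\mu_V$. The specific bounds $r \leq k$ in part (\ref{1SAkrPk}) and $k+r-1$ in part (\ref{2SAkrPk}) arise naturally: the former is exactly what allows (\ref{eq:blp3}) to be multiplied by a degree-$(r-1)$ monomial and still stay within SA level $k$; the latter reflects that a single $\polytope^k$ inequality derived from an $r$-ary constraint may mention up to $r + (k-1)$ distinct coordinates, so the corresponding marginal $\mu_V$ may need size as large as $k+r-1$.
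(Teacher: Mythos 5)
Your proof follows essentially the same route as the paper: a direct variable-to-variable translation in both directions, with SA products verifying the conditions of $\sa^k$ in part (\ref{1SAkrPk}) and a convex-combination argument over the marginal $\mu_V$ in part (\ref{2SAkrPk}) (the paper writes it as an explicit convex combination of integral assignments $\by^g$, you phrase it as an expectation under $\mu_V$, which is the same thing). One small slip: in part (\ref{2SAkrPk}), for an SA inequality derived from (\ref{eq:blp2}) you take $V=\pi_1(I\cup J)$, but the linearized product contains variables $z_K$ whose index $K$ includes a pair $(x,a)$ for the relevant $x$, and the identity $z_K = \mathbb{E}_{\mu_V}[\Pi_{(x',a')\in K} y_{x',a'}]$ requires $\pi_1(K)\subseteq V$; you should take $V = \pi_1(I\cup J)\cup\{x\}$, still of size at most $k \le k+r-1$, so nothing breaks. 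Your handling of (\ref{eq:SA4}) in part (\ref{1SAkrPk}) via a degree-$(|\{\bx\}|-1)$ multiplier is in fact a bit more careful than the paper's phrasing, since it stays within the SA degree bound $k-1$ for all $r\le k$.
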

\begin{proof}

(\ref{1SAkrPk}). Assume that $\polytope^k\neq\emptyset$ and let $\bz$ be a feasible solution of $\polytope^k_L$. We shall construct a feasible solution of $\sa^k\xa$. First, set every variable of the form $p_V(f)$ to $z_K$ where $K=\{(x,f(x)) \mid x\in V\}$. We first observe that this assignment satisfies (\ref{eq:SA1}) and (\ref{eq:SA2}). Indeed, let $U\subseteq X$ with $|U|<k$, let $f:U\to A$, and let $I=\{(u,f(u)) : u\in U\}$. Then, multiplying the equality (\ref{eq:blp2}) with $x\in X \setminus U$ by $\Pi_{i\in I} y_i$ and linearizing we obtain equality (\ref{eq:SA2}) for $U$, $f$,  and $V=U\cup\{x\}$. In this way we can obtain all equalities in (\ref{eq:SA2}) for $|U|+1=|V|$. We note here that the rest of equalities in (\ref{eq:SA2}) along all equalities in (\ref{eq:SA1}) can be obtained as a linear combination.

Secondly, let us set the rest of variables. For every $(\bx,R)\in \con{\bfx}$ and $f:\{\bx\}\to A$, set $p_{(\bx,R)}(f)$ to be $z_K$ where $K=\{(x,f(x)) \mid x\in\{\bx\}\}$ (note that we are using implicitly the fact that $r\leq k$). Then, (\ref{eq:SA3}) follows directly from (\ref{eq:SA2}). Finally, it only remains to show that (\ref{eq:SA4}) is also satisfied. Indeed, for every $f(\bx)\not\in \RA$ we obtain
equality $p_{(\bx,R)}(f) = 0$ if we multiply (\ref{eq:blp3}) by the term $\Pi_{i\in K}$ and linearize.

(\ref{2SAkrPk}). Assume that $\sa^{k+r-1}\xa$ is feasible. We construct a feasible solution $\bz$ of $\polytope^k_L$ as follows. For every $K \subseteq X \times A$ which satisfies $K=\{(x,f(x)) \mid x\in U\}$ for some $U\subseteq X$ with $|U|\leq k$ and $f:U \to A$, we set $z_K:=p_U(f)$. Otherwise, we set $z_K$ to zero. 

 Let us show that this assignment satisfies all inequalities in $\polytope^k_L$. Let 
 \begin{equation}\label{eq:cyd}
 \bc^T \bz\leq \bd
 \end{equation}
 be any inequality defining $\polytope^k_L$. Since (\ref{eq:cyd}) is obtained by multiplying an inequality which contains at most $r$ variables by a term of at most $k-1$ variables, there 
exists a set $V\subseteq X$ with $|V|\leq r+k-1$ such that for every variable $z_K$ appearing in (\ref{eq:cyd}), $V$ satisfies
$K\subseteq V\times A$. Note that, by (\ref{eq:SA1}), variables $p_V(g)$, $g:V \to A$ define a probability distribution. For every $g:V \to A$ in the support of this distribution, consider the assignment $\by^g$ that sets $\by^g_{x,a}=1$ if $x\in V$ and $b=g(x)$ and $y_{x,b}=0$ otherwise. 

Inequality (\ref{eq:cyd}) has been obtained by multiplying an inequality from (\ref{eq:blp2}) or (\ref{eq:blp3}) by a term and linearizing. We claim that in both cases, the inequality that has generated (\ref{eq:cyd}) is satisfied by $\by^g$. If the inequality generating (\ref{eq:cyd}) is $\sum_{a\in A} y_{x,a}=1$ for some $x\in X$ this follows simply from the fact that $x\in V$. Assume now that (\ref{eq:cyd}) has been generated by inequality $\sum_{x\in\{\bx\}} y_{x,f(x)}\leq |\{\bx\}|-1$. In this case
note that $\{\bx\}\subseteq V$ and then the claim follows from  (\ref{eq:SA3}) and (\ref{eq:SA4}). This finalizes the proof of the claim.

Consequently, since $\by^g$ is integral it follows that the assignment $\bz^g$ defined as $\bz^g_K=\Pi_{i\in K} \by^g_i$ satisfies (\ref{eq:cyd}). Finally, note that if we set $\alpha^g=p_V(g)$, then for every $K\subseteq V\times A$, $\bz_K$ is precisely
given by the convex combination $\sum_g \alpha^g \bz^g_K$.
\end{proof}

\end{document}